\let\chapter\section
\renewcommand*\env@matrix[1][c]{\hskip -\arraycolsep
    \let\@ifnextchar\new@ifnextchar
    \array{*\c@MaxMatrixCols #1}}
\newcommand{\BlackBox}{\rule{1.5ex}{1.5ex}}  
\newenvironment{proof}{\par\noindent{\bf Proof\ }}{\hfill\BlackBox\\[2mm]}
\newtheorem{theorem}{Theorem}
\newtheorem{lemma}[theorem]{Lemma}
\newtheorem{definition}[theorem]{Definition}
\def\comp{A^\star} 
\def\obs{A} 
\def\var{X} 
\def\opt{X^\star} 
\def\proj{\mathcal{P}}
\def\real{{\mathbb{R}}}
\def\loss{\ell\textit{oss }}
\def\dist{\mathcal{D}}
\def\uni{\dist_{\textrm{uni}}}
\def\power{\dist_{\textrm{pow}}}
\def\bb{{\bf b}}
\def\be{{\bf e}}
\def\bu{{\bf u}}
\def\bw{{\bf w}}
\def\bh{{\bf h}}
\definecolor{darkgray}{gray}{0.2}
\newcommand\sign{\operatorname{sign}} 
\newcommand\ind[1]{\mathbf{1}\left[ #1 \right]} 
\newcommand\cycle{\sigma} 
\newcommand\scycle{\tilde{\sigma}} 
\newcommand\cycleset[2]{\ensuremath{C_{#1}{\left( #2 \right)}}}  
\newcommand\scycleset[2]{\ensuremath{SC_{#1}{\left( #2 \right)}}} 
\renewcommand\path{\pi} 
\newcommand\pathset[2]{\ensuremath{\mathcal{P}_{#1}\left( #2 \right)}} 
\newcommand\moi[1]{\mu\left( #1 \right)} 
\newcommand\trianglemoi[1]{\ensuremath{\mu_{\mathrm{tri}}{\left( #1 \right)}}} 
\newcommand\orderkmoisimple[2]{\ensuremath{\mu^s_{#1}{\left( #2 \right)}}} 
\newcommand\orderkmoi[2]{\ensuremath{\mu_{#1}{\left( #2 \right)}}} 
\newcommand\Gplus[2]{\ensuremath{\obs^{+(#1,#2)}}} 
\newcommand\Gminus[2]{\ensuremath{\obs^{-(#1,#2)}}} 
\begin{document}

\title{Prediction and Clustering in Signed Networks:\\
A Local to Global Perspective}

\author[1]{Kai-Yang Chiang}
\author[1]{Cho-Jui Hsieh}
\author[1]{Nagarajan Natarajan}
\author[2]{Ambuj Tewari}
\author[1]{Inderjit S. Dhillon}
\affil[1]{Department of Computer Science, University of Texas, Austin}
\affil[2]{Department of Statistics, University of Michigan, Ann Arbor}

\renewcommand\Authands{ and }
\date{}

\maketitle

\begin{abstract}
The study of social networks is a burgeoning research area. However, most existing work deals with networks 
that simply encode whether
relationships exist or not. In contrast, relationships in \emph{signed} networks can be positive (``like", ``trust")
or negative (``dislike", ``distrust"). The theory of social balance shows that 
signed networks tend to conform to some local patterns that, in turn, induce
certain global characteristics. In this paper, we exploit both local as well as global aspects of social balance theory for two fundamental problems in the
analysis of signed networks: sign prediction and clustering.
Motivated by local patterns of social balance,
we first propose two families of sign prediction methods: measures of social
imbalance (MOIs), and supervised learning using high order cycles (HOCs). These methods predict signs
of edges based on triangles and $\ell$-cycles for relatively small values of $\ell$. Interestingly, 
by examining measures of social imbalance, we show that
the classic Katz measure, which is used widely in unsigned link prediction, actually has a balance theoretic interpretation when applied to
signed networks. Furthermore, motivated by the global
structure of balanced networks, we propose an effective low rank modeling approach for both 
sign prediction and clustering.
For the low rank modeling approach, we provide theoretical performance guarantees via convex relaxations,
scale it up to large problem sizes using a matrix factorization based algorithm, and provide extensive experimental validation including
comparisons with local approaches.
Our experimental results indicate
that, by adopting a more global viewpoint of balance structure, we get significant performance and computational gains in prediction and clustering
tasks on signed networks. Our work therefore highlights the usefulness of the global aspect of balance theory for the analysis of signed networks.

\end{abstract}

\section{Introduction}
\label{sec:introduction}
The study of networks is a highly interdisciplinary field that draws ideas and inspiration from
multiple disciplines including biology, computer science, economics, mathematics, physics, sociology, and statistics. 
In particular, \emph{social network analysis} deals with networks that form between people. With roots in sociology, social
network analysis has evolved considerably. Recently, a major force in its evolution has been the growing importance of online social
networks that were themselves enabled by the Internet and the World Wide Web. A natural result of the proliferation of online social networks has been the increased involvement
in social network analysis of people from computer science, data mining, information studies, and machine learning.

Traditionally, online social networks have been represented as graphs, with
nodes representing entities, and edges representing relationships between entities.  
However, when a network has like/dislike, love/hate, respect/disrespect, or trust/distrust
relatiships, such a representation is inadequate since it fails to encode the \emph{sign}
of a relationship.
Recently, online networks where two opposite kinds of relationships can occur have become common.
For example, online review websites such as Epinions allow users to either like or dislike
other people's reviews. Such networks can be modeled as {\it signed networks}, where
edge weights can be either greater or less than $0$, representing positive or negative relationships 
respectively. The development of theory and algorithms for signed networks is an important research
task that cannot be succesfully carried out by merely extending the theory and algorithms for unsigned networks
in a straightforward way. First, many notions and algorithms for unsigned networks break down
when edge weights are allowed to be negative. 
Second, there are some interesting theories that are applicable only to signed networks.

Perhaps the most basic theory that is applicable to signed social networks but does not appear in the study
of unsigned networks is that of {\it social balance} 
\citep{Harary53a, Cartwright56a}. The theory of social balance states that relationships in friend-enemy networks tend to
follow patterns such as ``an enemy of my friend is my enemy" and ``an enemy of my enemy is my friend".
A notion called weak balance \citep{Davis67a} further generalizes social balance
by arguing that in many cases an enemy of one's enemy can indeed act as an enemy.
Both balance and weak balance are defined in terms of {\it local} structure at the level of triangles.
Interestingly, the local structure dictated by balance theory also leads to a special 
{\it global} structure of signed networks.  We review the connection between local and global structure of balance signed
networks in Section \ref{sec:preliminaries}.

Social balance has been shown to be useful for prediction and clustering tasks for signed networks.
For instance, consider the {\it sign prediction problem} where the task is to predict the (unknown) sign
of the relationship between two given entities. Ideas derived from local balance
of signed networks can be succesfully used to yield algorithms for sign prediction~\citep{Leskovec10a, Chiang11a}.
In addition, the {\it clustering problem} of partitioning the nodes of a graph into tightly knit clusters
turns out to be intimately related to weak balance theory. We will see how a clustering into mutually antagonistic groups naturally
emerges from weak balance theory (see Theorem \ref{thm:weak_balance_theory} for more details).

The goal of this paper is to develop algorithms for prediction and clustering in signed networks
by adopting the local to global perspective that is already present in the theory of social balance.
What we find particularly exciting is that the local-global interplay that occurs in the {\it theory}
of social balance also occurs in our {\it algorithms}. We hope to convince the reader that, even though the local and global
definitions of social balance are theoretically equivalent, algorithmic and performance gains occur when a more global approach in algorithm design is adopted. 

We mentioned above that a key challenge in designing algorithms for signed networks is that
the existing algorithms for unsigned networks may not be easily adapted to the signed case.
For example, it has been shown that spectral clustering algorithms
for unsigned networks cannot, in general, be directly extended to signed networks~\citep{Chiang12a}.  
However, we do discover interesting connections between methods meant for unsigned networks and those meant for signed networks.
For instance, in the context of sign prediction, we see that
that the \emph{Katz measure}, which is widely used for unsigned link prediction, actually has a justification
as a sign prediction method in terms of balance theory. Similarly, methods based on \emph{low rank matrix completion}
can be motivated straight out of the global viewpoint of balance theory. Thus, we see that existing methods
for unsigned network analysis can reappear in signed network analysis albeit due to different reasons.

Here are the key contributions we make in this paper:
\begin{itemize}
  \item We provide a local to global perspective of the sign prediction problem,
  and demonstrate that our global methods are superior on synthetic as well as real-world data sets.
  \item In particular, we propose three sign prediction methods based on (i) measures of social imbalance (MOIs), 
  (ii) supervised learning using higher-order cycles (HOCs), and (iii) low-rank modeling. 
  The methods using higher-order cycles are more global than existing methods that just use triangles, 
  while the low-rank modeling approach can be viewed as a fully global approach motivated by global implications of structural balance.
  \item We show that the Katz measure used for unsigned networks can be interpreted from a social balance perspective: this immediately yields a sign prediction method.
  \item We provide theoretical guarantees for sign prediction and signed network clustering of balanced signed networks,
  under mild conditions on their structure.
\end{itemize}
Readers can find the preliminary versions of this paper in \citep{Chiang11a} and \citep{Hsieh12a}.  
The sign prediction methods based on paths and
cycles were first proposed in \citep{Chiang11a}, and low-rank modeling in \citep{Hsieh12a}.
In this paper, we provide a more detailed and unifying treatment of our previous research; in particular, we provide a
local-to-global perspective of the proposed methods, and a more comprehensive theoretical and experimental treatment.

The organization of this paper is guided by the local versus global aspects of social balance theory.
We first review some basics of signed networks and balance theory in Section \ref{sec:preliminaries}.
We recall notions such as (strong) balance and weak balance while emphasizing the connections between
local and global structures of balanced signed networks.  We will see that local balance structure is revealed by
triads (triangles) and cycles, while global balance structure manifests itself as clusterability of the nodes in the network. Based on these
observations, in Section \ref{sec:triads}, we start by showing how to use triads for
sign prediction. In Section \ref{sec:longer_cycles}, we go beyond
triangles and explore prediction methods based on cycles of length up to $\ell\ge 3$.  Under this broader
view, we can exploit information that is less localized around an edge whose sign we have to predict. The hope is that going global should give us
higher predictive accuracy.
We propose two classes of methods: those based on measures of social imbalance (MOIs) and
those that use supervised learning techniques to exploit existence of balance at the level of high order cycles (HOCs).
In Section \ref{sec:low_rank_model}, we present a completely non-local approach based on the global structure of balanced signed networks.
We show that such networks have 
low rank adjacency matrices, so that we can solve the sign prediction problem by reducing it to a low rank matrix completion problem.  Furthermore,
the low rank modeling approach can also be used for the clustering of signed networks.  In Section
\ref{sec:experiment}, we conduct several experiments, which show that global methods
(based on low rank models) generally have better performance than local methods (based on triads and cycles).
Finally, we discuss related work in Section \ref{sec:related}, and state our conclusions
in Section \ref{sec:conclusion}.

\section{Signed Networks and Social Balance}
\label{sec:preliminaries}
In this section, we set up our notation for signed networks, review the basic notions of balance theory,
and describe the two main tasks (sign prediction and clustering) that we address in this paper.

\subsection{Categories of Signed Networks}

The most basic kind of a signed network is a \emph{homogeneous} signed network.
Formally, a homogeneous signed network is represented as a graph with
the adjacency matrix $\obs \in \{-1,0,1\}^{n \times n}$, which denotes
relationships between entities as follows:
\begin{align*}
  \obs_{ij} = \begin{cases}
  1, &\text{if $i$ \& $j$ have positive relationship}, \\
  -1, &\text{if $i$ \& $j$ have negative relationship}, \\
  0, &\text{if relationship between $i$ \& $j$ is unknown (or missing)}.
  \end{cases}
\end{align*}
We should note that we treat a zero entry in $\obs$ as an \emph{unknown} 
relationship instead of no relationship, since we expect any two entities
have some (hidden) positive or negative attitude toward each other
even if the relationship itself might not be observed.  From an alternative
point of view, we can assume there exists an underlying {\em complete} signed network
$\comp$, which contains relationship information between all pairs of entities.  
However, we can only observe some partial entries of $\comp$, denoted by $\Omega$.
Thus, the partially observed network $\obs$ can be represented as:
\begin{align*}
  \obs_{ij} = \begin{cases}
  \comp_{ij}, &\text{if $(i,j) \in \Omega$}, \\
  0, &\text{otherwise}.
  \end{cases}
\end{align*}

A signed network can also be \emph{heterogeneous}.  In a heterogeneous signed network,
there can be more than one kind of entity, and relationships between
two, same or different, entities can be positive and negative.
For example, in the online video sharing website Youtube, 
there are two kinds of entities -- users and videos, and
every user can either \emph{like} or \emph{dislike} a video.  Therefore, the
Youtube network can be seen as a bipartite signed network, in which
all positive and negative links are between users and videos.

In this paper, we will focus our attention on homogeneous signed networks,
i.e. networks where relationships are between the same kind of entities.
For heterogeneous signed networks, it is possible to do some preprocessing
to reduce them to homogeneous networks.  For instance, in a Youtube network,
we could possibly infer the relationships between users based on their
taste of videos.  These preprocessing tasks, however, are not trivial.

In the remaining part of the paper, we will use the term ``network"
as an abbreviation for ``signed network", unless we explicitly specify otherwise.  
In addition, we will now mainly focus on undirected signed graphs
(i.e. $\obs$ is symmetric) unless we specify otherwise.  For a directed signed network,
a simple but sub-optimal way to apply our methods is by considering the symmetric 
network, $\sign(\obs+\obs^T)$.

\subsection{Social Balance}
A key idea behind many methods that estimate a high dimensional complex object from limited data is
the exploitation of \emph{structure}. In the case of signed networks, researchers have identified various kinds of
non-trivial structure \citep{Harary53a, Davis67a}.
In particular, one influential theory, known as social balance theory, states that relationships between 
entities tend to be {\it balanced}.
Formally, we say a triad (or a triangle) is {\it balanced} if it contains an even number of negative edges. This is in agreement
with beliefs such as ``a friend of my friend is more likely to be my friend" and ``an enemy of my friend is more
likely to be my enemy". The configurations of balanced and unbalanced triads are shown in Table \ref{tab:balanced_triad}.

\begin{table}
\centering
\begin{tabular}{c|c}
\footnotesize{Balanced triads} & \footnotesize{Unbalanced triads} \\ \hline
\tikz[scale=.7]{\draw (-.5,0) node[left] {$a$} -- node[left] {$+$} (0,.866) node[above] {$b$} -- node[right] {$+$} (.5,0) node[right] {$c$} -- node[below] {$+$} (-.5,0);}
\hspace{0.5cm}
\tikz[scale=.7]{\draw (-.5,0) node[left] {$a$} -- node[left] {$-$} (0,.866) node[above] {$b$} -- node[right] {$-$} (.5,0) node[right] {$c$} -- node[below] {$+$} (-.5,0);}

\hspace{0.25cm}&\hspace{0.25cm}
\tikz[scale=.7]{\draw (-.5,0) node[left] {$a$} -- node[left] {$-$} (0,.866) node[above] {$b$} -- node[right] {$-$} (.5,0) node[right] {$c$} -- node[below] {$-$} (-.5,0);}
\hspace{0.5cm}
\tikz[scale=.7]{\draw (-.5,0) node[left] {$a$} -- node[left] {$+$} (0,.866) node[above] {$b$} -- node[right] {$+$} (.5,0) node[right] {$c$} -- node[below] {$-$} (-.5,0);}
\\
\end{tabular}
\caption{Configurations of balanced and unbalanced triads.}
\label{tab:balanced_triad}
\end{table}

Though social balance specifies the patterns of triads, 
one can generalize the balance notion to general $\ell$-cycles.       
An $\ell$-cycle is defined as a simple path from 
some node to itself with length equal to $\ell$.
The following definition extends social balance to general $\ell$-cycles:
\begin{definition}[Balanced $\ell$-cycles]
  \label{def:balanced_cycle}
  An $\ell$-cycle is balanced iff it contains an even number of negative edges.
\end{definition}
Table \ref{tab:balanced_cycle} shows some instances of balanced and unbalanced
cycles based on the above definition.
To define balance for general networks, we first define the notion of balance for \emph{complete} networks:

\begin{table}
\centering
\begin{tabular}{c|c}
\footnotesize{Balanced cycles} & \footnotesize{Unbalanced cycles} \\ \hline
\tikz[scale=1.2]{\draw (-.3,0) node[left] {$a$} -- 
  node[left] {$+$} (-.3,.6) node[left] {$b$} -- 
  node[above] {$+$} (.3,.6) node[right] {$c$} -- 
  node[right] {$+$} (.3,0) node[right] {$d$} -- 
  node[below] {$-$} (-.3,0);}
\hspace{0.25cm}
\tikz[scale=1.2]{\draw (-.2,0) node[left] {$a$} -- 
  node[left] {$+$} (-.5,.5) node[left] {$b$} -- 
  node[left] {$-$} (0,.9) node[above] {$c$} -- 
  node[right] {$+$} (.5,.5) node[right] {$d$} -- 
  node[right] {$-$} (.2,0) node[right] {$e$} -- 
  node[below] {$+$} (-.2,0);}
\hspace{0.2cm}&
\hspace{0.2cm}
\tikz[scale=1.2]{\draw (-.3,0) node[left] {$a$} -- 
  node[left] {$+$} (-.3,.6) node[left] {$b$} -- 
  node[above] {$+$} (.3,.6) node[right] {$c$} -- 
  node[right] {$+$} (.3,0) node[right] {$d$} -- 
  node[below] {$-$} (-.3,0);}
  \hspace{0.25cm}
\tikz[scale=1.2]{\draw (-.2,0) node[left] {$a$} -- 
  node[left] {$+$} (-.5,.5) node[left] {$b$} -- 
  node[left] {$+$} (0,.9) node[above] {$c$} -- 
  node[right] {$+$} (.5,.5) node[right] {$d$} -- 
  node[right] {$+$} (.2,0) node[right] {$e$} -- 
  node[below] {$-$} (-.2,0);}
\end{tabular}
\caption{Some instances of balanced and unbalanced cycles.}
\label{tab:balanced_cycle}
\end{table}

\begin{definition}[Balanced complete networks]
  \label{def:balance_complete_network}
  A complete network is balanced iff all triads in the network are balanced.
\end{definition}

Of course, most real networks are not complete.  In other words, 
we expect that there are always some missing entries in the adjacency matrix. That is,
there exist $i,j$ such that $\obs_{ij} = 0$.
To define balance for general networks, we adopt the perspective of a missing value estimation
problem as follows:
\begin{definition}[Balanced networks]
  A (possibly incomplete) network is balanced iff it is possible to assign $\pm1$ signs
  to all missing entries in the adjacency matrix, such that the
  resulting complete network is balanced.
\end{definition}

So far, the notion of balance is defined by specifying patterns of {\it local} structures
in networks (i.e. the patterns of triads).  The following result from balance theory
shows that balanced networks actually have a nice 
{\it global} structure.
\begin{theorem}[Balance theory, \citep{Cartwright56a}]
  \label{thm:balance_theory}
  A network is balanced iff either (i) all edges are positive, 
  or (ii) we can divide nodes into two groups (or clusters), such that all edges within clusters
  are positive and all edges between clusters are negative.
\end{theorem}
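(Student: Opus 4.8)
The plan is to prove both directions of the equivalence. The direction ``(i) or (ii) $\implies$ balanced'' is the easy one: if all edges are positive, every triad has zero negative edges, which is even, so the complete network is balanced by Definition~\ref{def:balance_complete_network}; if instead the nodes split into two clusters with positive edges inside and negative edges across, then any triad either lies entirely inside one cluster (three positive edges) or has two vertices in one cluster and one in the other (one positive edge inside, two negative edges crossing), and in both cases the number of negative edges is even. This handles the complete case, and for a general (incomplete) balanced network one observes that structures (i) and (ii) already prescribe a sign for \emph{every} pair, hence give the required $\pm1$ completion.

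For the harder direction, ``balanced $\implies$ (i) or (ii)'', it suffices by the definition of balance for general networks to treat a balanced \emph{complete} network, since a balanced incomplete network has, by definition, a balanced complete extension, and the cluster structure of that extension restricts to the original. So assume $\comp$ is a complete network in which every triad has an even number of negative edges. First I would dispose of case (i): if every edge is positive we are done. Otherwise fix an edge with a negative sign, say between nodes $u$ and $v$. Define $G_u = \{u\} \cup \{w : \comp_{uw} = 1\}$ and $G_v = \{v\} \cup \{w : \comp_{uw} = -1\}$. Since the network is complete, every node $w \neq u$ lands in exactly one of $G_u, G_v$ according to the sign of $\comp_{uw}$, so $\{G_u, G_v\}$ is a partition of the node set, and $v \in G_v$ because $\comp_{uv} = -1$.

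The main work — and the step I expect to be the real obstacle — is verifying that this partition has the claimed property: all within-group edges positive, all across-group edges negative. This is a case analysis driven entirely by the balanced-triad condition applied to triangles through $u$. For two nodes $a, b \in G_u$ (both positively linked to $u$): the triad $\{u,a,b\}$ has two positive edges $ua, ub$, so to keep the count of negative edges even we must have $\comp_{ab} = 1$. For $a \in G_u$, $b \in G_v$ (so $\comp_{ua} = 1$, $\comp_{ub} = -1$): the triad $\{u,a,b\}$ has one negative edge already, forcing $\comp_{ab} = -1$ so that the total is two. For $a, b \in G_v \setminus \{u\}$ (so $\comp_{ua} = \comp_{ub} = -1$): the triad $\{u,a,b\}$ has two negative edges among $ua, ub$, forcing $\comp_{ab} = 1$. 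One must also check edges incident to $u$ itself, but those hold by the very definition of $G_u$ and $G_v$. This exhausts all pairs, so the partition realizes structure (ii), completing the proof. The only subtlety to watch is the degenerate possibility that one of $G_u$, $G_v$ is a singleton — but this causes no trouble, since a group with one node has no internal edges to check, and the theorem's statement allows such a (trivial) bipartition.
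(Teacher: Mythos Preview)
Your proof is correct and follows the classical argument. However, note that the paper does \emph{not} supply its own proof of this theorem: it is stated as a citation to \citet{Cartwright56a} and used as background. There is therefore no ``paper's proof'' to compare against. Your approach --- reduce to the complete case, anchor the partition at a fixed node $u$ by the sign of edges to $u$, and verify the within/between sign pattern by applying the balanced-triad condition to triangles through $u$ --- is exactly the standard structure-theorem argument due to Harary and Cartwright, so there is no daylight between what you wrote and what one finds in the original source. Two cosmetic remarks: writing $G_v = \{v\} \cup \{w : \comp_{uw} = -1\}$ is redundant since $v$ already satisfies $\comp_{uv} = -1$, and the qualifier ``$\setminus\{u\}$'' in the last case is unnecessary since $u \notin G_v$; neither affects correctness.
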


Now we can revisit the definition of balanced $\ell$-cycles 
(Definition \ref{def:balanced_cycle}).  Under that definition, we can actually
verify if a network is balanced or not by looking at all cycles in the network
due to the following well-known theorem:
\begin{theorem}
\label{thm:balance_cycle}
  A network is balanced iff all its $\ell$-cycles are balanced.
\end{theorem}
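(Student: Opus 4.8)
The plan is to prove both implications by passing through the global (two-clustering) characterization of balance in Theorem~\ref{thm:balance_theory}, glued together with an elementary spanning-tree argument. It is convenient to record edge signs multiplicatively: for a cycle $\pi=(i_1,i_2,\dots,i_\ell,i_1)$ define its \emph{parity} to be $\prod_{k=1}^{\ell}\obs_{i_k i_{k+1}}\in\{-1,+1\}$ (indices taken mod $\ell$); then ``$\pi$ is balanced'' in the sense of Definition~\ref{def:balanced_cycle} is exactly the statement that the parity of $\pi$ equals $+1$, and likewise a triad is balanced iff the product of its three edge signs is $+1$.

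For the forward direction, suppose the network is balanced. By Theorem~\ref{thm:balance_theory} there is a labeling $s:\{1,\dots,n\}\to\{-1,+1\}$ — the cluster assignment, taken to be identically $+1$ in the all-positive case — such that every observed edge satisfies $\obs_{ij}=s_i s_j$. For any $\ell$-cycle $(i_1,\dots,i_\ell,i_1)$ in the network, its parity is $\prod_{k=1}^{\ell}s_{i_k}s_{i_{k+1}}$, and since each vertex label $s_{i_k}$ occurs in exactly two factors this product equals $\prod_{k=1}^{\ell}s_{i_k}^2=1$. Hence every $\ell$-cycle, for every $\ell\ge 3$, is balanced.

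For the converse, suppose every $\ell$-cycle of the network is balanced. Fix a spanning forest; in each of its trees choose a root $r$, set $s_r=+1$, and for every other vertex $v$ in that tree let $s_v$ be the parity of the unique tree path from $r$ to $v$. Every tree edge $(u,v)$ then satisfies $\obs_{uv}=s_u s_v$ by construction. For a non-tree edge $(u,v)$, the tree path from $u$ to $v$ together with the edge $(u,v)$ is a simple cycle $C$ in the network; its parity equals (parity of the $u$-to-$v$ tree path) $\cdot\,\obs_{uv}=s_u s_v\obs_{uv}$, and since $C$ is balanced by hypothesis this product is $+1$, forcing $\obs_{uv}=s_u s_v$. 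Thus every observed edge obeys $\obs_{uv}=s_u s_v$. Merging the per-tree labelings into one global $s$ (there are no observed edges between distinct components, so this is unconstrained) and completing the network by $\comp_{ij}:=s_i s_j$ on the missing entries yields a complete network all of whose triads have edge-sign product $s_i^2 s_j^2 s_k^2=1$; hence $\comp$ is balanced, and therefore so is the original network.

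The routine but genuinely load-bearing steps — which I would spell out — all live in the converse: (i) the parity of the tree path from $u$ to $v$ is indeed $s_u s_v$, obtained by cancelling the common prefix of the root-to-$u$ and root-to-$v$ paths at their least common ancestor; (ii) the path-plus-edge construction does produce a \emph{simple} cycle of length $\ge 3$ — the length is at least $3$ since a $u$-to-$v$ tree path of length $1$ would make $(u,v)$ a tree edge, and a length-$2$ ``cycle'' would require a repeated edge, impossible in a simple graph — so that Definition~\ref{def:balanced_cycle} genuinely applies to it; and (iii) disconnected or even acyclic networks are handled automatically, since in the acyclic case the hypothesis is vacuous while the spanning-tree labeling still delivers a balanced completion. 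The only place where real care is needed, and thus the main obstacle, is making this spanning-tree/fundamental-cycle bookkeeping airtight so that the hypothesis on \emph{all} cycles is in fact leveraged through just the $n-c$ fundamental cycles (with $c$ the number of components); the forward direction, by contrast, is immediate once Theorem~\ref{thm:balance_theory} is invoked.
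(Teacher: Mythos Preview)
Your proof is correct and follows essentially the same strategy as the paper: both use Theorem~\ref{thm:balance_theory} for the forward direction, and for the converse both build a two-coloring along a spanning tree (the paper uses a DFS tree and group labels $X$/$Y$, you use an arbitrary spanning forest and multiplicative labels $\pm 1$) and then invoke the cycle hypothesis on fundamental cycles to check consistency on non-tree edges. Your multiplicative-parity formulation and explicit treatment of the LCA cancellation, simple-cycle length, and disconnected case are more detailed than the paper's somewhat informal DFS argument, but the underlying idea is the same.
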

\begin{proof}
  First we prove the forward direction. 
  If we are given a balanced network, then we can
  divide the nodes into two antagonistic groups $X, Y$ as Theorem \ref{thm:balance_theory}
  shows (note that one of $X, Y$ could be empty).
  Without loss of generality, given any $\ell$-cycle, we can traverse this cycle from
  an arbitrary node $i \in X$, and we will switch the group when passing a negative edge.
  After $\ell$ steps we will stop at node $i \in X$ again; therefore, in these $\ell$
  steps we can only pass an even number of negative edges to ensure we stop at group $X$.
  Thus, any $\ell$-cycle in this balanced network is balanced.

  To prove the other direction, we give a procedure that partitions the network into two antagonistic
  groups (say $X$ and $Y$) if all $\ell$-cycles in the network are balanced.  Without loss of generality we can
  assume the network has only one connected component.  We first pick an arbitrary node $i$ and
  mark it in group $X$, and try to mark the other nodes by performing a depth first search (DFS) from $i$.  When we 
  traverse an edge $(u,v)$, we mark $v$ as belonging to the same group as $u$ if $(u,v)$ is positive,
  otherwise we mark $v$ as belonging to the opposite group as $u$.  Since all cycles in the network are
  balanced, a node marked as $X$ will not be marked $Y$ later on when traversing cycles.
  Therefore, after all nodes are marked, we find two groups $X, Y$ such that all edges within 
  $X$ or $Y$ are positive and all edges between $X$ and $Y$ are negative.  By Theorem 
  \ref{thm:balance_theory}, we conclude that this network is balanced.

\end{proof}


\subsection{Weak Balance}
One possible weakness of social balance theory is that the defined balance 
relationships might be too strict.  In particular, researchers have argued that
the degree of imbalance in the triad with two positive edges (the fourth triad in
Table \ref{tab:balanced_triad}) is much stronger than that in the triad with all negative
edges (the third triad in Table \ref{tab:balanced_triad}). Thus, we can say that
the first three triads in Table \ref{tab:balanced_triad} are \emph{weakly balanced}.
Based on this observation,
by also allowing triads with all negative edges,
a weaker version of balance notion can be defined \citep{Davis67a}.

As in the case of (strong) social balance, 
we start with a definition of weak balance
in a complete network:
\begin{definition}[Weakly balanced complete networks]
  \label{def:weak_balance_complete_network}
  A complete network is weakly balanced iff all triads in the network are weakly balanced.
\end{definition}
The definition for general incomplete networks can be obtained
by adopting the perspective of a missing value 
estimation problem:
\begin{definition}[Weakly balanced networks]
  \label{def:weak_balance_network}
  A (possibly incomplete) network is weakly balanced iff it is possible to obtain
  a weakly balanced complete network by filling the missing edges in its
  adjacency matrix.
\end{definition}

Though Definitions \ref{def:weak_balance_complete_network} and 
\ref{def:weak_balance_network} define weak balance in terms of patterns of local triads,
one can show that weakly balanced networks have a special global structure, analogous to 
Theorem \ref{thm:balance_theory}:
\begin{theorem}[Weak balance theory, \citet{Davis67a}]
  \label{thm:weak_balance_theory}
  A complete network is weakly balanced iff either (i) all of its edges are positive,
  or (ii) we can divide nodes into $k$ clusters, such that all the edges within
  clusters are positive and all the edges between clusters are negative.
\end{theorem}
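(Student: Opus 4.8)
The plan is to prove both directions, with essentially all of the work in the ``only if'' direction. The ``if'' direction is a routine case check: assume the nodes are partitioned into $k$ clusters with all edges inside clusters positive and all edges between clusters negative. Pick any triad; up to relabeling, its three (distinct) nodes are distributed among the clusters in exactly one of three ways — all in one cluster, two in one cluster and one in another, or all three in distinct clusters — yielding respectively $0$, $2$, or $3$ negative edges. None of these is the forbidden configuration of exactly one negative edge (equivalently, exactly two positive edges), so every triad is weakly balanced and the network is weakly balanced by Definition \ref{def:weak_balance_complete_network}. Case (i) of the theorem is the special case $k = 1$.

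For the ``only if'' direction, I would show that the relation ``$i$ and $j$ are joined by a positive edge'' (with $i \sim i$ declared by convention) is an equivalence relation on the vertex set. Reflexivity holds by convention and symmetry holds because $\obs$ is symmetric; the only substantive point is transitivity. Suppose $\obs_{ij} = 1$ and $\obs_{jk} = 1$ with $i,j,k$ distinct. Since the network is complete, $\obs_{ik} \in \{+1,-1\}$. If $\obs_{ik} = -1$, the triad $\{i,j,k\}$ has exactly two positive edges and one negative edge, which is precisely the configuration excluded by weak balance, contradicting Definition \ref{def:weak_balance_complete_network}. Hence $\obs_{ik} = 1$, and the relation is transitive.

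Given that $\sim$ is an equivalence relation, let $C_1,\dots,C_k$ be its equivalence classes and take these as the clusters. By construction every edge inside a class is positive. Conversely, if $i \in C_a$ and $j \in C_b$ with $a \neq b$ and $\obs_{ij} = 1$, then $i \sim j$, forcing $C_a = C_b$, a contradiction; hence every edge between distinct classes is negative. If $k = 1$ we are in case (i) (all edges positive), and if $k \geq 2$ we are in case (ii). This completes the argument.

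The main obstacle, such as it is, is recognizing the right relation to work with: weak balance on a complete signed graph is exactly the assertion that positive-adjacency is transitive, so once the relation is identified the proof is short. This is the natural analogue of the proof of Theorem \ref{thm:balance_theory}, where one instead fixes a base vertex and splits the graph into its positive neighborhood and the complement; here the cleaner equivalence-class formulation is available precisely because all-negative triangles are now permitted. One should still confirm that no ``triad'' with a repeated vertex needs special handling, but the definition of a triad already requires three distinct nodes, and the empty-cluster subtlety that appears in Theorem \ref{thm:balance_theory} does not arise here since equivalence classes are nonempty by definition.
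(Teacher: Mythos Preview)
Your proof is correct. Note, however, that the paper does not actually prove Theorem~\ref{thm:weak_balance_theory}; it is stated with attribution to \citet{Davis67a} and used without proof. So there is no ``paper's own proof'' to compare against. That said, your argument is precisely the classical one: weak balance on a complete signed graph is equivalent to transitivity of the positive-edge relation, and the equivalence classes give the clusters. The case analysis for the ``if'' direction and the transitivity argument for the ``only if'' direction are both clean and complete.
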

Thus, we say a network is $k$-weakly balanced if its nodes can be divided into $k$
clusters as specified in Theorem \ref{thm:weak_balance_theory}.  Note that 
when $k = 2$, this theorem simply reduces to Theorem 
\ref{thm:balance_theory}.

\subsection{Key Problems in Signed Networks}

As in classical social network analysis, 
we are interested in what we can infer given a signed network topology.  
In particular, we will focus on two core problems --- sign prediction and clustering.

In the \emph{sign prediction} problem, we intend to infer the unknown relationship between a
pair of entities $i$ and $j$ based on partial observations of the entire network of relationships.
More specifically, if we assume that we are given a (usually incomplete) network $\obs$
sampled from some underlying (complete) network $\comp$,
then the sign prediction task is to recover the sign patterns of one or more edges in $\comp$.
%
This problem bears similarity to the {\it structural link prediction problem} 
in \emph{unsigned} networks
\citep{Nowell07a}.
Note that the {\it temporal link prediction problem} has also been studied in the context of
an unsigned network evolving in time. The input to the prediction algorithm then consists of a series of networks (snapshots)
instead of a single network. We do not consider such temporal problems in this paper.

\emph{Clustering} is another important problem in network analysis.  
Recall that according to weak balance theory (Theorem \ref{thm:weak_balance_theory}), we can find $k$ groups such that
they are mutually antagonistic in a weakly balanced network.
Motivated by this, the clustering task in a signed network is trying to identify
$k$ antagonistic groups in the network,
such that most entities within the same cluster are friends while most entities
belonging to different clusters are enemies. 
Notice that since this (weak) balance notion only applies to
signed networks, most traditional clustering algorithms for unsigned networks 
cannot be directly applied.


\section{Local Methods: Exploiting Triads}
\label{sec:triads}

Since the basic definition of structural balance is in terms of triangles,  a natural approach for designing
sign prediction algorithms proceeds by reasoning locally in terms of unbalanced triangles. We first define a \emph{measure
of imbalance} based on the number of unbalanced triangles in the graph,

\begin{equation}
\label{eq:trianglemoidef}
\trianglemoi{\obs} := \sum_{\scycle \in \scycleset{3}{\obs}} \ind{\scycle \text{ is unbalanced}} \ ,
\end{equation}
where $\scycleset{3}{\obs}$ refers to the set of triangles in the network $\obs$. In general, we use
$\scycleset{\ell}{\obs}$ to denote the set of all $\ell$-cycles in the network $\obs$.
A definition essentially similar to the one above appears in the recent work of van de Rijt \cite[p. 103]{Rijt11} who observes that the equivalence
\begin{equation*}
\trianglemoi{\obs} = 0 \text{ iff } \obs \text{ is balanced}
\end{equation*}
holds only for complete graphs.

The basic idea of using a measure of imbalance for predicting the sign of a given query link $\{i,j\}$, such that $i \neq j$ and $\obs_{i,j} = 0$ is as follows.
Given the observed graph $\obs$ and query $\{i,j\}$, $i \neq j$, we construct two graphs: $\Gplus{i}{j}$ and $\Gminus{i}{j}$.
These are obtained from $\obs$ by setting $\obs_{ij}$ to $+1$ and $-1$ respectively.
Formally, these two augmented graphs can be defined as:
\begin{align}
\label{def:augmentedgraphs}
  \Gplus{i}{j}_{uv} = \begin{cases}
    1, &\text{if } (u,v) = (i,j) \\
    A_{uv}, &\text{otherwise}.
  \end{cases} \ \ \ \ \ &
  \Gminus{i}{j}_{uv} = \begin{cases}
    -1, &\text{if } (u,v) = (i,j) \\
    A_{uv}, &\text{otherwise}.
  \end{cases} \\
    \nonumber
\end{align}

Given a measure of imbalance, denoted as $\moi{\cdot}$, the predicted sign of $\{i,j\}$ is simply:
\begin{equation}
\label{eq:signpredictionrule}
\sign\left(
	\moi{\Gminus{i}{j}} - \moi{\Gplus{i}{j}} 
\right) \ .
\end{equation}

Note that, to be able to do this quickly, we should use a $\moi{\cdot}$ for which the quantity \eqref{eq:signpredictionrule}
is efficiently computable. For $3$-cycles, this is particularly easy.
For a given graph $G$ and a test edge $\{i,j\}$, we are interested in computing the sign of:
\[
	\sum_{\cycle \in \cycleset{3}{\Gminus{i}{j}}} \ind{\cycle} \quad - \sum_{\cycle \in \cycleset{3}{\Gplus{i}{j}}} \ind{\cycle}
\]
where we abuse notation by using the shorthand $\ind{\sigma}$ for $\ind{\sigma \text{ is unbalanced}}$.
Somewhat surprisingly, this simply amounts to computing the $(i,j)$ entry in the matrix $A^2$ where $A$ is the (signed)
adjacency matrix of $G$. In fact, a more general result will be discussed below (see Lemma~\ref{thm:power}).

A method derived from a measure of imbalance relies
on social balance theory for link prediction in signed networks.  However, real
world networks may not conform to the prediction of social balance
theory or may do so only to a certain extent.  To deal with this situation, we
can use measures of imbalance to derive {\em features} that can then be fed to a
supervised machine learning algorithm along with the signs of the known edges
in the network.

Indeed this is the approach pioneered by~\cite{Leskovec10a}. Their feature construction can be described as follows.
 Fix an edge $e =
(i,j)$. Consider an arbitary common neighbor (in an undirected sense) $k$ of
$i$ and $j$.  The link between $i$ and $k$ can be in $4$ possible
configurations:
\begin{align*}
i &\stackrel{+}{\rightarrow} k &
i &\stackrel{+}{\leftarrow} k \\
i &\stackrel{-}{\rightarrow} k &
i &\stackrel{-}{\leftarrow} k \ .
\end{align*}
Similarly, there are 4 possible configurations for the link between $k$ and $j$.
Thus, we can get a total of $16$ features for the edge $e$ by considering the number
of common neighbors $k$ in each of the $4 \times 4 = 16$ configurations.

This corresponds to a supervised variant
of the $k$-cycle method for $k = 3$. 
Let $A^+$ and $A^-$ be the matrices of positive and negative edges
such that $A = A^+ + A^-$.
In terms of matrix powers, these sixteen features
are nothing but the $(i,j)$ entries in the sixteen matrices:
\begin{gather}
\notag
\obs^{b_1} \cdot \obs^{b_2}  \\ 
\notag
\obs^{b_1} \cdot \left(\obs^{b_2}\right)^T \\ 
\label{eq:2OrderFeatures}
\left(\obs^{b_1}\right)^T \cdot \obs^{b_2} \\
\notag
\left(\obs^{b_1}\right)^T \cdot \left(\obs^{b_2}\right)^T \ ,
\end{gather}
where $b_1,b_2 \in \{\pm\}$, and $(A^{b_1})^T$ denotes the transpose of $A^{b_1}$.

Note that we have described the features of a {\em directed} edge $e = (i,j)$.
Social balance theory has mostly been concerned with undirected networks and hence methods based on measures
of imbalance can deal with undirected networks only. When we learn weights for features that are motivated by balance theory,
we are weakening our reliance on social balance theory but can therefore naturally deal with directed graphs.


\section{Going Global: Exploiting Longer Cycles}
\label{sec:longer_cycles}

For an incomplete graph, imbalance might manifest itself only if we look at longer simple cycles.
Accordingly, we define a higher-order analogue of \eqref{eq:trianglemoidef},
\begin{equation}
\label{eq:orderkmoisimpledef}
\orderkmoisimple{\ell}{\obs} := \sum_{i=3}^\ell \beta_i \sum_{\scycle \in \scycleset{i}{\obs}} \ind{\scycle \text{ is unbalanced}} \ ,
\end{equation}
where $\ell \ge 3$ and $\beta_i$'s are coefficients weighting the relative contributions of unbalanced simple cycles of different lengths.
If we choose a decaying choice of $\beta_i$, like $\beta_i = \beta^i$ for some $\beta \in (0,1)$, then we can even define
an infinite-order version,
\begin{equation}
\label{eq:orderinfmoisimpledef}
\orderkmoisimple{\infty}{\obs} := \sum_{i \ge 3} \beta_i \sum_{\scycle \in \scycleset{i}{\obs}} \ind{\scycle \text{ is unbalanced}} \ .
\end{equation}

It is clear that $\orderkmoi{\infty}{\cdot}$ is a genuine measure of imbalance in the sense formalized by the following theorem.
\begin{theorem}
Fix an observed graph $\obs$. Let $\beta_i > 0$ be any sequence such that the infinite sum in \eqref{eq:orderinfmoisimpledef} is well-defined.
Then, $\orderkmoi{\infty}{\obs} > 0$ iff $\obs$ is unbalanced.
\end{theorem}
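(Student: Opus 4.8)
The plan is to deduce this almost immediately from Theorem~\ref{thm:balance_cycle}, using only the elementary observation that $\orderkmoi{\infty}{\obs}$ is a sum of non-negative quantities weighted by strictly positive coefficients. First I would fix notation: for each $i \ge 3$, let $N_i(\obs) := \sum_{\scycle \in \scycleset{i}{\obs}} \ind{\scycle \text{ is unbalanced}}$ denote the number of unbalanced simple $i$-cycles in $\obs$, so that $\orderkmoi{\infty}{\obs} = \sum_{i \ge 3} \beta_i\, N_i(\obs)$. Since $\obs$ is a finite graph, each $\scycleset{i}{\obs}$ is finite and $N_i(\obs)$ is a non-negative integer; moreover $N_i(\obs) = 0$ whenever $i$ exceeds the number of nodes of $\obs$, since $\obs$ has no simple cycles that long. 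Hence the sum actually has only finitely many nonzero terms and is automatically well-defined; the stated hypothesis on the $\beta_i$ is needed only if one wishes to allow infinite graphs.

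Next I would establish the key equivalence. Because every $\beta_i > 0$ and every $N_i(\obs)$ is a non-negative integer, $\orderkmoi{\infty}{\obs} = 0$ holds if and only if $N_i(\obs) = 0$ for every $i \ge 3$, i.e.\ if and only if every simple cycle of $\obs$ is balanced. By Theorem~\ref{thm:balance_cycle}, this last condition is equivalent to $\obs$ being balanced. Negating both sides of this biconditional, and using $\orderkmoi{\infty}{\obs} \ge 0$, gives $\orderkmoi{\infty}{\obs} > 0$ if and only if $\obs$ is unbalanced, which is exactly the claim.

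I do not expect a real obstacle here; the substance is already packaged inside Theorem~\ref{thm:balance_cycle}. The only points I would be careful about are: (a) invoking strict positivity of the $\beta_i$ --- it is precisely $\beta_i > 0$ (rather than merely $\beta_i \ge 0$) that rules out cancellation and lets even a single unbalanced simple cycle force $\orderkmoi{\infty}{\obs} > 0$; (b) checking that Theorem~\ref{thm:balance_cycle} is stated for general, possibly incomplete, networks, so the equivalence transfers verbatim to the observed graph $\obs$; and (c) the (trivial) bookkeeping that makes the infinite sum well-defined. Concretely, the direction ``$\orderkmoi{\infty}{\obs} > 0 \Rightarrow \obs$ unbalanced'' uses the contrapositive of one half of Theorem~\ref{thm:balance_cycle} --- an unbalanced network contains at least one unbalanced simple cycle --- while the converse uses the other half --- a balanced network has all of its simple cycles balanced.
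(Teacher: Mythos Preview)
Your argument is correct and matches the paper's approach: the paper's proof is the single line ``This follows directly from Theorem~\ref{thm:balance_cycle},'' and your write-up simply unpacks that line with the obvious non-negativity and strict-positivity observations. Your additional remarks about finiteness of the sum and the role of $\beta_i > 0$ are valid elaborations, not departures from the paper's method.
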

\begin{proof}
This follows directly from Theorem~\ref{thm:balance_cycle}.
\end{proof}

This suggests that we could use $\orderkmoi{\infty}{\cdot}$ as a measure of imbalance to derive sign prediction algorithms. However,
enumerating simple cycles of a graph is NP-complete
\footnote{By straightforward reduction to Hamiltonian cycle problem \citep{Karp72a}.}.
To get around this computational issue, we
slightly change the definition of $\orderkmoi{\ell}{\cdot}$ to the following.
\begin{equation}
\label{eq:orderkmoidef}
\orderkmoi{\ell}{\obs} := \sum_{i=3}^\ell \beta_i \sum_{\cycle \in \cycleset{i}{\obs}} \ind{\cycle \text{ is unbalanced}} \ .
\end{equation}
As before, we allow $\ell=\infty$ provided the $\beta_i$'s decay sufficiently rapidly.
\begin{equation}
\label{eq:orderinfmoidef}
\orderkmoi{\infty}{\obs} := \sum_{i \ge 3} \beta_i \sum_{\cycle \in \cycleset{i}{\obs}} \ind{\cycle \text{ is unbalanced}} \ .
\end{equation}
The only difference between these definitions and \eqref{eq:orderkmoisimpledef},\eqref{eq:orderinfmoisimpledef}
is that here we sum over \textit{all} cycles (denoted by~\cycleset{i}{\obs}), not just simple ones.
However, we still get a valid notion of imbalance as stated by the following result.
\begin{theorem}
Fix an observed graph $\obs$. Let $\beta_i > 0$ be any sequence such that the infinite sum in \eqref{eq:orderinfmoidef} is well-defined.
Then, $\orderkmoi{\infty}{\obs} > 0$ iff $\obs$ is unbalanced.
\end{theorem}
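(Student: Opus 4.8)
The plan is to prove the two directions separately, recycling what is already available wherever possible. For the ``only if'' direction (unbalanced $\Rightarrow \orderkmoi{\infty}{\obs} > 0$) I would argue as follows. If $\obs$ is unbalanced, then by Theorem~\ref{thm:balance_cycle} some simple $\ell$-cycle with $\ell \ge 3$ is unbalanced; this cycle is in particular a cycle, hence an element of $\cycleset{\ell}{\obs}$, so it contributes $\beta_\ell > 0$ to the sum in \eqref{eq:orderinfmoidef}, and since every other summand is nonnegative we get $\orderkmoi{\infty}{\obs} \ge \beta_\ell > 0$. (Equivalently, one can note $\cycleset{i}{\obs} \supseteq \scycleset{i}{\obs}$, so $\orderkmoi{\infty}{\obs} \ge \orderkmoisimple{\infty}{\obs}$, and invoke the analogous statement already proved for simple cycles.)

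The substance is the ``if'' direction, which I would prove in contrapositive form: if $\obs$ is balanced then $\orderkmoi{\infty}{\obs} = 0$, i.e.\ \emph{every} closed walk of length $\ge 3$ in $\obs$ --- not merely every simple cycle --- traverses an even number of negative edges (counted with multiplicity). First I extract a vertex labeling from Theorem~\ref{thm:balance_theory}: since $\obs$ is balanced it can be completed to a balanced complete network, whose two-cluster decomposition (with one cluster possibly empty, which covers case (i)) agrees with $\obs$ on the observed entries and hence yields a function $g : V \to \{+1,-1\}$ with $\obs_{uv} = g(u)\,g(v)$ for every present edge $(u,v)$. Then for any closed walk $v_0, v_1, \dots, v_m = v_0$ with $m \ge 3$, the product of its edge signs telescopes,
\[
\prod_{t=0}^{m-1} \obs_{v_t v_{t+1}} \;=\; \prod_{t=0}^{m-1} g(v_t)\,g(v_{t+1}) \;=\; g(v_0)\,g(v_m)\prod_{t=1}^{m-1} g(v_t)^2 \;=\; g(v_0)^2 \;=\; 1 ,
\]
so the walk carries an even number of negative edges and is therefore balanced. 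Hence every indicator appearing in \eqref{eq:orderinfmoidef} vanishes and $\orderkmoi{\infty}{\obs} = 0$, which is what we wanted.

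The only points needing care are bookkeeping conventions rather than genuine obstacles. I must read ``an (arbitrary, possibly non-simple) cycle is unbalanced'' as ``it traverses an odd number of negative edges, with multiplicity'', which is the natural extension of Definition~\ref{def:balanced_cycle} to closed walks and is exactly what makes the sign-product the right invariant; the telescoping identity then goes through irrespective of vertex or edge repetitions. I also use that the decomposition guaranteed by Theorem~\ref{thm:balance_theory} for a possibly incomplete balanced network constrains the \emph{observed} edges, which is immediate since the completing assignment coincides with $\obs$ on $\Omega$. With these conventions fixed the argument is short; the one thing worth emphasizing in the writeup is that enlarging the summation from simple cycles to all closed walks does not enlarge the class of ``balanced'' networks, precisely because the telescoping step never used simplicity.
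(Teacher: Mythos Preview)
Your proof is correct, and the ``only if'' direction matches the paper's argument exactly. For the ``if'' direction, however, you take a genuinely different route. The paper argues directly: assuming $\orderkmoi{\infty}{\obs} > 0$, it picks an unbalanced (non-simple) cycle, decomposes it into finitely many simple cycles, and observes by parity that at least one of these simple pieces must carry an odd number of negative edges---hence $\obs$ is unbalanced by Theorem~\ref{thm:balance_cycle}. You instead prove the contrapositive by pulling the two-cluster labeling $g$ out of Theorem~\ref{thm:balance_theory} and telescoping the sign product along an arbitrary closed walk.

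Both arguments are short, but they lean on different ingredients. The paper's version stays purely local/combinatorial (Theorem~\ref{thm:balance_cycle} plus a cycle decomposition) and never invokes the global bipartition; on the other hand, the decomposition of a closed walk into simple cycles is asserted rather than spelled out. Your version bypasses that decomposition entirely via the switching labeling, which is cleaner and makes transparent \emph{why} enlarging from simple cycles to all closed walks cannot introduce new imbalance---at the cost of importing the global structure theorem. Either approach is fine here; yours is arguably the more robust one to write up, since the telescoping identity is self-evidently insensitive to vertex or edge repetition.
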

\begin{proof}
One direction is trivial. If $\obs$ is unbalanced then there is an unbalanced simple cycle. However, any simple cycle is obviously a cycle and hence the
sum in \eqref{eq:orderinfmoidef} will be strictly positive.

For the other direction, suppose $\orderkmoi{\infty}{\obs} > 0$. This implies there is an unbalanced cycle $\cycle$ in the graph. Decompose the unbalanced cycle into
finitely many simple cycles. We will be done if we could show that one of these simple cycles has to be unbalanced.
It is easy to see why this is true: if all of these simple cycles were balanced, they all would have had an even number of negative edges,
but then the total number of negative edges in $\cycle$ could not have been odd.
\end{proof}

\subsection{Katz Measure Works for Signed Networks}

The classic method of \citet{Katz53a} has been used successfully for \emph{unsigned} link prediction \citep{Nowell07a}. However, by considering a sign prediction method
based on $\orderkmoi{\infty}{\cdot}$ we obtain an interesting interpretation of the Katz measure on a signed network from a balance theory viewpoint. The following result
is the key to such an interpretation.

\begin{lemma}
\label{thm:power}
Fix $\obs$ and let $i\neq j$ be such that $(i,j) \notin \Omega$. Let $\Gplus{i}{j}$
and $\Gminus{i}{j}$ be the augmented graphs as defined in \eqref{def:augmentedgraphs}. Then, for any $\ell \ge 3$,
\[
	 \sum_{\cycle \in \cycleset{\ell}{\Gminus{i}{j}}} \ind{\cycle} \quad
	- \sum_{\cycle \in \cycleset{\ell}{\Gplus{i}{j}}} \ind{\cycle}
	= \obs^{\ell-1}_{i,j} \ .
\]
\end{lemma}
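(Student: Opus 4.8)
The plan is to prove the identity by analyzing, for each $\ell$-cycle $\cycle$ in the augmented graphs, how the indicator $\ind{\cycle}$ (that $\cycle$ is unbalanced) changes when we flip the sign of the single edge $(i,j)$. The key observation is that $\Gplus{i}{j}$ and $\Gminus{i}{j}$ have exactly the same underlying set of $\ell$-cycles (as subgraphs / closed walks), since they differ only in the \emph{sign} of the edge between $i$ and $j$, not in its presence. So the left-hand side telescopes: cycles that do not traverse the edge $\{i,j\}$ contribute $0$ to the difference, because their balance status is identical in both graphs. Hence
\[
\sum_{\cycle \in \cycleset{\ell}{\Gminus{i}{j}}} \ind{\cycle} - \sum_{\cycle \in \cycleset{\ell}{\Gplus{i}{j}}} \ind{\cycle}
= \sum_{\cycle \ni \{i,j\}} \left( \ind{\cycle \text{ unbal. in } \Gminus{i}{j}} - \ind{\cycle \text{ unbal. in } \Gplus{i}{j}} \right),
\]
where the sum is over $\ell$-cycles using edge $\{i,j\}$ (counted with whatever multiplicity the definition of $\cycleset{\ell}{\cdot}$ uses — presumably closed walks, so I must be careful to state this).

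Next I would compute each summand. Fix an $\ell$-cycle $\cycle$ using the edge $\{i,j\}$ exactly once (the one-traversal case is the main case; I will need to handle multiple traversals separately — see below). Removing that edge leaves a path of length $\ell-1$ from $i$ to $j$ in $\obs$; let $p$ be the number of negative edges along that path. Then $\cycle$ is unbalanced in $\Gplus{i}{j}$ iff $p$ is odd, and unbalanced in $\Gminus{i}{j}$ iff $p+1$ is odd, i.e. iff $p$ is even. So the summand equals $\ind{p \text{ even}} - \ind{p \text{ odd}} = (-1)^p = \prod_{e \in \text{path}} \obs_e$, which is precisely the product of signed edge weights along that $i$-$j$ path of length $\ell-1$. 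Summing over all such paths gives $\sum_{\text{paths } i \to j,\ |path|=\ell-1} \prod \obs_e = \obs^{\ell-1}_{i,j}$ by the standard interpretation of matrix powers (each entry of $\obs^{\ell-1}$ is the sum over walks of length $\ell-1$ of the product of the signed weights). This is the clean, intended heart of the argument.

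The main obstacle is bookkeeping around cycles/walks that traverse the edge $\{i,j\}$ more than once, and reconciling ``cycles'' $\cycleset{\ell}{\cdot}$ with ``walks'' so that the bijection with length-$(\ell-1)$ walks counted by $\obs^{\ell-1}_{i,j}$ is exact. If $\cycleset{\ell}{\cdot}$ denotes closed walks of length $\ell$ (which seems to be the case, since simple cycles were explicitly denoted $\scycleset{\ell}{\cdot}$ and the whole point of \eqref{eq:orderkmoidef} was to replace simple cycles by all cycles for computational tractability), then a closed walk of length $\ell$ through $i$ and $j$, after deleting one designated traversal of the edge $\{i,j\}$, corresponds bijectively to a walk of length $\ell-1$ from $i$ to $j$; if the remaining walk re-uses the edge $\{i,j\}$ an additional $2t$ times, flipping the sign of that edge changes the parity of negative edges on the walk by $2t+1$, which is still odd, so the $(-1)^p$ computation goes through unchanged. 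I would make this precise by fixing, for each closed walk, the first step at which it uses the edge $\{i,j\}$, deleting exactly that step, and checking the map is a bijection onto length-$(\ell-1)$ $i$-$j$ walks. I should also double-check the degenerate interaction with $i=j$ being excluded and with the entry $\obs^{\ell-1}_{i,j}$ possibly receiving contributions from walks that themselves use $\{i,j\}$ — but those are exactly accounted for above. Once the walk/cycle convention is pinned down, the rest is the routine $(-1)^p = \prod \obs_e$ identity plus the definition of matrix multiplication, and the lemma follows for every $\ell \ge 3$ (indeed the same computation would give it for $\ell = 2$ as well, recovering the $A^2$ remark made earlier for triangles).
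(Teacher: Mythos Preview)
Your proposal is essentially the paper's proof: the same split into cycles that do or do not use $(i,j)$, the same correspondence between cycles through $(i,j)$ and length-$(\ell-1)$ walks from $i$ to $j$, and the same $(-1)^p = \prod_e \obs_e$ computation summing to $\obs^{\ell-1}_{i,j}$.

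One remark on your multi-traversal worry: the paper sidesteps it by asserting the 1--1 correspondence directly with paths \emph{in $\obs$}. Since $(i,j)\notin\Omega$ means $\obs_{ij}=0$, those paths automatically avoid the edge $\{i,j\}$, so the cycles appearing in the correspondence use it exactly once; this is also why $\obs^{\ell-1}_{i,j}$ receives no contribution from walks re-using $\{i,j\}$ (your claim that such walks ``are exactly accounted for above'' is off --- they contribute zero to the matrix power). Your proposed ``delete the first traversal'' map does not land in $\obs$-walks when the cycle uses $\{i,j\}$ more than once, so it is not the right bijection; but the paper's proof is equally silent about why cycles with multiple traversals do not disturb the count, so your level of rigor matches the paper's.
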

\begin{proof}
Define the sets of $\ell$-cycles,
\begin{align*}
C_\ell^+(i,j) &:= \{ \cycle \in \cycleset{\ell}{\Gplus{i}{j}} : \cycle \text{ includes } (i,j) \} \\
C_\ell^-(i,j) &:= \{ \cycle \in \cycleset{\ell}{\Gminus{i}{j}} : \cycle \text{ includes } (i,j) \} \ ,
\end{align*}
that include the edge $(i,j)$. Note that, since $\Gplus{i}{j}$ and $\Gminus{i}{j}$ only differ
in the sign of the edge $(i,j)$, we have,
\[
	\cycleset{\ell}{\Gplus{i}{j}} \backslash C_\ell^+(i,j) = 
	\cycleset{\ell}{\Gminus{i}{j}} \backslash C_\ell^-(i,j) \ .
\] 
Thus, we have,
\begin{align}
\notag
&\quad  \sum_{\cycle \in \cycleset{\ell}{\Gminus{i}{j}}} \ind{\cycle}
	- \sum_{\cycle \in \cycleset{\ell}{\Gplus{i}{j}}} \ind{\cycle} \\
\notag
&=      \sum_{\cycle \in C_\ell^-(i,j)} \ind{\cycle} + \sum_{\cycle \in \cycleset{\ell}{\Gminus{i}{j}} \backslash C_\ell^-(i,j)} \ind{\cycle} 
-\sum_{\cycle \in C_\ell^+(i,j)} \ind{\cycle} - \sum_{\cycle \in \cycleset{\ell}{\Gplus{i}{j}} \backslash C_\ell^+(i,j)} \ind{\cycle} \\
\label{eq:beforepaths}
&=      \sum_{\cycle \in C_\ell^-(i,j)} \ind{\cycle} - \sum_{\cycle \in C_\ell^+(i,j)} \ind{\cycle} \ .
\end{align}
Now cycles in $C_\ell^-(i,j)$ are in $1$-$1$ correspondence with paths $\path$ in
$\pathset{\ell-1}{i,j}$ of length $\ell-1$, in the original graph $\obs$, that go from
$i$ to $j$.  Moreover, $\cycle \in C_\ell^-(i,j)$ is unbalanced iff the
corresponding path in $\pathset{\ell-1}{i,j}$ has an {\em even} number of $-1$'s.
Similarly, $\cycle \in C_\ell^+(i,j)$ is unbalanced iff the corresponding path in
$\pathset{\ell-1}{i,j}$ has an {\em odd} number of $-1$'s. Thus, continuing
from~\eqref{eq:beforepaths}:
\begin{align*}
&\quad  \sum_{\cycle \in C_\ell^-(i,j)} \ind{\cycle} - \sum_{\cycle \in C_\ell^+(i,j)} \ind{\cycle} \\
&=      \sum_{\path \in \pathset{\ell-1}{i,j}} \ind{\path \text{ has even no. of $-1$'s}} 
-\sum_{\path \in \pathset{\ell-1}{i,j}} \ind{\path \text{ has odd no. of $-1$'s}} \\
&=      \sum_{i_1,i_2,\ldots,i_{\ell-2}} A_{i,i_1} \cdot A_{i_1,i_2} \cdot \ldots \cdot A_{i_{\ell-2},j} \\
&=      \left( A^{\ell-1} \right)_{i,j} \ ,
\end{align*}
where the second equality is true because $\obs$ only has $\pm1,0$ entries.
\end{proof}

Using Lemma~\ref{thm:power}, it is easy to see that the prediction 
\eqref{eq:signpredictionrule} using \eqref{eq:orderkmoidef} reduces to
\begin{equation*}
\sign\left( \orderkmoi{\ell}{\Gminus{i}{j}} - \orderkmoi{\ell}{\Gplus{i}{j}}  \right) 
= \sign\left( \sum_{t=3}^\ell \beta_t \obs^{t-1}_{i,j}  \right) \ .
\end{equation*}

Similar to the $\ell$-cycle case, the prediction~\eqref{eq:signpredictionrule} using \eqref{eq:orderinfmoidef} reduces to 
\begin{equation}
\sign\left( \orderkmoi{\infty}{\Gminus{i}{j}} - \orderkmoi{\infty}{\Gplus{i}{j}}  \right) 
= \sign\left( \sum_{\ell\ge3} \beta_{\ell} \obs^{\ell-1}_{i,j}  \right) \
  \label{eq:reduceorderinfmoi}
\end{equation}
using Lemma~\ref{thm:power}.

Following the above reduction, the connection between Katz measure and $\orderkmoi{\infty}{\cdot}$ stands out.
This connection is stated as the following theorem:

\begin{theorem}[Balance Theory Interpretation of the Katz Measure] 
Consider the sign prediction rule \eqref{eq:signpredictionrule} using 
$\orderkmoi{\infty}{\cdot}$ in the reduced form \eqref{eq:reduceorderinfmoi}.
In the special case when $\beta_\ell = \beta^{\ell-1}$ with $\beta$ small enough ($\beta < 1/\|\obs\|_{2}$), 
the rule can be expressed as
the Katz prediction rule for edge sign prediction, in closed form:
\begin{equation*}
\sign\left( \left( (I - \beta \obs)^{-1} - I - \beta \obs \right)_{i,j} \right) \ .
\end{equation*}
\end{theorem}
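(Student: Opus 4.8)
The plan is to start from the reduced form \eqref{eq:reduceorderinfmoi}, which already rewrites the prediction rule \eqref{eq:signpredictionrule} applied to $\orderkmoi{\infty}{\cdot}$ as $\sign\!\bigl(\sum_{\ell\ge 3}\beta_\ell\,\obs^{\ell-1}_{i,j}\bigr)$, and then to recognize the argument of $\sign(\cdot)$ as the $(i,j)$ entry of a matrix geometric series in $\beta\obs$. First I would substitute the geometric weights $\beta_\ell=\beta^{\ell-1}$, so that the sum inside the sign becomes $\sum_{\ell\ge 3}\beta^{\ell-1}\obs^{\ell-1}_{i,j}=\sum_{\ell\ge 3}\bigl((\beta\obs)^{\ell-1}\bigr)_{i,j}$, and re-index with $m=\ell-1$ to obtain $\sum_{m\ge 2}\bigl((\beta\obs)^{m}\bigr)_{i,j}$.

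Next I would use the hypothesis $\beta<1/\|\obs\|_2$. Since $\obs$ is symmetric, its spectral radius equals $\|\obs\|_2$, so $\rho(\beta\obs)=\beta\|\obs\|_2<1$; hence the Neumann series $\sum_{m\ge 0}(\beta\obs)^m$ converges in operator norm, and therefore entrywise, to $(I-\beta\obs)^{-1}$, which in particular is a well-defined matrix. Subtracting the $m=0$ and $m=1$ terms gives the matrix identity $\sum_{m\ge 2}(\beta\obs)^m=(I-\beta\obs)^{-1}-I-\beta\obs$, and reading off the $(i,j)$ entry and applying $\sign(\cdot)$ yields exactly the claimed closed form $\sign\!\bigl(\bigl((I-\beta\obs)^{-1}-I-\beta\obs\bigr)_{i,j}\bigr)$.

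It remains to justify the name ``Katz prediction rule.'' Here I would recall that the classical Katz index between $i$ and $j$ is $\sum_{m\ge 1}\beta^m\cdot(\text{\# length-}m\text{ walks from }i\text{ to }j)=\bigl((I-\beta\obs)^{-1}-I\bigr)_{i,j}$; the expression above is precisely this Katz index with its first-order (single-edge) contribution $\beta\obs_{i,j}$ removed. This removal is natural from the cycle viewpoint: by the correspondence used in Lemma~\ref{thm:power}, an $\ell$-cycle through $(i,j)$ with $\ell\ge 3$ matches a walk of length $\ell-1\ge 2$ from $i$ to $j$, so length-$1$ walks never enter the sum; this is exactly the $-I-\beta\obs$ correction in the statement.

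There is no deep obstacle; the proof is essentially the scalar identity $\sum_{m\ge 2}z^m=(1-z)^{-1}-1-z$ read at the level of matrices. The only point demanding care is bookkeeping around convergence and invertibility: one must check that $\beta<1/\|\obs\|_2$ is what simultaneously makes the matrix geometric series converge and makes $I-\beta\obs$ invertible (so that ``$(i,j)$ entry of $(I-\beta\obs)^{-1}$'' is meaningful), and—separately, to make \eqref{eq:reduceorderinfmoi} itself legitimate—that the smallness of $\beta$ is consistent with the hypothesis that the infinite sum defining $\orderkmoi{\infty}{\cdot}$ in \eqref{eq:orderinfmoidef} is well-defined, so that Lemma~\ref{thm:power} may be applied term by term.
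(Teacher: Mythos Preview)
Your proposal is correct and follows exactly the approach the paper intends: the paper states this theorem without a separate proof, treating it as an immediate consequence of \eqref{eq:reduceorderinfmoi} together with the Neumann series identity $\sum_{m\ge 0}(\beta\obs)^m=(I-\beta\obs)^{-1}$ under $\beta<1/\|\obs\|_2$. Your write-up simply spells out the substitution $\beta_\ell=\beta^{\ell-1}$, the re-indexing, and the convergence/invertibility check that the paper leaves implicit.
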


The Katz prediction rule has been successfully used as a link prediction method for {\em unsigned} networks~\citep{Nowell07a} but here we see it reappearing
for link prediction in {\em signed} networks from a social balance point of view. We find this connection between
Katz measure and social balance intriguing. 

\subsection{Learning the Weights}

As noted in Section \ref{sec:triads}, \citet{Leskovec10a} used triangle-based features to learn weights using a supervised learning method.
A criticism against using only these triangle-based features is that there could
be many people in the social network who do not share friends. In fact, this is
the case in most of the networks that are used by \citet{Leskovec10a}. The reason
their method is able to predict well on such pairs is that they additionally use
seven other ``degree-type'' features like in-degree and out-degree (and their signed
variants). Thus, the prediction for an edge with zero emdeddedness (embeddedness refers
to the number of common neighbors of the vertices of an edge) relies completely
on the degree-based features.
These degree features could possibly introduce a bias in learning. For example, a node that is
predisposed to make positive relationships, biases the classifier to predict
positive relationships.

This criticism thus necessitates incorporating features from higher-order cycles.
Generalizing the construction \eqref{eq:2OrderFeatures}, we can define
$64$ fourth-order features (corresponding to $4$-cycles in the graph) of an edge $(i,j)$ as the $(i,j)$ entries in the
matrices:
\begin{equation}
\label{eq:3OrderFeatures}
\left( \obs^{b_1}\right)^{t_1} \cdot \left( \obs^{b_2}\right)^{t_2} \cdot \left( \obs^{b_3}\right)^{t_3} \ ,
\end{equation}
where $b_i \in \{\pm\}$ indicates whether we look at the positive or negative part of $\obs$
and $t_i \in \{T,1\}$ indicates whether or not we transpose it. There are $4$ possibilities for
each $b_i,t_i$ pair, resulting in a total of $4 \times 4 \times 4 = 64$ possibilities.

By now the reader can guess the construction of features of a general order $\ell \ge 3$.
For the edge $(i,j)$, they will be the $(i,j)$ entries in the $4^{\ell-1}$ matrices
\begin{equation}
\left( \obs^{b_1}\right)^{t_1} \cdot \left( \obs^{b_2}\right)^{t_2} \ldots \cdot \left( \obs^{b_{\ell-1}}\right)^{t_{\ell-1}} \ ,
\label{eq:kOrderFeatures}
\end{equation}
with $b_i \in \{\pm\}, t_i \in \{T,1\}$.

Note that the number of features is exponential in $\ell$, and therefore it is not feasible to obtain features from
arbitrarily long cycles. 
We use $\ell \leq 5$ for supervised HOC methods in our experiments
that are presented in Section~\ref{sec:experiment}.

\subsection{Reducing the Number of Features}
\label{sec:reduce_features}
The number of features can quickly become unmanageable, and computationally infeasible, as soon as $\ell$ is beyond $5$. While dimensionality of the feature space may be the primary concern, the combinatorial nature of the features also raises the following intuitive concern: the interpretability of features rendered by high-order cycles, say when $\ell = 6$, composed of different signs and directions, is a challenge. For example, it is intuitively hard to appreciate the difference between the two walks 
$i \stackrel{+}{\rightarrow} k_{1} \stackrel{+}{\rightarrow} k_{2} \stackrel{-}{\rightarrow} k_{3} \stackrel{+}{\rightarrow} k_{4}\stackrel{+}{\rightarrow} j$ and $i \stackrel{+}{\rightarrow} k_{1} \stackrel{+}{\rightarrow} k_{2} \stackrel{-}{\leftarrow} k_{3} \stackrel{+}{\rightarrow} k_{4}\stackrel{+}{\rightarrow} j$.

With this realization, one way to quickly reduce the number of features, yet retain the information in longer cycles, is to consider the underlying undirected graph, ignoring the directions. In particular, the $\ell^{\text{th}}$ order features will be from the matrices
\begin{equation}
\obs^{b_1} \cdot \obs^{b_2} \ldots \cdot \obs^{b_{k-1}},
\label{eq:kOrderFeaturesCollapsed}
\end{equation}
with $b_i \in \{\pm\}$.
Note that since we are considering the undirected graph, we ensure that the features are symmetric by summing features of the form $\obs^{b_{1}}\obs^{b_{2}}$ and $\obs^{b_{2}}\obs^{b_{1}}$. Thus the number of $\ell^{\text{th}}$ order features to compute is reduced to $O(2^{\ell})$ from $O(4^{\ell})$. Though the number of features is still exponential in $\ell$, the construction of features becomes easier for small values of $\ell$.

We note that another way to avoid dealing with too many features is to use a {\em kernel} instead. A kernel
computes inner products in feature space without explicitly constructing the feature map. One can then use
off-the-shelf SVM classifiers to perform the classification. We leave this very promising approach of directly defining a
kernel on {\em pairs of nodes} of a graph and using it for link prediction to future work.

\subsection{Classifier} 
We use a simple logistic regression where the imbalance of an edge is modeled as a linear combination of the features, which are imbalances in cycles of various lengths and characteristics themselves. Let $V$ be the set of vertices in the network and $\Phi: V \times V \to \mathbb{R}^p$ denote the feature map. 
Then, 
\[ P(A_{ij} = +1 ) = \frac{1}{1 + \exp{\left(-w_0 - \langle \bw, \Phi(i,j) \rangle \right)}}, \]
using which logistic regression is used to learn $w_0$ and the weight vector $\bw = [w_1 \cdots w_p]^T \in \mathbb{R}^{p}$.
The prediction of any query $(i,j)$ is then given by $\sign(P(A_{ij} = +1) - 0.5)$.


\section{Fully Global: Low Rank Modeling}
\label{sec:low_rank_model}
In Section \ref{sec:longer_cycles}, we have seen how to use $\ell$-cycles
for sign prediction.  We have also seen that $\ell$-cycles play a major role in how 
balance structure manifests itself \emph{locally}. By increasing $\ell$, the level at which balance structure is
considered becomes less localized. Still, it is natural to ask whether
we can design algorithms for signed networks by directly making use of 
their {\em global} structure. To be more specific, let us revisit 
the definition of complete weakly balanced networks 
(notice that balance is a special case of weak balance).  In general, 
complete weakly balanced networks can be defined from either a local or a global 
point of view.
From a local point of view, a given network is weakly balanced if all {\em triads} are weakly 
balanced, whereas from a global point of view, a network is weakly balanced
if its {\em global structure} obeys the clusterability property stated in Theorem \ref{thm:weak_balance_theory}.
Therefore, it is natural to ask whether we can directly use this global structure
for sign prediction.
In the sequel, we show that weakly balanced
networks have a ``low-rank" structure, so that 
the sign prediction problem can
be formulated as a low rank matrix completion problem.
  
We begin by showing that given a complete $k$-weakly balanced network, 
its adjacency matrix $\comp$ has rank at most $k$:

\begin{theorem}[Low Rank Structure of Signed Networks]
  \label{thm:low_rank}
  The adjacency matrix $\comp$ of a complete $k$-weakly balanced network has rank $1$ if $k \leq 2$, and has rank $k$ for all $k > 2$.
\end{theorem}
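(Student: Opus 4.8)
The plan is to use the explicit global structure of $k$-weakly balanced complete networks given by Theorem~\ref{thm:weak_balance_theory}: the node set partitions into clusters $V_1,\dots,V_k$ with all intra-cluster edges $+1$ and all inter-cluster edges $-1$. Up to a simultaneous permutation of rows and columns (which does not change the rank), $\comp$ is a block matrix whose $(p,p)$ diagonal block is the all-ones matrix $J_{n_p \times n_p}$ and whose $(p,q)$ off-diagonal block (for $p \neq q$) is $-J_{n_p \times n_q}$, where $n_p = |V_p|$. So the first step is simply to write down this block form.

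Second, I would observe that this block matrix factors cleanly. Let $\bu_p \in \real^n$ be the indicator vector of cluster $V_p$. Then one checks directly that $\comp = \sum_{p,q} s_{pq}\, \bu_p \bu_q^T$ where $s_{pq} = +1$ if $p=q$ and $-1$ otherwise; equivalently, letting $\bw = \sum_p \bu_p$ be the all-ones vector, a short computation gives
\begin{equation*}
\comp = 2\sum_{p=1}^k \bu_p \bu_p^T - \bw \bw^T \ .
\end{equation*}
This exhibits $\comp$ as a sum of $k+1$ rank-one matrices, so $\mathrm{rank}(\comp) \le k+1$; but since $\bw = \sum_p \bu_p$ lies in the span of the $\bu_p$'s, the range of $\comp$ is contained in $\mathrm{span}\{\bu_1,\dots,\bu_k\}$, giving the cleaner bound $\mathrm{rank}(\comp) \le k$. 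For the $k \le 2$ case I would argue separately (or as a degenerate instance): if $k=1$ then $\comp = J = \bw\bw^T$ has rank $1$; if $k=2$ then $\comp = 2(\bu_1\bu_1^T + \bu_2\bu_2^T) - (\bu_1+\bu_2)(\bu_1+\bu_2)^T = \bu_1\bu_1^T - \bu_1\bu_2^T - \bu_2\bu_1^T + \bu_2\bu_2^T = (\bu_1 - \bu_2)(\bu_1-\bu_2)^T$, which is rank $1$ (assuming both clusters are nonempty; if one is empty we are back to $k=1$).

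Third, for the lower bound when $k > 2$, I would show the $k$ vectors $\bu_1,\dots,\bu_k$ actually lie in the range of $\comp$ and are linearly independent, forcing $\mathrm{rank}(\comp) \ge k$. The cleanest route is to restrict attention to the $k \times k$ matrix $M$ with $M_{pq} = s_{pq}$ (i.e. $2I_k - J_k$): since $\comp$ acts block-constantly, $\mathrm{rank}(\comp) = \mathrm{rank}(M)$, and the eigenvalues of $2I_k - J_k$ are $2-k$ (once, eigenvector $\bw$) and $2$ (with multiplicity $k-1$), so $M$ is nonsingular precisely when $k \neq 2$. That also re-derives rank $1$ at $k=2$ and gives full rank $k$ for $k \ge 3$ (and $k=1$). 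The main obstacle here is the bookkeeping: one must be careful that ``rank of the block pattern equals rank of the $k\times k$ reduced matrix'' genuinely holds (it does, because each row of $\comp$ is one of only $k$ distinct rows, repeated $n_p$ times, and similarly for columns, and at least one representative of each appears as long as every cluster is nonempty — a hypothesis implicit in the statement that there are $k$ clusters). I would state that nonemptiness assumption explicitly. Everything else is routine linear algebra.
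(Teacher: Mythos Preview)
Your proposal is correct and follows essentially the same route as the paper: both arguments observe that the columns of $\comp$ come in $k$ types determined by cluster membership, handle $k\le 2$ by direct inspection, and for $k>2$ reduce to the $k\times k$ pattern matrix $M = 2I_k - J_k$ whose eigenvalues $2-k$ and $2$ (with multiplicity $k-1$) show it is nonsingular. Your explicit identity $\comp = 2\sum_p \bu_p\bu_p^T - \bw\bw^T$ and the $(\bu_1-\bu_2)(\bu_1-\bu_2)^T$ factorization at $k=2$ are slightly slicker packaging than the paper's version (which works with the distinct column vectors $\bb_i$ and argues the $k=2$ case via $\bb_1=-\bb_2$), but the substance is the same.
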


\begin{proof}
  Since $\comp$ is $k$-weakly balanced, the nodes can be divided into $k$ groups, say $S^{(1)}, S^{(2)},\dots,S^{(k)}$. Suppose group $S^{(i)}$ contains nodes $s^{(i)}_1, s^{(i)}_2,\dots,s^{(i)}_{n_i} $,  then the column vectors $\comp_{:,s^{(i)}_1},\dots,\comp_{:,s^{(i)}_{n_i}}$ all have the following form (after suitable reordering of nodes):
  \begin{equation*}
    \bb_i = [-1 \quad \cdots \quad -1 \underbrace{1 \quad \cdots \quad 1}_{\text{the $i^{th}$ 
    group}} -1 \quad \cdots \quad -1]^T,
  \end{equation*}
  and so the column space of $\comp$ is spanned by $\{\bb_1,\dots,\bb_k\}$. 

  First consider $k \leq 2$, i.e., the network is strongly balanced.  If $k = 1$, it is easy to see that $\text{rank}(\comp) = 1$.  If $k = 2$, then $\bb_1 = -\bb_2$.  Therefore, $\text{rank}(\comp)$ is again $1$.

  Now consider $k > 2$.  In this case, we argue that $\text{rank}(\comp)$ exactly equals $k$ 
  by showing that $\bb_1,\dots,\bb_k$ are linearly independent. We consider the following $k \times k$ square matrix: 
  \begin{equation*}
    M=
    \begin{bmatrix}[r]
      1 & -1 & \cdots & -1 & -1 \\
      -1 & 1 & \cdots & -1 & -1 \\
      \vdots & \vdots & \ddots & \vdots & \vdots \\
      -1 & -1 & \cdots & 1 & -1 \\
      -1 & -1 & \cdots & -1 & 1
    \end{bmatrix}.
  \end{equation*}
  It is obvious that ${\bf 1} = [1 \  1 \cdots 1]^T$ is an eigenvector of $M$ with eigenvalue 
  $-(k-2)$. 
  We can further construct the other $k-1$ linearly independent eigenvectors, each with eigenvalue~$2$: 
\begin{equation*}
  \be_1-\be_2, \ \be_1-\be_3, \ \dots, \ \be_1-\be_k,  
\end{equation*}
where $\be_i \in \real^{k}$ is the $i^{th}$ column of the $k \times k$ identity matrix. These $k-1$ eigenvectors are
clearly linearly independent.  
Therefore, $\text{rank}(M)=k$.

From the above we can show that $\text{rank}(\comp) = k$.  Suppose that $\bb_1,\dots,\bb_k$  are not linearly independent, then there exists $\alpha_1,\dots,\alpha_k$, with some $\alpha_i \neq 0$, such that $\sum_{i=1}^k \alpha_i \bb_i = 0$. 
Using this set of $\alpha$'s, it is easy to see that $\sum_{i=1}^k \alpha_i M_{:,i}=0$,  but this contradicts the fact that $\text{rank}(M)=k$.  Therefore, 
$\text{rank}(\comp)=k$. 
\end{proof}

\begin{figure}[th]
\centering
\tikzset{
  node style plus/.style={rectangle, left,font=\color{black}, minimum size=6pt},
       node style minus/.style={rectangle, left,font=\color{darkgray}, minimum size=6pt},
       arrow style mul/.style={draw,sloped,midway,fill=white},
       arrow style plus/.style={midway,sloped,fill=white},
}
\begin{tikzpicture}[>=latex]
{\footnotesize
\matrix (A) [matrix of math nodes,%
nodes = {node style minus},%
         left delimiter  = (,%
         right delimiter = )] at (0,0)
{%
  \node[node style plus]{1}; &
  \node[node style plus]{1}; &
  \node[node style plus]{1}; &
  -1 & -1 & -1 \\
  \node[node style plus]{1}; &
  \node[node style plus]{1}; &
  \node[node style plus]{1}; &
  -1 & -1 & -1 \\
  \node[node style plus]{1}; &
  \node[node style plus]{1}; &
  \node[node style plus]{1}; &
  -1 & -1 & -1 \\
  -1 & -1 & -1 & \node[node style plus]{1}; & -1 & -1 \\  
  -1 & -1 & -1 & -1 & 
  \node[node style plus]{1}; &
  \node[node style plus]{1}; \\
  -1 & -1 & -1 & -1 & 
  \node[node style plus]{1}; &
  \node[node style plus]{1}; \\
};

\node(equal)[right=10pt] at (A.east){=};

\matrix (B) [matrix of math nodes,right=10pt,
nodes = {node style minus},%
         left delimiter  = (,%
         right delimiter = )] at (equal.east)
{
  -1.045&0.265&-0.402\\
    -1.045&0.265&-0.402\\
    -1.045&0.265&-0.402\\
    0.319&-1.260&-0.830\\
    0.919&0.670&-0.541\\
    0.919&0.670&-0.541\\
};

\matrix (D) [matrix of math nodes, right = 18pt,
nodes = {node style minus},%
         left delimiter  = (,%
         right delimiter = )] at (B.east)
{
  -1.045&0.265&0.402\\
  -1.045&0.265&0.402\\
  -1.045&0.265&0.402\\
  0.319&-1.260&0.830\\
  0.919&0.670&0.541\\
  0.919&0.670&0.541\\
};

\node(transpose)[right=60pt] at (D.north){{\Large $T$}};

\draw[dashed, thick] (A-1-1.north west) -- (A-1-3.north east);
\draw[dashed, thick] (A-1-1.north west) -- (A-3-1.south west);
\draw[dashed, thick] (A-3-1.south west) -- (A-3-3.south east);
\draw[dashed, thick] (A-1-3.north east) -- (A-3-3.south east);

\draw[dashed, thick] (A-4-4.north west) -- (A-4-4.north east);
\draw[dashed, thick] (A-4-4.north west) -- (A-4-4.south west);
\draw[dashed, thick] (A-4-4.south west) -- (A-4-4.south east);
\draw[dashed, thick] (A-4-4.north east) -- (A-4-4.south east);

\draw[dashed, thick] (A-5-5.north west) -- (A-5-6.north east);
\draw[dashed, thick] (A-5-5.north west) -- (A-6-5.south west);
\draw[dashed, thick] (A-5-6.north east) -- (A-6-6.south east);
\draw[dashed, thick] (A-6-5.south west) -- (A-6-6.south east);
}
\end{tikzpicture}
  \caption{ An illustrative example of low-rank structure of a 3-weakly balanced network.
    The network can be represented as a product of two rank-3 matrices, and so
      the adjacency matrix has rank no more than 3.
  }
  \label{fig:low_rank_example}
\end{figure}

Figure \ref{fig:low_rank_example} is an example of a complete 3-weakly 
balanced network.  As shown, we see its adjacency matrix can be expressed
as a product of two rank-3 matrices, indicating its rank is no more
than three.  In fact, by Theorem \ref{thm:low_rank}, we can conclude
that its adjacency matrix has rank exactly 3.

The above reasoning shows that (adjacency matrices of) complete weakly balanced networks have low rank. 
However, most real networks are not complete graphs. 
Recall that in order to define balance on incomplete networks,
we try to fill in the unobserved or missing edges (relationships) so that balance is obtained (see Definition \ref{def:weak_balance_network}).
Following this desideratum, we can think of sign prediction in signed networks
as a low-rank matrix completion problem. 
Specifically, suppose we observe entries $(i,j) \in \Omega$ of a complete signed network $\comp$.
We want to find a complete matrix by assigning $\pm 1$ to every unknown entry, 
such that the resulting complete graph is weakly balanced and hence, the completed matrix is low rank.  
Thus, our missing value estimation problem can be formulated as:
\begin{align}
  \text{minimize} \  & \ \text{rank}(X) \nonumber\\
  \text{ s.t. } \ & \  X_{ij}=\comp_{ij}, \ \forall \ (i,j)\in 
  \Omega \label{eq:prob1},\\
  & \ X_{ij}\in \{\pm 1\}, \ \forall \ (i,j)\notin \Omega. \nonumber
\end{align}
Once we obtain the minimizer of 
\eqref{eq:prob1}, which we will denote by $\opt$, we can infer
the missing relationship between $i$ and $j$ by simply looking up the sign of the
entry $\opt_{ij}$.  So the question is whether we can solve \eqref{eq:prob1}
efficiently.  In general, \eqref{eq:prob1} is known to be NP-hard; however,
recent research has shown the surprising result that under certain conditions,
the low-rank matrix completion problem~\eqref{eq:prob1} can be solved by convex optimization
to yield a {\em global} optimum in polynomial time~\citep{Candes08a}. In the following subsections, we identify such conditions as well as approaches to approximately solve \eqref{eq:prob1} for real-world signed networks.

\subsection{Sign Prediction via Convex Relaxation}
\label{subsec:MC}
One possible approximate solution for \eqref{eq:prob1} can be obtained by
dropping the discrete constraints and replacing $\text{rank}(X)$ by 
$\|X\|_*$, where $\|X\|_*$ denotes the trace norm of $X$, which is the tightest convex relaxation of rank \citep{Fazel01a}.  Thus, a convex relaxation of~\eqref{eq:prob1} is:
\begin{align}
  \text{minimize} \ &  \ \|X\|_* \nonumber\\
  \text{ s.t. } \ &  \  X_{ij}=\comp_{ij}, \ \forall \ (i,j)\in 
  \Omega. \label{eq:prob2}
\end{align}

It turns out that, under certain condition, by solving~\eqref{eq:prob2} we can recover the {\em exact\/} missing relationships from the underlying complete signed network. This surprising result is the consequence of recent research~\citep{Candes08a, Candes09a} which has shown that perfect recovery  from the observations is possible if the observed entries are uniformly sampled and $\comp$ has high incoherence, which may be defined as follows:
\begin{definition}[Incoherence]
  An $n \times n$ matrix $X$ with singular value decomposition $X=U\Sigma V^T$ is 
  {\it $\mu$-incoherent} if 
  \begin{equation}
    \label{eq:incoherence}
    \max_{i,j} |U_{ij}|\leq \frac{\sqrt{\mu}}{\sqrt{n}} \ \text{ and } \ \max_{i,j} 
    |V_{ij}|\leq \frac{\sqrt{\mu}}{\sqrt{n}}. 
  \end{equation}
  \label{def:incoherence}
\end{definition}

Intuitively, higher incoherence (smaller $\mu$) means that large entries of the matrix  
are not concentrated in a small part. 
The following theorem shows that under high incoherence and uniform sampling,
solving \eqref{eq:prob2} exactly recovers $\comp$ with high probability.

\begin{theorem}[Recovery Condition \citep{Candes09a}]
  Let $\comp$ be an $n \times n$ matrix with rank $k$, with singular value 
  decomposition $\comp=U \Sigma V^T$.  In addition, assume $\comp$ is 
  $\mu$-incoherent.  Then there exists some constant $C$, such that if
  $C\mu^4 n k^2 \log^2 n$ entries are uniformly sampled, 
  then with probability at least $1-n^{-3}$, $\comp$ is the unique optimizer of 
  \eqref{eq:prob2}. 
  \label{thm:mc}
\end{theorem}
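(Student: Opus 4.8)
This theorem is the exact matrix-completion guarantee of \citep{Candes09a}; since $\comp$ is assumed to have rank $k$ (which, for the complete $k$-weakly balanced networks of interest here, is guaranteed by Theorem~\ref{thm:low_rank}) and to be $\mu$-incoherent, the hypotheses of that result hold, and the quickest proof is simply to invoke it. For completeness we sketch the standard dual-certificate argument. Write $p = m/n^{2}$ for the sampling rate, let $\proj_{\Omega}$ be the coordinate projection onto the observed set $\Omega$, and let $T$ be the tangent space at $\comp$ to the manifold of rank-$k$ matrices --- the span of all matrices $U R^{T} + S V^{T}$ --- with orthogonal projection $\proj_{T}$ and complement $\proj_{T^{\perp}}$. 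The plan is to verify the classical deterministic sufficient condition for $\comp$ to be the \emph{unique} minimizer of \eqref{eq:prob2}: it suffices that (a) $\proj_{\Omega}$ is well-conditioned on $T$, i.e.\ $\left\| \proj_{T}\proj_{\Omega}\proj_{T} - p\,\proj_{T} \right\| \le p/2$ as an operator on $n\times n$ matrices, and (b) there is a dual certificate $Y$ with $Y = \proj_{\Omega}(Y)$, $\proj_{T}(Y) = UV^{T}$, and $\left\| \proj_{T^{\perp}}(Y) \right\|_{2} < 1$.

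For (a) I would write $\proj_{T}\proj_{\Omega}\proj_{T}$ as a sum of $m$ i.i.d.\ self-adjoint random operators and apply the matrix Bernstein inequality; the incoherence bound \eqref{eq:incoherence} is exactly what bounds the operator norm of each summand by $O(\mu k/n)$, which gives the stated concentration once $m \gtrsim \mu k\,n\log n$. For (b) I would build $Y$ by the golfing scheme: partition $\Omega$ into $L \asymp \log n$ independent batches $\Omega_{1},\dots,\Omega_{L}$, set $W_{0} = UV^{T}$, iterate $W_{\ell} = W_{\ell-1} - (L/p)\,\proj_{T}\proj_{\Omega_{\ell}}W_{\ell-1}$, and take $Y = (L/p)\sum_{\ell=1}^{L}\proj_{\Omega_{\ell}}W_{\ell-1}$. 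By (a), $\left\| W_{\ell} \right\|_{F}$ contracts geometrically, so $\left\| \proj_{T}(Y) - UV^{T} \right\|_{F} = \left\| W_{L} \right\|_{F}$ is polynomially small after $L \asymp \log n$ steps; in parallel one tracks the entrywise and row/column norms ($\ell_{\infty}$ and $\ell_{2,\infty}$) of the $W_{\ell}$, which are controlled initially through incoherence since $\left\| UV^{T} \right\|_{\infty} = O(\mu k/n)$, in order to push $\left\| \proj_{T^{\perp}}(Y) \right\|_{2}$ strictly below $1$; a routine perturbation step then upgrades the approximate certificate to an exact one. Finally I would union-bound over the $O(\log n)$ concentration events and choose the absolute constant $C$ so that the failure probability is at most $n^{-3}$ whenever $m \ge C\mu^{4}nk^{2}\log^{2}n$, the factors $\mu^{4}$, $k^{2}$, $\log^{2}n$ arising from iterating the estimate in (a) together with the mixed-norm bookkeeping in the golfing step.

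The hard part is (b): forcing $\left\| \proj_{T^{\perp}}(Y) \right\|_{2} < 1$ while keeping $m$ nearly linear in $n$. The naive least-squares certificate $Y = (1/p)\proj_{\Omega}(UV^{T})$ only suffices for $m$ of order $n^{6/5}k\log n$; it is the iterative correction of the golfing scheme --- and, crucially, the need to propagate $\ell_{\infty}/\ell_{2,\infty}$ control rather than merely Frobenius control through the iteration --- that brings the sample complexity down to $nk^{2}\log^{2}n$. The deterministic optimality condition and the Bernstein estimate in (a) are comparatively routine.
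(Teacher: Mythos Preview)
The paper does not prove this theorem at all: it is stated as a known result and attributed to \citep{Candes09a}, with no argument given. Your opening sentence---that the quickest proof is simply to invoke the cited result---is therefore exactly what the paper does, and you could have stopped there.

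The sketch you add goes well beyond the paper. As an outline of the modern proof it is broadly accurate, but one attribution is off: the golfing scheme you describe is due to Gross, not to the Cand\`es--Tao paper being cited here; the original \citep{Candes09a} argument builds the dual certificate differently (via a more intricate analysis of the least-squares certificate and moment estimates) and does not use the batch-by-batch iterative construction you present. This does not affect the validity of your sketch as \emph{a} proof of the stated bound, but if the intent is to summarize the proof of the cited theorem specifically, the mechanism in step~(b) is not the one actually used there.
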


In particular, if the underlying matrix has bounded rank (i.e. $k = O(1)$), the
number of sampled entries required for recovery reduces to $O(\mu^4 n \log^2 n)$.

Based on Theorem \ref{thm:mc}, we now show that the notion of incoherence 
can be connected to the relative sizes of the clusters in signed networks.  
As a result,
by solving \eqref{eq:prob2}, we will show that we can recover the underlying signed network with high
probability
if there are no extremely small groups. 
To start, we define the {\it group imbalance} of a signed network as follows:
\begin{definition}[Group Imbalance]
  Let $\comp$ be the adjacency matrix of a complete $k$-weakly balanced
  network with $n$ nodes, and let $n_1, \dots, n_k$ be the sizes of the groups. 
  Group imbalance $\tau$ of $\comp$ is defined as 
  \begin{equation}
    \tau := \max_{i=1,\dots,k} \frac{n}{n_i}.
  \end{equation}
\end{definition}
By definition, $k \leq\tau\leq n$. 
Larger group imbalance $\tau$ indicates 
the presence of a very small group, which would intuitively make
recovery of the underlying network harder~(under uniform sampling).
For example, consider an extreme scenario that a $k$-weakly balanced 
network contains $n$ nodes, with two groups containing only one node each.
Then the adjacency matrix of this network has group imbalance $\tau = n$
with the following form: 
\begin{equation*}
  \comp=
  \begin{bmatrix}[r]
      1 & \cdots & \cdots & -1 & -1 \\
      \vdots & \ddots &      &  \vdots & \vdots \\
      \vdots &        & \ddots& -1 & -1 \\
      -1     & \cdots & -1    &  1 & -1 \\
      -1     & \cdots & -1    &  -1&  1 
    \end{bmatrix}.
\end{equation*}
However, without observing $\comp_{n,n-1}$ or $\comp_{n-1,n}$, it is impossible to
determine whether the last two nodes are in the same cluster, or whether each
of them belongs to an individual cluster.  When $n$ is very large, 
the probability of observing one 
of these two entries will be extremely small.  Therefore,
under uniform sampling of $O(n \log^2 n)$ entries, it is unlikely that any
matrix completion algorithm will be able to exactly recover this network.

Motivated by this example, we now analytically show that 
group imbalance $\tau$ determines the possibility of recovery.  We first show the connection
between $\tau$ and incoherence $\mu$.

\begin{theorem}[Incoherence of Signed Networks]
Let $\comp$ be the adjacency matrix of a complete $k$-weakly balanced network with group imbalance $\tau$. Then $\comp$ is $\tau$-incoherent.
\label{thm:incoherent}
\end{theorem}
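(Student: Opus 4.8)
The plan is to exhibit an explicit singular value decomposition of $\comp$ and read off the incoherence bound from the entries of $U$ and $V$. By Theorem~\ref{thm:low_rank}, $\comp$ has rank $1$ when $k \le 2$ and rank $k$ when $k > 2$, and in all cases the column space is spanned by the $k$ "group indicator" vectors $\bb_1,\dots,\bb_k$, where $\bb_i$ has $+1$ on the coordinates of group $S^{(i)}$ and $-1$ elsewhere. Since $\comp$ is symmetric, we may take $V = U$, so it suffices to bound $\max_{p,q}|U_{pq}|$ for an orthonormal basis $U$ of the column space. The key observation is that $\comp$ restricted to its column space is "block-constant": all columns indexed by nodes in the same group $S^{(i)}$ are identical (equal to $\bb_i$). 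Hence any singular vector $\bu$ of $\comp$ must itself be constant on each group — if $\bu = \sum_i \alpha_i \bb_i$ then $\bu$ takes a single value on all nodes of $S^{(i)}$. Writing $\bu$'s value on group $i$ as $v_i$, the entries of $\bu$ on group $i$ are all equal to $v_i$, and since $\|\bu\|_2 = 1$ we get $\sum_i n_i v_i^2 = 1$, so $v_i^2 \le 1/n_i \le \tau/n$ for every $i$, giving $\max_p |\bu_p| \le \sqrt{\tau}/\sqrt{n}$. This is exactly the incoherence bound \eqref{eq:incoherence} with $\mu = \tau$.

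First I would handle the case $k > 2$ cleanly. Here $\{\bb_1,\dots,\bb_k\}$ are linearly independent (by Theorem~\ref{thm:low_rank}), so the column space is $k$-dimensional and consists precisely of all vectors that are constant on each group. Take any orthonormal basis $\{\bu^{(1)},\dots,\bu^{(k)}\}$ of this space — e.g. the left singular vectors from the SVD of $\comp$. Each $\bu^{(r)}$ is constant on each group by the argument above, so the norm bound $\sum_{i} n_i (v^{(r)}_i)^2 = 1$ forces every entry of $\bu^{(r)}$ to have magnitude at most $\sqrt{\tau/n}$, since $n/n_i \le \tau$. Because $\comp$ is symmetric we may choose $V = U$ in the SVD, so the same bound holds for $V$, and $\comp$ is $\tau$-incoherent. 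For $k \le 2$, $\comp$ has rank $1$ with the single left/right singular vector proportional to $\bb_1$ (when $k=1$, $\bb_1 = \mathbf{1}$; when $k=2$, $\bb_1 = -\bb_2$); normalizing, its entries are $\pm 1/\sqrt{n}$, so the bound $\sqrt{\tau/n}$ holds trivially since $\tau \ge 1$.

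The only subtlety — and the main thing to argue carefully rather than the main obstacle — is the claim that \emph{every} singular vector is constant on groups, which relies on the fact that the column space of $\comp$ equals $\mathrm{span}\{\bb_1,\dots,\bb_k\}$ and that each $\bb_i$ is constant on every group (not just on $S^{(i)}$ — it is constant $= -1$ on all other groups too, but the relevant point is just that it takes one value per group). Since the left singular vectors span the column space, each is a linear combination of the $\bb_i$ and therefore inherits the group-constant property. Everything else is the elementary norm computation $\sum_i n_i v_i^2 = 1 \Rightarrow |v_i| \le 1/\sqrt{n_i}$ combined with the definition $\tau = \max_i n/n_i$. I expect no serious obstacle here; the argument is short once the group-constancy of singular vectors is isolated as the pivotal structural fact.
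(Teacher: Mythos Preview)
Your proposal is correct and follows essentially the same route as the paper: establish that every (left) singular vector of $\comp$ is constant on each group, then use $\sum_i n_i v_i^2 = 1$ to conclude $|v_i| \le 1/\sqrt{n_i} \le \sqrt{\tau}/\sqrt{n}$. The only cosmetic difference is that the paper derives group-constancy directly from the eigenvalue equation (identical rows $\comp_{i,:}=\comp_{j,:}$ force $u_i = \comp_{i,:}\bu/\lambda = u_j$), whereas you derive it from the singular vectors lying in $\mathrm{span}\{\bb_1,\dots,\bb_k\}$; both arguments are equivalent and equally short.
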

\begin{proof}
  Recall from Definition \ref{def:incoherence} that $\mu$ is defined as the maximum absolute value in the (normalized) singular vectors of $\comp$, which are identical to its eigenvectors (up to signs), since $\comp$ is symmetric.
  
  Let $\bu$ be any unit eigenvector of $\comp$ ($\|\bu\|_2=1$) 
  with eigenvalue $\lambda$.  
  Suppose $i$ and $j$ are in the same group, then the $i^{\text{th}}$ and $j^{\text{th}}$ 
  rows of $\comp$ are identical, i.e., $\comp_{i,:} = \comp_{j,:}$.
  As a result, the $i^{\text{th}}$ and $j^{\text{th}}$ elements of all eigenvectors will
  be identical (since $u_i = \comp_{i,:}\bu/\lambda= \comp_{j,:} \bu/\lambda = u_j$). 
  Thus, $\bu$ has the following form: 
  \begin{equation}
    \label{eq:eigen-same}
    \bu = [\underbrace{\alpha_1, \alpha_1, \dots, 
    \alpha_1}_{n_1}, \underbrace{\alpha_2,\dots,\alpha_2}_{n_2},\dots,\underbrace{\alpha_k,\dots,\alpha_k}_{n_k}]^T.
  \end{equation}
Because $\|\bu\|_2=1$,  $\sum_{i=1}^k n_i \alpha_i^2=1$,
  and so $n_i \alpha_i^2\leq 1$, $\forall i$, which implies
  $|\alpha_i|\leq 1/\sqrt{n_i}$, $\forall i$.  Thus,
 \begin{equation*}
    \max_{i} |u_i| = \max_i |\alpha_i| \leq \max_i\frac{1}{\sqrt{n_i}} = 
    \max_i\frac{\sqrt{n/n_i}}{\sqrt{n}} \leq \frac{\sqrt{\tau}}{\sqrt{n}}. 
  \end{equation*}
  Therefore,  $\comp$ is $\tau$-incoherent. 
\end{proof}

Putting together Theorems~\ref{thm:mc} and~\ref{thm:incoherent}, we now have the main theorem of this subsection:
\begin{theorem}[Recovery Condition for Signed Networks]
  \label{thm:recover}
  Suppose we observe edges $\obs_{ij}$, $(i,j) \in \Omega$, from an underlying
  $k$-weakly balanced 
  signed network $\comp$ with $n$ nodes, and suppose that the following assumptions hold:
  \begin{compactenum}[A.]
    \item $k$ is bounded ($k=O(1)$),
    \item the set of observed entries  $\Omega$ is uniformly sampled, and 
\item number of samples is sufficiently large, i.e.
  $|\Omega|\geq C \tau^4 n\log^2 n$, where $\tau$ is the group imbalance of the underlying complete network $\comp$.
\end{compactenum}
Then  $\comp$ can be perfectly recovered by solving \eqref{eq:prob2}, with probability at least $1-n^{-3}$.  
\end{theorem}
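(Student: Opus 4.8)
The plan is to chain together the three structural facts that have already been established in this section: the low-rank property of weakly balanced networks (Theorem~\ref{thm:low_rank}), the incoherence bound in terms of group imbalance (Theorem~\ref{thm:incoherent}), and the generic matrix-completion recovery guarantee (Theorem~\ref{thm:mc}). Assumptions A, B, and C are stated precisely so that the hypotheses of Theorem~\ref{thm:mc} are met once these facts are in hand, so the proof is essentially a bookkeeping argument.

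First I would apply Theorem~\ref{thm:low_rank} to conclude that $\comp$, being the adjacency matrix of a complete $k$-weakly balanced network, has rank $r \le k$ (indeed $r=1$ when $k\le 2$ and $r=k$ when $k>2$); by assumption A this means $r = O(1)$. Next I would apply Theorem~\ref{thm:incoherent} to conclude that $\comp$ is $\tau$-incoherent, where $\tau$ is its group imbalance. Here I would note that since $\comp$ is symmetric, its left and right singular-vector matrices coincide up to signs, so the single bound produced in the proof of Theorem~\ref{thm:incoherent} simultaneously controls $\max_{i,j}|U_{ij}|$ and $\max_{i,j}|V_{ij}|$, exactly as Definition~\ref{def:incoherence} demands.

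Then I would invoke Theorem~\ref{thm:mc} with the matrix $\comp$, its true rank $r$, and incoherence parameter $\mu = \tau$. That theorem asks for $C\,\mu^4 n\, r^2 \log^2 n$ entries sampled uniformly at random; assumption B provides the uniform sampling, and since $r \le k = O(1)$ we have $r^2 = O(1)$, so $C\,\tau^4 n\, r^2 \log^2 n \le C'\,\tau^4 n \log^2 n$ for an absolute constant $C'$. Absorbing this into the constant of assumption C (harmless, since $C$ there may be taken as large as we wish), the hypothesis $|\Omega| \ge C\tau^4 n \log^2 n$ guarantees enough samples, and Theorem~\ref{thm:mc} then yields that $\comp$ is the unique optimizer of \eqref{eq:prob2} with probability at least $1 - n^{-3}$. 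Since solving \eqref{eq:prob2} therefore returns exactly $\comp$, the underlying network is perfectly recovered, which is the claim.

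I do not expect a genuine obstacle here; all the real work lives in Theorems~\ref{thm:low_rank}, \ref{thm:incoherent}, and \ref{thm:mc}. The only points requiring a little care are the constant bookkeeping — the $k^2$ factor in Theorem~\ref{thm:mc} must be swallowed by the constant of assumption C using $k = O(1)$ — and the remark that Theorem~\ref{thm:mc} is phrased for a matrix of rank exactly $k$ whereas $\comp$ may have rank $r < k$; this only helps us, because the sample-complexity bound is increasing in the rank, so the threshold stated with $k$ in assumption C is more than sufficient.
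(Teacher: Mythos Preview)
Your proposal is correct and matches the paper's approach exactly: the paper presents Theorem~\ref{thm:recover} as an immediate consequence of putting together Theorems~\ref{thm:mc} and~\ref{thm:incoherent} (with Theorem~\ref{thm:low_rank} supplying the rank bound), offering no further argument beyond that. Your write-up is in fact more detailed than the paper's, which simply states ``Putting together Theorems~\ref{thm:mc} and~\ref{thm:incoherent}, we now have the main theorem of this subsection'' and then gives the statement without proof.
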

In particular, if $n_i/n$ is lower bounded so that $\tau$ is a constant, then 
we only need $O(n\log^2 n)$ observed entries to {\em exactly} recover the complete 
$k$-weakly balanced network. 

\subsection{Sign Prediction via Singular Value Projection}
\label{subsec:SVP}
Though the convex optimization problem \eqref{eq:prob2} mentioned in Section \ref{subsec:MC} 
can be solved to yield the global optimum,
the computational cost of solving it might be too prohibitive in practice.  Therefore, recent research
provides more efficient algorithms to approximately solve \eqref{eq:prob1} \citep{Cai10a, Jain10a}.
In particular, we consider the Singular Value Projection (SVP) algorithm proposed by \cite{Jain10a} which
attempts to solve the low-rank matrix completion problem in an efficient manner.  The SVP algorithm considers
a robust formulation of \eqref{eq:prob1} as follows:
\begin{align}
  \text{minimize} \ & \ \|\proj(\var) - \obs\|^2_F \nonumber\\
  \text{ s.t. } \ &  \  \text{rank}(X) \leq k ,
  \label{eq:SVP_obj}
\end{align}
where the projection operator $\proj$ is defined as:
\begin{equation*}
  (\proj(\var))_{ij} = \begin{cases}
    \var_{ij}, &\text{if } (i,j) \in \Omega \\
    0, &\text{otherwise}.
  \end{cases}
\end{equation*}
Note that the objective \eqref{eq:SVP_obj} recognizes that there might be some
violations of weak balance in the observations $\obs$, and minimizes the 
squared-error instead of trying to enforce exact equality as in \eqref{eq:prob2}.
In an attempt to optimize \eqref{eq:SVP_obj}, the SVP algorithm
iteratively calculates the gradient descent update $\hat{\var}^{(t)}$ of the current solution $\var^{(t)}$, 
and projects $\hat{\var}^{(t)}$ onto the non-convex set of matrices whose rank $\leq k$ using SVD.  After the optimal $\opt$ 
of \eqref{eq:SVP_obj} is derived, one can take the sign of each entry of $\opt$ to obtain an
approximate solution of \eqref{eq:prob1}.
The SVP procedure for sign prediction is summarized in Algorithm \ref{alg:SVP}.

\begin{algorithm}
  \caption{Sign Prediction via Singular Value Projection (SVP)}
  \label{alg:SVP}
  \KwIn{Adjacency matrix $\obs$, rank $k$, tolerance $\epsilon$, max iteration $t_{\max}$, step size $\eta$}
  \KwOut{$\opt$, the completed low-rank matrix that approximately solves \eqref{eq:prob1}}
  \begin{compactenum}
    \item Initialize $\var^{(0)} \leftarrow 0$ and $t \leftarrow 0$.
    \item Do
    \begin{compactitem}
	\item $\hat{\var}^{(t)} \leftarrow \var^{(t)} - \eta(\proj(\var^{(t)}) - \obs)$
	\item $[U_k, \Sigma_k, V_k] \leftarrow$ Top $k$ singular vectors and singular values of $\hat{\var}^{(t)}$
	\item $\var^{(t+1)} \leftarrow U_k \Sigma_k {V_k}^T$
	\item $t \leftarrow t+1$
    \end{compactitem}
    while $\|\proj(\var^{(t)}) - A\|_F^2 > \epsilon$ and $t < t_{\max}$
    \item $\opt \leftarrow \sign(\var^{(t)})$
\end{compactenum}
\end{algorithm}

In addition to its efficiency, experimental evidence provided by
\cite{Jain10a} suggests that if observations are uniformly distributed,
then all iterates of the SVP algorithm are $\mu$-incoherent, 
and if this occurs, then it can be shown that 
the matrix completion problem \eqref{eq:prob1} can be exactly solved by SVP. 
In Section \ref{sec:experiment}, we will see 
that SVP performs well in recovering weakly balanced networks.

\subsection{Sign Prediction via Matrix Factorization}
\label{subsec:MF}
A classical limitation of both convex relaxation and SVP is that 
they require uniform sampling
to ensure good performance.
However, this assumption is violated in most real-life
applications, and so these approaches do not work very well in practice.
In addition, both methods cannot scale to very large datasets.
Thus, we use a gradient based matrix factorization approach as an approximation to the signed network completion problem. In Section \ref{sec:experiment}, we will see that this matrix
factorization approach can boost the accuracy of estimation as well as scale to large real networks.

In the matrix factorization approach, we consider the following problem: 
 \begin{equation}
  \min_{W,H\in \real^{n \times k}} \sum_{(i,j)\in \Omega} 
  (\obs_{ij}-(WH^T)_{ij})^2 + \lambda \|W\|_F^2 + \lambda\|H\|_F^2. 
  \label{eq:mf_prob1}
\end{equation}

Although problem~\eqref{eq:mf_prob1} is non-convex, 
it is widely used in practical collaborative filtering applications as
the performance is competitive or better as compared to trace-norm minimization, while 
scalability is much better. For example, to solve the Netflix problem, 
\eqref{eq:mf_prob1} has been applied with a fair amount of success to factorize datasets with  
100 million ratings \citep{Koren09a}.  

Nevertheless, there is an issue when modeling signed networks using \eqref{eq:mf_prob1}:
the squared loss in the first term of \eqref{eq:mf_prob1} tends to force 
entries of $WH^T$ to be either $+1$ or $-1$.  However,
what we care about in this completion task is the consistency between $\sign((W H^T)_{ij})$
and $\sign(\obs_{ij})$ rather than their difference.
For example, $(WH^T)_{ij}=10$ should have zero loss when $\obs_{ij}=+1$ 
if only the signs are important. 

To resolve this issue, instead of using the squared loss, we use a loss function that only penalizes
the inconsistency in {\it sign}.  More precisely, objective \eqref{eq:mf_prob1} 
can be generalized as:
\begin{equation}
  \min_{W,H\in \real^{n \times k}} 
  \sum_{(i,j)\in \Omega} \loss (\obs_{ij}, (WH^T)_{ij}) + \lambda \|W\|_F^2 + 
  \lambda \|H\|_F^2. 
  \label{eq:mf_prob2}
\end{equation}
In order to penalize inconsistency of sign, we can change the loss function to be the sigmoid or squared-hinge loss:
\begin{align}
  \loss_{\text{sigmoid}}(x, y) &= 1/(1+\exp(xy)), \nonumber \\
  \loss_{\text{square-hinge}}(x, y) &= (\max(0, 1-xy))^2.
  \label{eq:other_loss}
\end{align}
In Section \ref{sec:experiment}, we will see that applying 
sigmoid or squared-hinge loss functions slightly improves prediction accuracy.

\subsection{Time Complexity of Sign Prediction Methods}
There are two main optimization techniques for 
solving \eqref{eq:mf_prob2} for large-scale data: Alternating Least Squares (ALS) and 
Stochastic Gradient Descent (SGD) \citep{Koren09a}. ALS solves the squared loss problem 
\eqref{eq:mf_prob1} by alternately minimizing $W$ and $H$. 
When one of $W$ or $H$ is fixed, the optimization problem becomes a 
least squares problem with 
respect to the other variable, so that we can use well developed least squares 
solvers to solve each subproblem. 
Given an $n \times n$ observed matrix with $m$ observations, it requires 
$O(mk^2)$ operations to form the Hessian matrices, 
and $O(nk^3)$ operations to solve each least squares subproblem. Therefore, the time complexity of ALS is 
$O(t_1(mk^2+nk^3))$ where $t_1$ is the number of iterations.

However, ALS can only be used when the loss function is the squared loss. To solve 
the general form \eqref{eq:mf_prob2} with various loss functions, we use
stochastic gradient descent~(SGD).
In SGD, for each iteration, we pick an observed entry $(i,j)$ at random, 
and only update the $i^{\text{th}}$ row $\bw^T_i$ of $W$ and the $j^{\text{th}}$ 
row $\bh^T_j$ of $H$.
The update rule for $\bw^T_i$ and $\bh^T_j$ is given by:
\begin{align}
  \label{eq:sgd_update}
  \bw^T_i \leftarrow \bw^T_i - \eta \left(\frac{\partial \loss(\obs_{ij},(W 
  H^T)_{ij})}{\partial \bw^T_i} + \lambda \bw^T_i\right), \nonumber \\
  \bh^T_j \leftarrow \bh^T_j - \eta \left(\frac{\partial \loss(\obs_{ij},(W 
  H^T)_{ij})}{\partial \bh^T_j} + \lambda \bh^T_j\right),
\end{align}
where $\eta$ is a small step size.
Each SGD update costs $O(k)$ time, and the total cost of sweeping through all
the entries is $O(mk)$.
Therefore, the time complexity for SGD is $O(t_2mk)$, where $t_2$ is 
the number of iterations taken by SGD to converge. Notice that although the 
complexity of SGD is linear in $k$, it usually takes many more iterations to 
converge compared with ALS, i.e., $t_2 > t_1$.  

On the other hand, all cycle-based algorithms introduced in Section \ref{sec:longer_cycles}
require time at least $O(nm)$,
because they involve
$n \times n$ sparse matrix multiplication steps in model construction. 
In particular, in case of the most effective cycle-based method HOC,
for features with length $\ell$, the number of features is exponential in $\ell$
even if we reduce number of features by ignoring the directions 
(see Section \ref{sec:reduce_features} for details).
Therefore, 
  the time complexity for HOC methods will be $O(2^{\ell}nm)$, which
  is much more expensive than both ALS and SGD as shown in  
Table \ref{tab:complexity} (note that in real large-scale
social networks, $m > n \gg t_1, t_2, k$).

\begin{table}
\centering
\begin{tabular}{c|c|c}
  HOC & LR-ALS & LR-SGD  \\
  \hline
  $O(2^{\ell}nm)$ & $O(t_1(nk^3+mk^2))$ & $O(t_2 km)$\\ 
\end{tabular}
\caption{Time complexity of cycle-based method (HOC) and low rank modeling methods (LR-ALS, LR-SGD).  The HOC time only considers feature computation time.  The time for low rank modeling consists of total model construction time.}
\label{tab:complexity}
\end{table}

\subsection{Clustering Signed Networks}
\label{sec:clustering}
In this section, we see how to take advantage of the low-rank structure of
signed networks to find clusters.  
Based on weak balance theory, the general goal of clustering for signed graphs
is to find a $k$-way partition such that most within-group edges are positive and
most between-group edges are negative. 
One of the state-of-the-art algorithms for clustering signed networks, proposed by \cite{Kunegis10a},
extends spectral clustering by using the signed Laplacian matrix.
Given a
partially observed signed network $\obs$, the signed Laplacian $\bar{L}$ is defined
as $\bar{D}-\obs$, where $\bar{D}$ is a diagonal matrix such that $\bar{D}_{ii} = \sum_{j\neq i}|\obs_{ij}|$.
By this definition, the clustering of signed networks can be
derived by computing the top $k$ eigenvectors
of $\bar{L}$, say $U \in \real^{n \times k}$, 
and subsequently running the $k$-means algorithm on $U$ 
to get the clusters.  This procedure is analogous to the standard spectral
clustering algorithm on unsigned graphs; the only difference being that the usual graph
Laplacian is replaced by the signed Laplacian.



\begin{algorithm}[t]
  \caption{Clustering with Matrix Completion}
  \label{alg:clustering}
  \KwIn{Adjacency matrix $\obs$, number of clusters $k$}
  \KwOut{Cluster indicators}
  \begin{compactenum}
    \item
  $\opt \leftarrow \text{Completion}(\obs)$ with any matrix completion algorithm.  
\item
  $U \leftarrow \text{Top $k$ eigenvectors of $\opt$}$.
\item
  Run any feature-based clustering algorithm on $U$.
\end{compactenum}
\end{algorithm}

However, there is no theoretical guarantee that the use of the signed Laplacian can recover the true groups in a weakly-balanced signed network.
To overcome this theoretical defect,  we now give an algorithm which, under certain conditions, is able to recover the real structure 
even with partial observations.  The key idea is as follows. Since, in Theorem \ref{thm:low_rank}, we proved 
that $k$-weakly balanced graphs have rank up to $k$, we can obtain good clustering by first running a matrix completion algorithm, say trace-norm minimization, on 
$\obs$. The following theorem shows that the eigenvectors of the completed matrix 
possess a desirable property.
\begin{theorem}
  \label{thm:clustering}
  Let $\obs_{ij}$, $(i,j) \in \Omega$, be entries observed from a complete
  $k$-weakly balanced network $\comp$ with $n$ nodes,
  and assume that the solution of \eqref{eq:prob2} is $\opt$ with eigenvectors 
  $U=[\bu_1, \bu_2, \cdots, \bu_k]$.  If the assumptions in Theorem \ref{thm:recover} 
  are all satisfied, then $U_{i,:}=U_{j,:}$ iff 
  $i$ and $j$ are in the same cluster in $\comp$
  with probability at least $1-n^{-3}$.
\end{theorem}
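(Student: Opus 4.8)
The plan is to leverage Theorem~\ref{thm:recover}: under the stated assumptions, solving \eqref{eq:prob2} recovers $\comp$ exactly with probability at least $1-n^{-3}$, so $\opt = \comp$ on that event. It therefore suffices to prove the deterministic statement: if $U = [\bu_1,\dots,\bu_k]$ are the top $k$ eigenvectors of the complete $k$-weakly balanced matrix $\comp$, then $U_{i,:} = U_{j,:}$ iff $i$ and $j$ lie in the same cluster. I would condition on the recovery event for the rest of the argument and work purely with $\comp$.

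The ``if'' direction is essentially the computation already done in the proof of Theorem~\ref{thm:incoherent}: if $i$ and $j$ are in the same group, then $\comp_{i,:} = \comp_{j,:}$, hence for every eigenvector $\bu$ with nonzero eigenvalue $\lambda$ we get $u_i = \comp_{i,:}\bu/\lambda = \comp_{j,:}\bu/\lambda = u_j$. By Theorem~\ref{thm:low_rank}, $\comp$ has rank exactly $k$ (or rank $1$ when $k\le2$, in which case the nonzero eigenvalue already separates the two groups), so all $k$ top eigenvectors correspond to nonzero eigenvalues, giving $U_{i,:} = U_{j,:}$. (A small wrinkle: if $\comp$ has repeated nonzero eigenvalues, $U$ is only defined up to an orthogonal rotation within eigenspaces, but the relation $U_{i,:}=U_{j,:}$ is preserved under such rotations, so the statement is well-posed; I would note this explicitly.)

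For the ``only if'' direction I would argue the contrapositive: if $i$ and $j$ are in different groups, then $U_{i,:} \neq U_{j,:}$. The cleanest route is to exhibit, among the top $k$ eigenvectors, one that distinguishes them. From Theorem~\ref{thm:low_rank} the column space of $\comp$ is spanned by the $k$ group-indicator vectors $\bb_1,\dots,\bb_k$ (the $\pm1$ patterns), and each eigenvector with nonzero eigenvalue lies in this span; conversely, by \eqref{eq:eigen-same}, every such eigenvector is constant on each group, so it is determined by a vector $(\alpha_1,\dots,\alpha_k)\in\real^k$. The map sending an eigenvector to its ``group profile'' $(\alpha_1,\dots,\alpha_k)$ is linear and injective on the top-$k$ eigenspace (distinct eigenvectors that agree on all groups are equal), and its image is a $k$-dimensional subspace of $\real^k$, i.e.\ all of $\real^k$. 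Hence the rows of $U$, restricted to one representative per group, span $\real^k$, so in particular no two distinct group profiles can coincide: if groups $p\neq q$ had the same profile, the $k$ profile vectors would span at most a $(k-1)$-dimensional space, contradicting surjectivity. Therefore $U_{i,:}\neq U_{j,:}$ whenever $i,j$ are in different groups. (For $k\le2$ this is immediate since $\bb_1 = -\bb_2$.)

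The main obstacle is the ``only if'' direction, and specifically making the injectivity/dimension-counting argument airtight in the presence of possible eigenvalue multiplicities and the resulting non-uniqueness of $U$. I would handle this by phrasing everything in terms of the row-space structure rather than a fixed choice of eigenvectors: the key invariant is that $\{U_{i,:} : i \in [n]\}$ takes exactly one value per group and these $k$ values are affinely (in fact linearly) independent, which follows from $\mathrm{rank}(\comp)=k$ together with the block-constant form \eqref{eq:eigen-same}. Once this is established the theorem follows by combining it with Theorem~\ref{thm:recover} as above.
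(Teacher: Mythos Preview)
Your proposal is correct and, for the ``if'' direction, matches the paper exactly: both invoke Theorem~\ref{thm:recover} to get $\opt=\comp$ on a high-probability event and then use the block-constant eigenvector form \eqref{eq:eigen-same} from the proof of Theorem~\ref{thm:incoherent}.

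For the ``only if'' direction you take a somewhat different route. The paper gives a one-line argument: if $i$ and $j$ are in different clusters then $\comp_{i,:}\neq\comp_{j,:}$, and since $\comp = U\Lambda U^T$ (with $\Lambda$ diagonal and nonsingular on the top-$k$ block), $\comp_{i,:}=U_{i,:}\Lambda U^T$, so $U_{i,:}=U_{j,:}$ would force $\comp_{i,:}=\comp_{j,:}$, a contradiction. Your dimension-counting argument (the group-profile map is injective on the rank-$k$ eigenspace, hence surjective onto $\real^k$, hence the $k$ profiles are independent and in particular distinct) reaches the same conclusion and has the virtue of being explicitly robust to eigenvalue multiplicities and the non-uniqueness of $U$, a point the paper does not address. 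The paper's argument is shorter and arguably more transparent, but yours makes the well-posedness issue explicit; either is fine here.
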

\begin{proof}
  From Theorem \ref{thm:recover}, we know the recovered matrix $\opt$ will be $\comp$
  with probability $\geq 1-n^{-3}$ if the assumptions hold.  
  Suppose $\bu_1,\dots,\bu_k$ are the $k$ eigenvectors of $\opt$.
  From the proof of Theorem \ref{thm:incoherent}, the eigenvectors will have 
  the form in~\eqref{eq:eigen-same}, which means $U_{i,:}=U_{j,:}$ if $i$ and 
  $j$ are in the same cluster. 
  Furthermore, when $i$ and $j$ are in different clusters, 
  $\comp_{i,:} \neq \comp_{j,:}$, so $U_{i,:}$ cannot equal $U_{j,:}$. 
  This proves the theorem. 
\end{proof}

Following this theorem, the true clusters can be identified from the eigenvectors of $\opt$ when
the assumptions in Theorem \ref{thm:recover} hold.  Therefore, perfect clustering 
is guaranteed in this scenario. 

More generally, we can use any matrix completion method 
to complete $\obs$.  For example, if we take SVP
as the matrix completion approach, we can obtain a perfect clustering 
result if all iterates of the algorithm are $\mu$-incoherent.
Under the latter condition, SVP can recover $\comp$ exactly, so
the property of eigenvectors in Theorem \ref{thm:clustering} can again be used.
Our clustering algorithm that uses matrix completion is summarized in Algorithm~\ref{alg:clustering}. 

It should not be surprising that our clustering algorithm is superior to
(signed) spectral clustering.  In some sense, our approach can be viewed 
as a spectral method, except that it first fills in the missing links from
the training data by doing matrix completion.  This step is simple yet crucial
in signed networks as it overcomes the sparsity of the network.
We will see that our clustering algorithm outperforms the 
(signed) spectral clustering method in Section~\ref{sec:experiment}.



\section{Experimental Results}
\label{sec:experiment}
We now present experimental results for sign prediction and clustering using our proposed methods.
For sign prediction, we show that local methods, such as MOI and HOC, yield better
predictive accuracy if longer cycles are considered.  In addition, 
if we consider the global low-rank structure of the network, 
   prediction via matrix factorization further outperforms local methods
in terms of both accuracy and running time.  
For clustering, we show that clustering via low rank
model gives us better results than clustering via signed Laplacian.  These results suggest
the usefulness of the global perspective of social balance.

\subsection{Description of Data Sets}
In our experiments, we consider both synthetic and real-life datasets. 
To construct synthetic networks, 
we first consider a complete $k$-weakly balanced network $\comp$, and sample
some entries from $\comp$ to form the partially observed network $\obs$,
with three controlling parameters: \textit{sparsity} $s$, \textit{noise level} $\epsilon$
and \textit{sampling process} $\dist$.
The sparsity $s$ controls the percentage of edges we sample from $\comp$.
The noise level $\epsilon$ specifies the probability that the sign of a sampled
edge is flipped. The sampling process $\dist$ specifies how the sampled entries
are distributed.  In particular, we will focus on two sampling distributions: 
uniform and power-law distribution, denoted as $\uni$ and $\power$ respectively.
Thus, a partially observed network $\obs$ can be described as 
$\obs = \comp(s,\epsilon,\dist)$.

We also consider three real-life signed networks: Epinions, Slashdot, Wikipedia.
Epinions is a consumer review network in which users can either trust or distrust other consumer's reviews.
Slashdot is a discussion web site in which users can recognize others as friends or enemies.
Wikipedia is a who-vote-to-whom network in which users can vote for or against others 
to be administrators in Wikipedia.
These three datasets have previously been used as benchmarks for sign prediction \citep{Leskovec10a,Chiang11a}. 
Table \ref{tab:real_data} shows the statistics of these real signed networks.

\begin{table}
\centering
\caption{Network Statistics}
\label{tab:real_data}
\begin{tabular}{c|cccc}
& \# nodes& \# edges & + edges & - edges \\
    \hline
    Wikipedia & 7,065 & 103,561 & 78.8\% & 21.2\%  \\
      Slashdot & 82,144  & 549,202 & 77.4\% & 22.6\% \\
      Epinions & 131,828  & 840,799 & 85.0\% & 15.0\% 
      \end{tabular}
\end{table}

\subsection{Evidence of Local and Global Patterns in Real Signed Networks}

We have seen that cycles in
signed networks exhibit structural balance (see Definition
\ref{def:balanced_cycle}) according to balance theory, and that we can make use of cycles for predictions 
(see Sections \ref{sec:triads} and \ref{sec:longer_cycles}). Indeed, cycles tend to be balanced in real-life networks. In all three real networks we consider, \cite{Leskovec10b}
found that balanced triads (i.e. 3-cycles) are much more likely to be observed than unbalanced triads. Our study supports that the local patterns (i.e. $\ell$-cycles) of the three networks tend to be balanced. For each network $A$, we consider all patterns of 3-cycles and 4-cycles  in the symmetric network $\sign(A+A^T)$. For convenience, we use $C_{\ell i}$ to denote the $i^{\text{th}}$ pattern of an $\ell$-cycle. Thus, we simply consider all $C_{\ell i}$ for $\ell = 3, 4$ for each network. The patterns of these cycles are shown in Table~\ref{tab:local_evidence}. We first calculate the probability that the configuration of a given $\ell$-cycle is $C_{\ell i}$, denoted $P(C_{\ell i})$.  We then randomly shuffle the
sign of edges in the network and calculate the same probability on the shuffled network, which is denoted $P_0(C_{\ell i})$. Thus $P_0(C_{\ell i})$ can be viewed as the (expected) probability that $C_{\ell i}$ is observed if the sign of edges has no particular pattern.  
With the two probabilities, we calculate the ``surprise" of $C_{\ell i}$ that measures how significantly $C_{\ell i}$ appears more or less than expected.  Formally, the surprise of
$C_{\ell i}$, denoted $S(C_{\ell i})$, is defined as:
\begin{equation*}
  S(C_{\ell i}) := \frac{\Delta_\ell P(C_{\ell i}) - \Delta_\ell P_0(C_{\ell i})}{\sqrt{\Delta_\ell P_0(C_{\ell i})(1-P_0(C_{\ell i})})},
\end{equation*}
where $\Delta_\ell$ is number of $\ell$-cycles in the network.  
The above quantity is basically the number of standard deviations that 
the observed value of $C_{\ell i}$ differs from the expected value of $C_{\ell i}$ in the shuffled network.  
See \citep{Leskovec10b} for more discussions.

Table \ref{tab:local_evidence} shows the real probability, the expected probability,
and the surprise value of each $C_{\ell i}$ in three networks.  From the surprise values,
we observe that
cycles with all positive edges (i.e. $C_{31}$, $C_{41}$) are far more than expected, 
and cycles with one negative edge (i.e. $C_{32}$, $C_{42}$) are far less than
expected.  The observations suggest that the number of both balanced/unbalanced patterns are
significantly larger/smaller than expected when the cycle contains many 
positive edges.  Readers might notice that some balanced cycles have
large negative surprise values (for example, $C_{44}$ in Epinions).
However, in both real and shuffled networks, the fraction of such cycles 
are actually quite similar.  The negative value of surprise is 
amplified by large number of observations of $\ell$-cycles
(for example, $\Delta_4 = 6.85 \times 10^8$).  Furthermore, we also
calculate these statistics on all balanced 3 and 4-cycles as shown in the last
two rows in Table \ref{tab:local_evidence}.  Both the
difference between $P(C)$ and $P_0(C)$ and the surprise value of balanced cycles
are quite large.  Overall, we find that local balanced patterns are somewhat significant.

\begin{table}
\centering
\caption{Statistics of balanced and unbalanced $\ell$-cycles, $\ell = 3, 4$
  (notice that $\sum_i P(C_{\ell i}) = \sum_i P_0(C_{\ell i}) = 1$ due to the property of probability).
The first 6 cycles are balanced and the last 4 cycles are unbalanced.  
The last two rows show that balanced 3-cycles and 4-cycles are much more
than expected.}
\label{tab:local_evidence}
{\footnotesize
\begin{tabular}{c|ccr|ccr|ccr}
  
  & \multicolumn{3}{c}{Epinions} & \multicolumn{3}{c}{Slashdot} 
  & \multicolumn{3}{c}{Wikipedia} \\
  Type of cycle 
   & \footnotesize $P(C_{\ell i})$ & \footnotesize$P_0(C_{\ell i})$ & \footnotesize$S(C_{\ell i})$ 
   & \footnotesize $P(C_{\ell i})$ & \footnotesize$P_0(C_{\ell i})$ & \footnotesize$S(C_{\ell i})$ 
   & \footnotesize$P(C_{\ell i})$ & \footnotesize$P_0(C_{\ell i})$ & \footnotesize$S(C_{\ell i})$ 
   \\ \hline

  $C_{31}: +++\quad$ 
   & 0.8259 & 0.5754 & 1107.0
   & 0.7301 & 0.4502 & 425.2
   & 0.6996 & 0.4806 & 335.4 
  \\
  $C_{33}: +--\quad$
   & 0.0791 & 0.0706 & 72.3
   & 0.1364 & 0.1260 & 23.5
   & 0.0840 & 0.1105 & -64.7
  \\
  $C_{41}: ++++$
   & 0.7538 & 0.4777 & 14464.7
   & 0.6723 & 0.3435 & 5120.8
   & 0.6080 & 0.3757 & 3557.6
  \\
  $C_{43}: ++--$
   & 0.0911 & 0.0787 & 1210.6
   & 0.1127 & 0.1286 & -352.1
   & 0.1007 & 0.1155 & -344.1
  \\
  $C_{44}: +-+-$
   & 0.0065 & 0.0393 & -4418.5
   & 0.0138 & 0.0645 & -1528.0
   & 0.0139 & 0.0578 & -1396.4
  \\
  $C_{46}: ----$
   & 0.0103 & 0.0008 & 8722.8
   & 0.0263 & 0.0030 & 3147.7
   & 0.0054 & 0.0022 & 505.4
  \\
  \hline

  $C_{32}: ++-\quad$ 
   & 0.0834 & 0.3493 & -1218.4
   & 0.1125 & 0.4111 & -458.7
   & 0.2052 & 0.3987 & -302.5
  \\
  $C_{34}: --- \quad$
   & 0.0117 & 0.0047 & 220.9
   & 0.0211 & 0.0127 & 56.9
   & 0.0013 & 0.0102 & 8.5
  \\

  $C_{42}: +++-$
   & 0.1174 & 0.3875 & -14508.8
   & 0.1413 & 0.4211 & -4191.5
   & 0.2473 & 0.4167 & -2548.5
  \\
  $C_{45}: +---$
   & 0.0208 & 0.0160 & 1017.7
   & 0.0337 & 0.0392 & -212.0
   & 0.0247 & 0.0320 & -309.3
  \\ \hline \hline
  \footnotesize Balanced 3-cycles
  & 0.9050 & 0.6459 & 1182.9
  & 0.8665 & 0.5763 & 443.9
  & 0.7835 & 0.5911 & 299.6 
  \\
  \footnotesize Balanced 4-cycles 
  & 0.8617 & 0.5965 & 14147.8
  & 0.8250 & 0.5397 & 4234.7
  & 0.7280 & 0.5513 & 2635.6
\end{tabular}
}
\end{table}

On the other hand, in Section \ref{sec:low_rank_model}, we have seen that low rank structure
emerges when we theoretically examine weakly balanced networks.  
We now show that real networks tend to exhibit low-rank structure to 
a much greater extent compared to random networks.
As a baseline, for each real network we create two corresponding random networks for comparison:
the first one is the (symmetric) ER network generated from Erd\"{o}s-R\'{e}nyi 
model \citep{Erdos60a} that preserves the sparsity and the
ratio of positive to negative edges of the compared real network.  The second
one is the shuffled network with the same network structure as the compared real network,
except that we randomly shuffle the sign of edges.

The experiment is conducted as follows.
We first derive the low-rank complete matrix $\opt$
by running matrix completion algorithm on the observed entries $\obs_{ij}$.
Then, we look at the relative error on the observed set $\Omega$:
\begin{equation}
  \text{err}_{\Omega} = \frac{\| W \circ (\opt - \obs) \|_F}{\| \obs \|_F}, 
\end{equation}
where $W_{ij} = 1$ if $(i,j) \in \Omega$ and $W_{ij} = 0$ otherwise, and $\circ$ denotes
element-wise multiplication.  Clearly, smaller $\text{err}_{\Omega}$ indicates better
approximation for the observed entries.

In our experiment, we choose matrix factorization approach for matrix completion, with
ranks $k = 1, 2, 4, 8, 16$ and $32$.  For each network (real networks
and their corresponding random networks), we complete the network with different $k$ values and compute $\text{err}_{\Omega}$.
The result is shown in Figure \ref{fig:real_data}.
Compared to the two random networks, the three real-life
networks achieve much smaller $\text{err}_{\Omega}$ for each small $k$.
This suggests that low-rank matrices provide a better approximation of
the observed entries for real-life networks, as compared to random networks.

\begin{figure*}
\centering
\subfloat[Epinions]{\includegraphics[width=.3\textwidth]{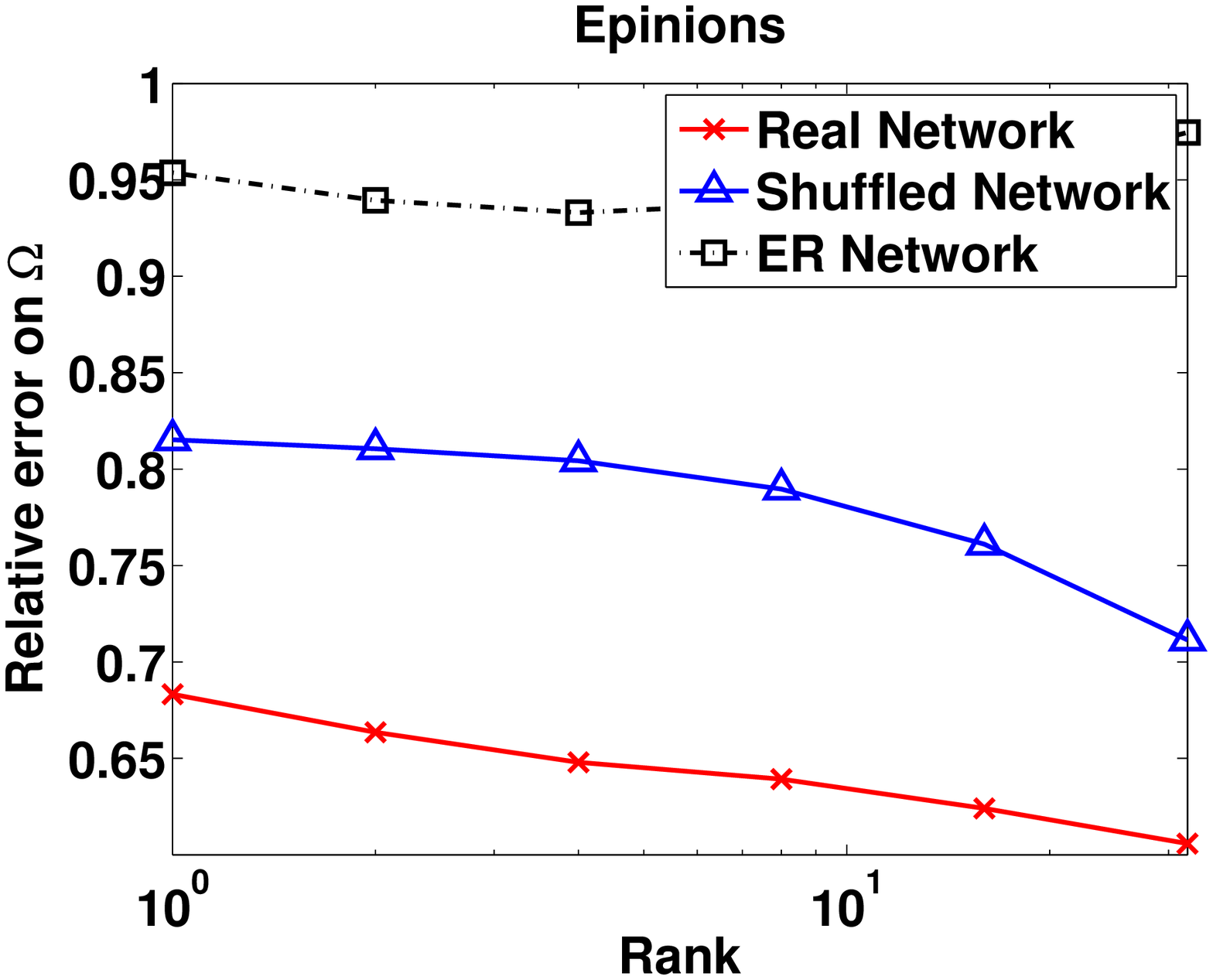}}
\subfloat[Slashdot]{\includegraphics[width=.3\textwidth]{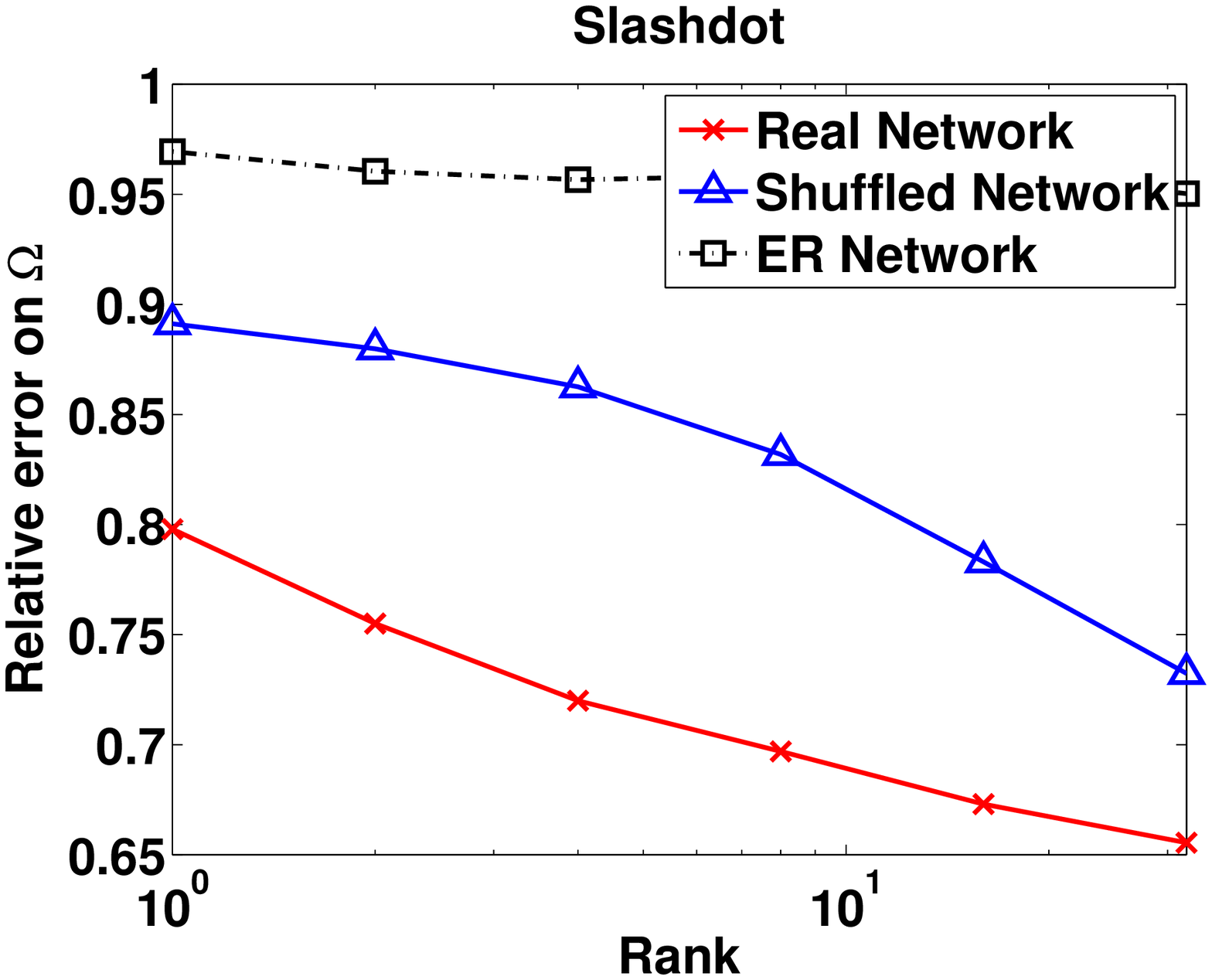}}
\subfloat[Wikipedia]{\includegraphics[width=.3\textwidth]{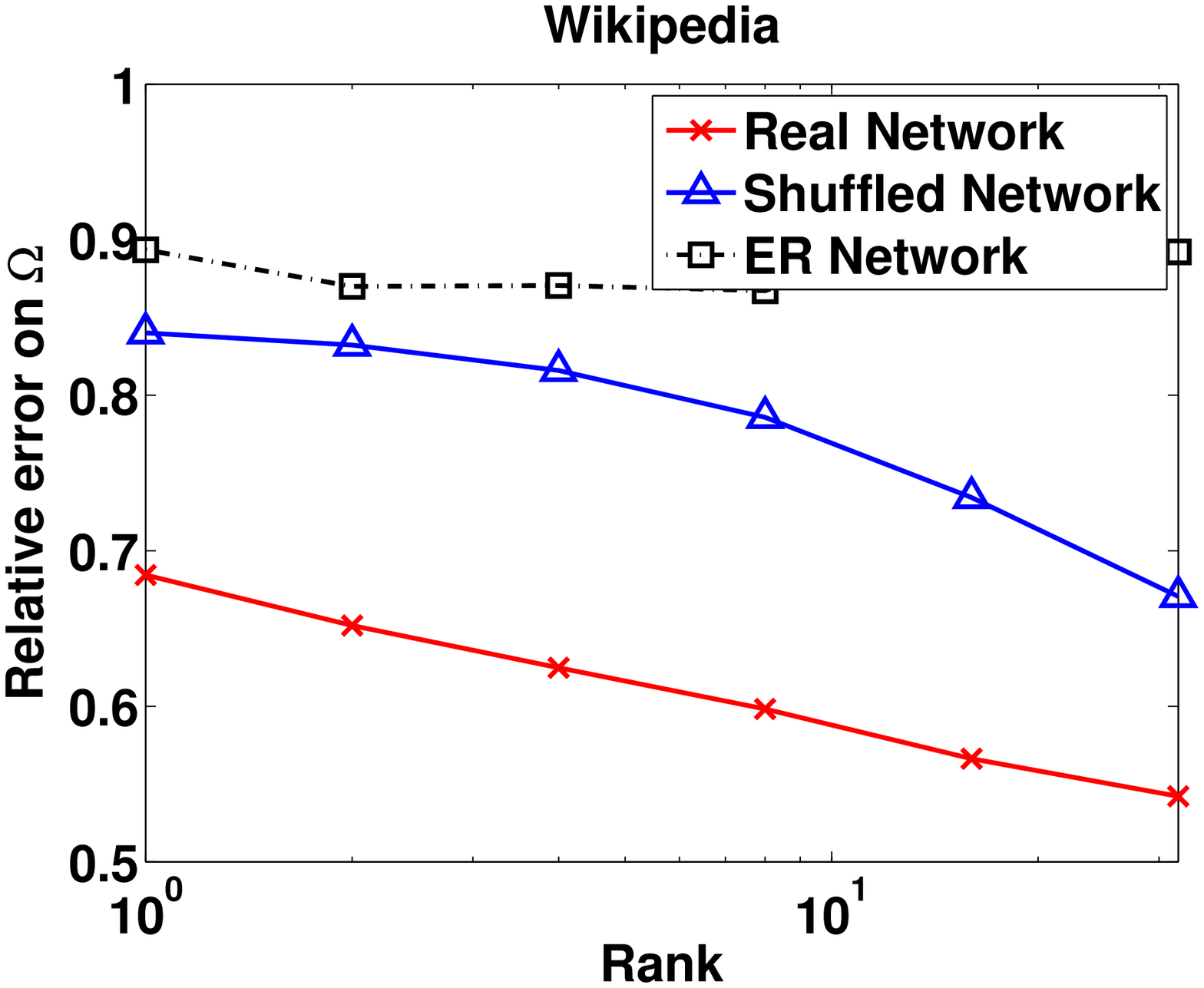}}
  \caption{Relative error on $\Omega$, the observed entries, between adjacency matrix and completed matrix, 
  for real-life networks versus random networks.  Real-life networks achieve much smaller relative error for every
  $k$ as compared with random networks.}
\label{fig:real_data}
\end{figure*}

\subsection{Sign Prediction}
We now compare the performance of our proposed methods for sign prediction.  
As introduced in Sections \ref{sec:triads} and \ref{sec:longer_cycles},
there are two families of cycle-based methods: one based on measures of imbalance (MOI), 
and the other based on the supervised learning using higher order cycles (HOC).
Both families depend on a parameter $\ell \ge 3$
that denotes the order of the cycles that the method is based on.
For MOI, we consider all $\ell$ less than $10$ as well as $\infty$ (recall that in this case
MOI becomes Katz measure), and for HOC we consider $\ell=3,4,5$. Note
that the set of features used by HOC-$(\ell+1)$ is a strict superset of the
features used by HOC-$\ell$. 

We also consider two fully global approaches for low rank matrix completion -- 
Singular Value Projection from Section \ref{subsec:SVP} 
and matrix factorization from Section \ref{subsec:MF}.
The SVP approach (denoted as LR-SVP) is chosen to demostrate that perfect recovery
can be achieved if the observations are uniformly distributed.
For matrix factorization, we consider the ALS method that solves
problem \eqref{eq:mf_prob1}, as well as SGD methods that solve the general problem \eqref{eq:mf_prob2} with sigmoid
loss and square-hinge loss, defined in \eqref{eq:other_loss}.  
We denote these methods as LR-ALS, LR-SIG and LR-SH, respectively.

\subsubsection{Synthetic Datasets}
We first compare all categories of approaches on synthetic datasets.  We choose
LR-SVP, LR-ALS, MOI-$\infty$ and HOC-3 as representatives of the two approaches of low rank matrix completion,  MOI-based,
and HOC-based methods respectively.  
We consider the underlying network $\comp$ to be
a complete $5$-weakly balanced network, where the five clusters have sizes $100$, $200$, $300$, $400$ and $500$.
Instead of observing all of $\comp$, we assume that we only observe a partial network $\obs$ by sampling
some entries from $\comp$ using three sampling procedures:
uniform sampling, uniform sampling with noise, and sampling with power-law distribution.
For each algorithm, we input the observed entries as training data 
and calculate the sign prediction accuracy on the rest of the entries. 

{\bf Uniform sampling:}
In this scenario, we generate several observed networks $\obs = \comp(s, 0, \uni)$.
We vary $s$ from $0.001$ to $0.1$ and plot the prediction accuracy in Figure \ref{fig:nonoise}.
Under this setting, LR-SVP and LR-ALS outperform the cycle-based methods.  
We observe that MOI-$\infty$ performs the worst with accuracy only 50\%-70\%.  
However, if we repeat the same experiment substituting $\comp$ with $\comp_2$, where
$\comp_2$ is a complete strongly balanced network whose two groups have size $1000$,
we observe that MOI and global methods perform alike as shown in Figure \ref{fig:nonoise_2}.
This is because MOI uses cycle-based
measurements to make more cycles become {\it balanced}.  This prediction policy
is most appropriate when $k = 2$ (that is, the underlying network $\comp$ has strong balance), 
but performs poorly when the underlying network is weakly balanced (i.e. more than two groups). 
HOC-3 works much better than 
MOI-$\infty$ since it learns a classifier from cycle-based features
rather than simply making cycles balanced, but its accuracy drops dramatically when
$s$ is less than $0.05$.
On the other hand, both LR-SVP and LR-ALS show high accuracy for all $s \geq 0.01$.
In particular, LR-SVP can achieve 100\% accuracy
when $s>0.07$, which reconfirms the theoretical 
recovery guarantee stated in Theorem \ref{thm:recover}.
Moreover, although LR-ALS has no theoretical guarantee,
it can still recover the ground truth, an observation that is consistent with previous results. 

{\bf Uniform sampling with noise:}
To make the synthetic data more similar to real data, we further add noise into observations. 
We generate observed networks $\obs = \comp(0.1, \epsilon, \uni)$, where $\epsilon$ varies from $0.01$ to $0.25$.  
The result is shown in Figure \ref{fig:noise}.
We can see that global methods are still clearly better than cycle-based methods
when noise level becomes higher.
Moreover,
LR-SVP perfectly recovers $\comp$ when the noise level $\epsilon<0.05$, and
LR-ALS also achieves perfect recovery with a smaller $\epsilon$.

{\bf Sampling with power-law distribution:}
As Sections \ref{subsec:MC} and \ref{subsec:SVP} stated, the exact recovery guarantees of convex relaxation 
and SVP for matrix completion crucially rely on the assumption that 
observed entries are uniformly sampled.  
However, in most real networks (for example, Slashdot in \cite{Kunegis09a}), 
the degree distribution of observed entries follows a power law.
Therefore, we examine how the approaches perform on power-law distributed networks.
The power-law distributed networks are generated
using the Chung-Lu-Vu~(CLV) model proposed by~\cite{Chung03a}, 
which allows one to generate random graphs with arbitrary expected degree 
sequence. 
Similar to the uniform sampling case, we perform the sign prediction task
on $\obs = \comp(s, 0, \power)$ varying $s$ from $0.001$ to $0.1$,
and plot the prediction accuracy in Figure \ref{fig:power}.
We can see that MOI-$\infty$ still has poor performance for weakly balanced graphs.
However, unlike the uniform sampling case, LR-SVP has lower accuracy rate 
compared to HOC-3 when $s < 0.1$.
On the other hand, LR-ALS still performs better than all other methods in power-law distributed graphs. 

\begin{figure*}[th]
  \centering
  \begin{tabular}{cc}
    \subfloat[Uniformly sampled without noise ($k = 5$)\label{fig:nonoise}]{\includegraphics[width=0.4\textwidth,height=5cm]{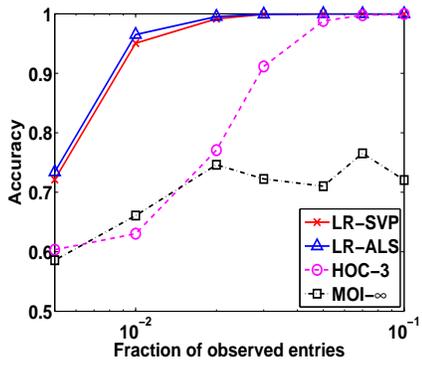}}
 &  
    \subfloat[Uniformly sampled without noise on balanced networks ($k = 2$)\label{fig:nonoise_2}]{\includegraphics[width=0.4\textwidth,height=5cm]{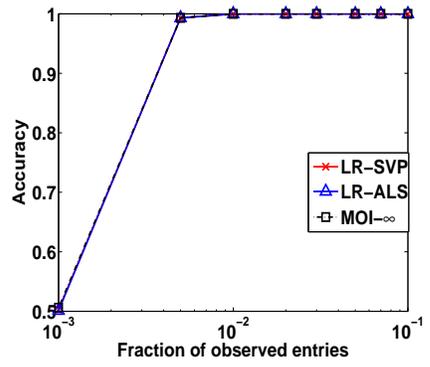}}
    \\
    \subfloat[Uniformly sampled with noise ($k = 5$)\label{fig:noise}]{\includegraphics[width=0.4\textwidth,height=5cm]{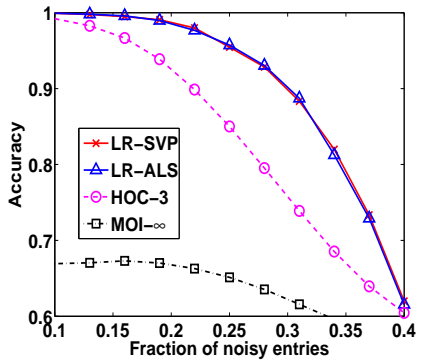}}
 &
 \subfloat[Power-law distributed networks ($k = 5$)\label{fig:power}]{\includegraphics[width=0.4\textwidth,height=5cm]{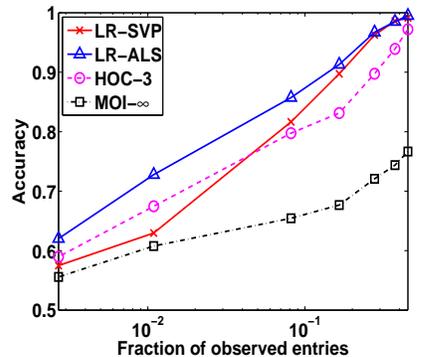}}
  \end{tabular}
  \caption{
  Sign prediction accuracy of local and global methods on synthetic datasets.
    On (strongly) balanced networks (\ref{fig:nonoise_2}), MOI-$\infty$ is seen to perform
    as well as LR-SVP and LR-ALS.  However, in general weakly balanced networks, 
    global methods LR-SVP and LR-ALS outperform cycle-based methods such as MOI-$\infty$ and HOC-3.
  In addition, LR-ALS is more robust than LR-SVP when the observations 
  are sampled from a power-law distribution.}
  \label{fig:synthetic}
\end{figure*}


From results on synthetic data shown in Figure \ref{fig:synthetic}, 
we can conclude that global methods generally do better than local methods
on synthetic setting, and the low rank model with matrix factorization
(LR-ALS) performs the best in most cases, even when observed entries are not uniformly distributed.

\subsubsection{Real-life Datasets}
Now we further evaluate our sign prediction methods on three real-life networks.
To begin with, we evaluate and compare MOI methods using a {\em leave-one-out} type methodology: 
each edge in the network is successively removed and the method tries to predict the sign of that edge using the rest of the network. 
Figure \ref{fig:MOI} shows the accuracy of MOI based methods. Note that the accuracy is shown for edges with embeddedness under a certain threshold. First, we see that the accuracy is a non-decreasing function of the embeddedness threshold. 
Next, it is clear that higher-order methods perform significantly better than MOI-$3$ (triangles) method. 
Finally, the performance boost is larger for edges with low embeddedness. This is expected as edges of low embeddedness by definition do not have many common neighbors for their end-points, and higher-order cycles have relatively better information for such edges than others. 
We also observe from our experiments that beyond $\ell = 5$, the performance gain is not very significant. 

\begin{figure*}
\centering
\subfloat[Epinions]{\includegraphics[width=.3\textwidth]{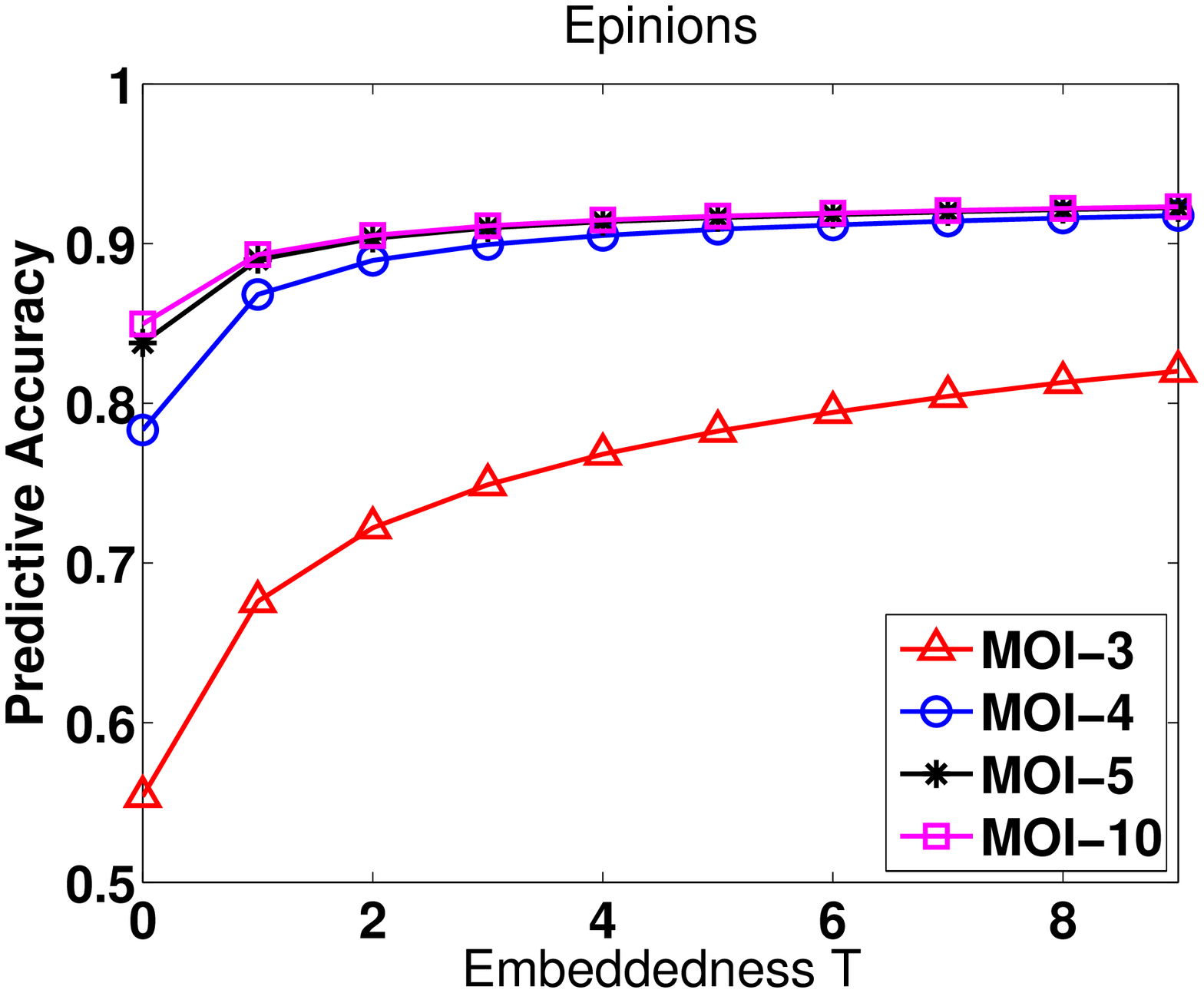}}
\subfloat[Slashdot]{\includegraphics[width=.3\textwidth]{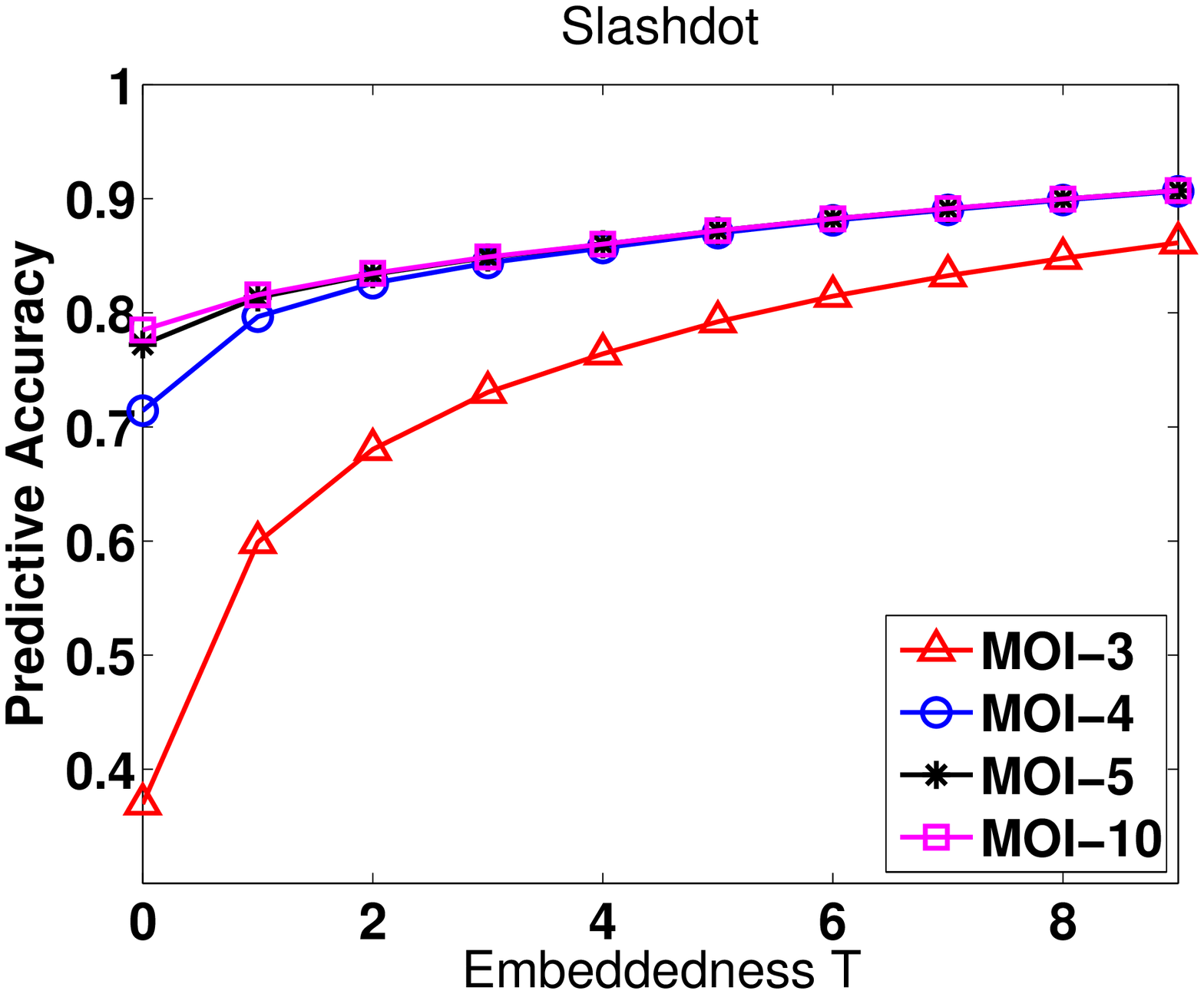}}
\subfloat[Wikipedia]{\includegraphics[width=.3\textwidth]{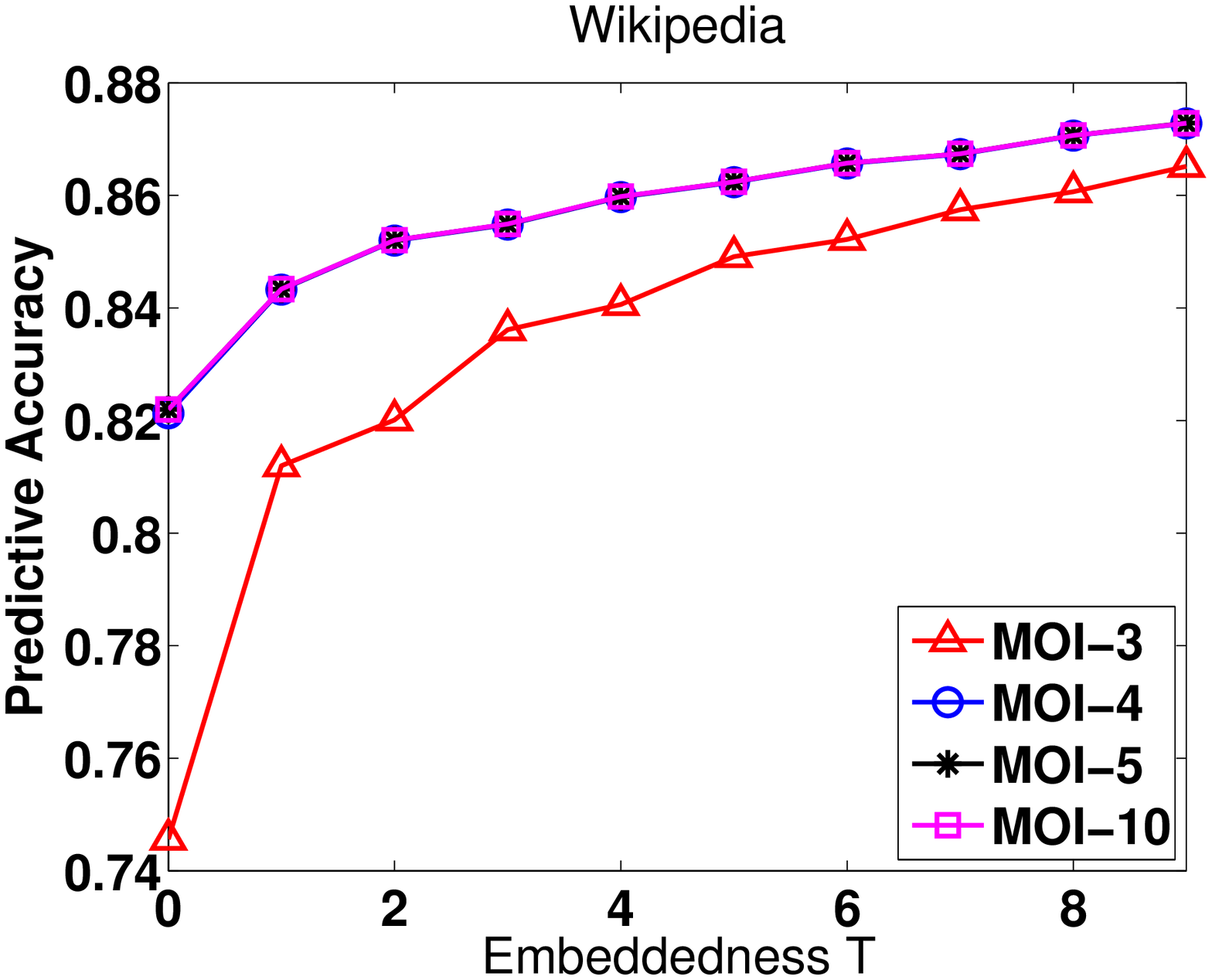}}
\caption{Accuracy of Measures of Imbalance (MOI) Based Methods for $\ell=3, 4, 5, 10.$ \textnormal{These plots show the accuracy
of MOI-$\ell$ methods for edges with embeddedness {\em at least} $T$ for various thresholds $T$. We see that the difference in the performance
of MOI-$3$ and higher order methods is larger when edges with lower embeddedness are considered. We also see that the improvement
obtained by going beyond order $5$ is not very significant.} }
\label{fig:MOI}
\end{figure*}

Next, we compare HOC methods.  We resort to {\em $10$-fold cross-validation}.  To be more concrete, we (randomly) created $10$ disjoint test folds each consisting of $10\%$ of the total number of edges in the network. 
For each test fold, the remaining $90\%$ of the edges serve as the training set. 
For a given test fold, the feature extraction and logistic model training happen on a graph with the test edges removed. 
To evaluate HOC methods, we consider not only prediction accuracies but also false-positive rates.
We report both accuracies and false-positive rates by averaging them over the $10$ folds. 
As shown in Table \ref{tab:prediction_acc},
in all the datasets, there is a small improvement in accuracy by using higher order cycles (HOC-$5$).
The false positive rate, however, reveals a more interesting phenomenon in Figure~\ref{fig:HOC}. 
Indeed, higher order methods (such as HOC-$5$) significantly reduce the false positive rate as compared to that of HOC-$3$. 
However Figure~\ref{fig:HOC} shows that, unlike MOI based methods, edge embeddedness does not seem to affect the decrease in false positive rate for HOC methods. We see this trend across all the datasets.

\begin{figure}
\centering
\subfloat[Epinions]{\includegraphics[width=.3\textwidth]{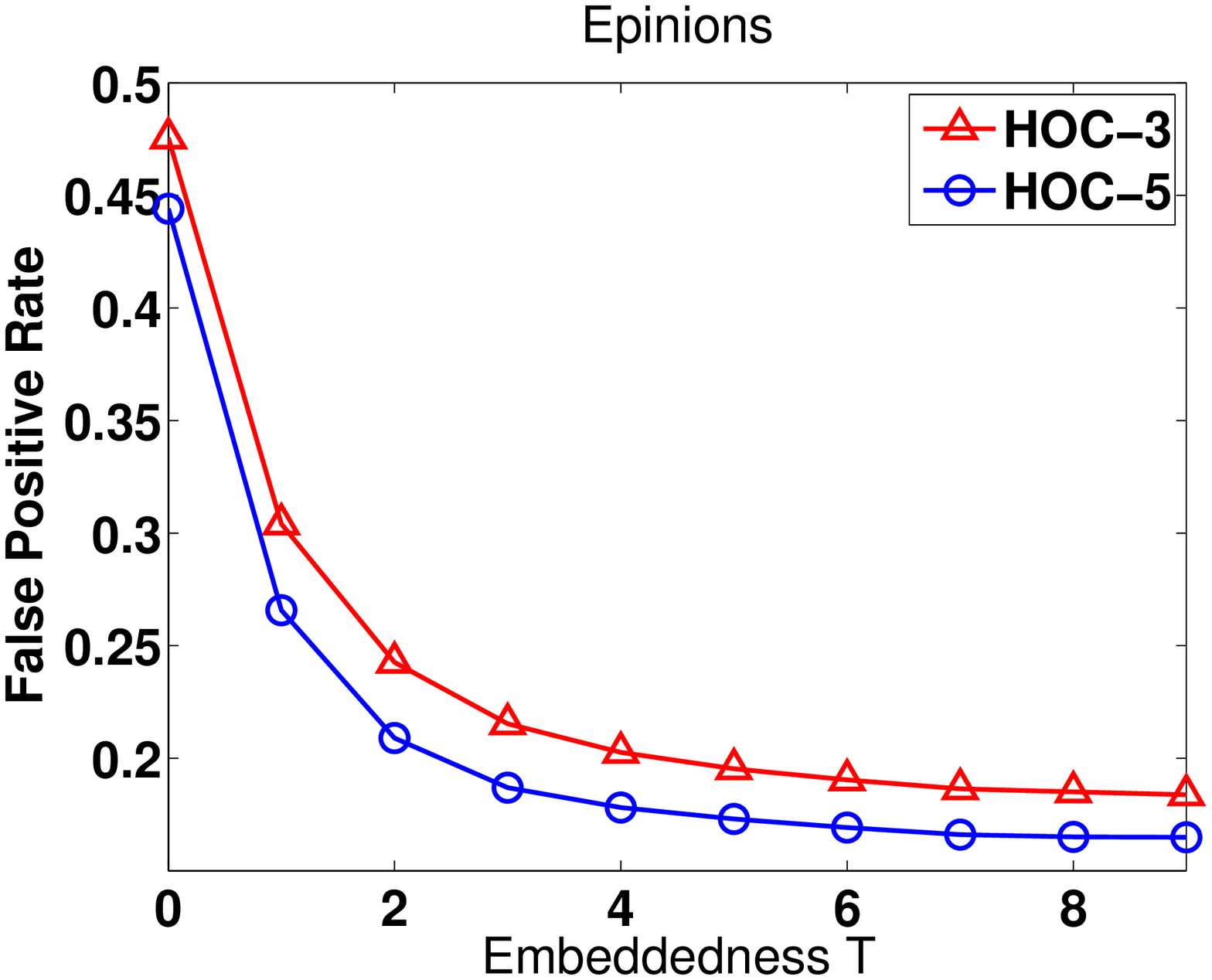}}
\subfloat[Slashdot]{\includegraphics[width=.3\textwidth]{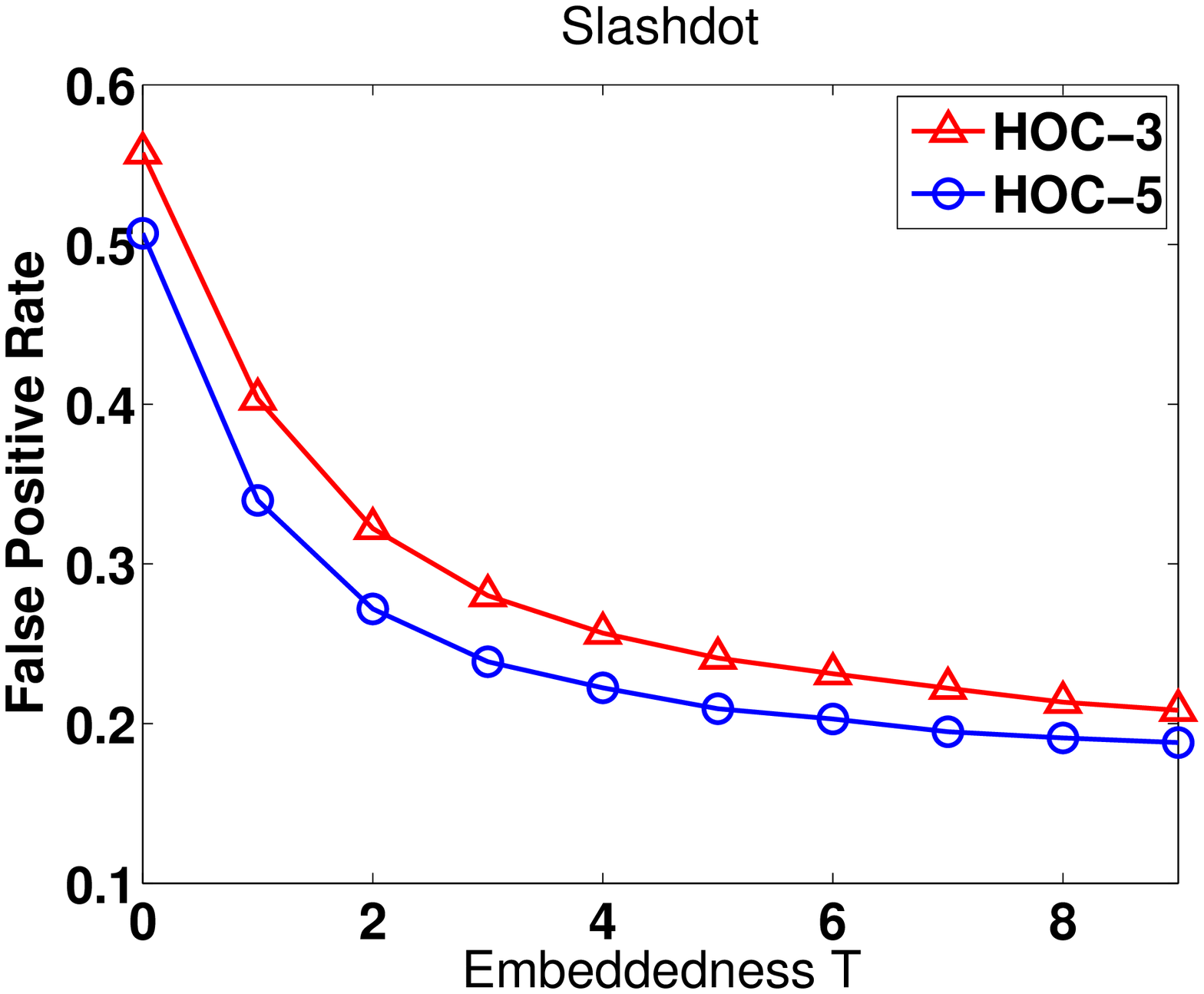}}
\subfloat[Wikipedia]{\includegraphics[width=.3\textwidth]{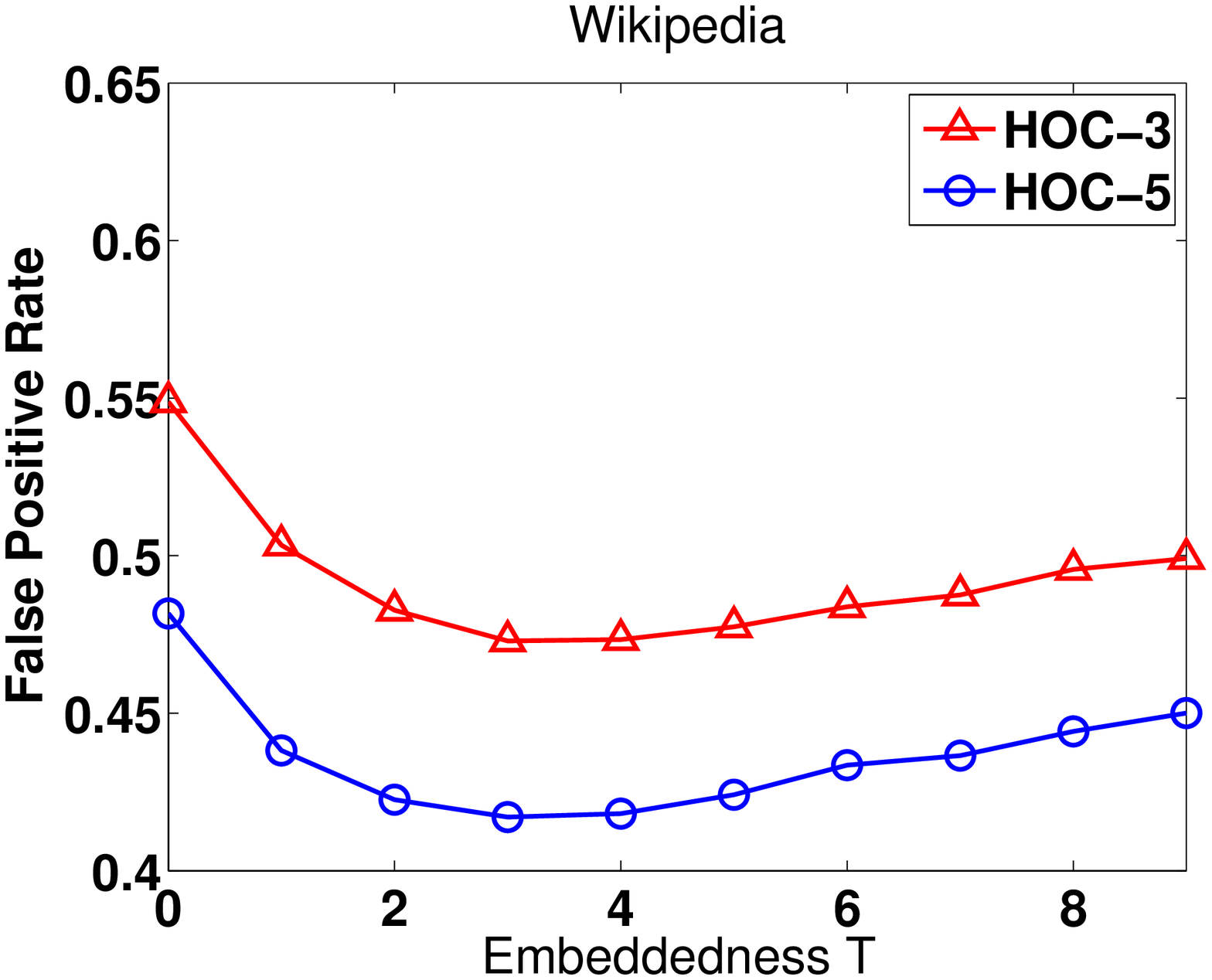}}
\caption{False Positive Rates of Higher Order Cycle (HOC) Methods for $\ell=3, 5.$ \textnormal{These plots show the false positive rate
of HOC-$\ell$ methods for edges with embeddedness {\em at least} $T$ for various thresholds $T$. We see that considering higher order cycles
has the benefit of significantly reducing false-positives while simultaneously achieving slightly better overall accuracy (refer to Table~\ref{tab:prediction_acc}).
However, unlike what we see for MOI methods, here the improvement does not seem to depend strongly on edge embeddedness. The false positive
rates for HOC-$4$ are very similar to that of HOC-$5$ and hence are not shown for clarity.}}
\label{fig:HOC}
\end{figure}

At this point, we see that for cycle-based methods,
considering higher order cycles benefits the accuracy of sign prediction and lowers the false positive rate.
Furthermore, the results are consistent across the three diverse networks. 
These results confirm the intuition that getting more global information improves quality of prediction,
and motivate us to consider the fully global structure of networks.

Now we turn our attention to low rank modeling approaches.
We have seen that LR-SVP fails to perform well under power-law 
distributions of observed relationships in synthetic networks (see Figure \ref{fig:power}),
so we consider the more robust matrix factorization approach for solving the matrix completion problem,
including LR-ALS, LR-SIG and LR-SH, for experiments on real datasets.
Again, we use $10$-fold cross validation setting, and
report the average prediction accuracy for each dataset in Table \ref{tab:prediction_acc}.  
  From the table, 
  we observe that global methods clearly outperform cycle-based methods.  
  In particular, we observe that HOC-5 only improves HOC-3 by less than 1.5\%, while 
  global methods consistently improve the accuracy of HOC-5
  by more than 2\% over all datasets.  In addition,
  LR-SIG and LR-SH further improve the accuracy of LR-ALS.  
  This shows that the sigmoid and square-hinge loss are more suitable
  for sign prediction, which supports the discussion in Section \ref{subsec:MF}.
\begin{table}
  \caption{The sign prediction accuracy for low rank modeling methods
  and cycle-based methods. We can see that the low rank modeling approaches are 
  better than cycle-based methods. }
  \label{tab:prediction_acc}
  \centering
\begin{tabular}{c|cc|cc|ccc}
     & MOI-3 & MOI-10 & HOC-3 & HOC-5 & LR-ALS & LR-SIG & LR-SH \\
     \hline
     Epinions & 0.5539 & 0.8497 & 0.9014 & 0.9080 & 0.9374 & {\bf 0.9465} & 0.9437 \\
     Slashdot & 0.3697 & 0.7850 & 0.8303 & 0.8469 & 0.8774 & 0.8789 & {\bf 0.8835} \\
     Wikipedia & 0.7456 & 0.8220 & 0.8424& 0.8605 & 0.8814 & {\bf 0.8830} & 0.8810 
\end{tabular}
\end{table}

In Figure \ref{fig:all_cmp}, we further select a representative of each category, 
MOI-10, HOC-5 and LR-ALS, and show their performance 
with different levels of edge embeddedness (LR-SIG and LR-SH perform similar to LR-ALS among all datasets).
One might expect that cycle-based approaches should perform better on edges with higher 
embeddedness because more cycle information is available. However, surprisingly
LR-ALS achieves higher prediction accuracy regardless of the embeddedness.
All above results show that global methods are more effective than local methods.

\begin{figure*}[th]
  \centering
  \begin{tabular}{ccc}
    \subfloat[Epinions\label{fig:epi_emd}]{\includegraphics[width=0.3\textwidth,height=3.5cm]{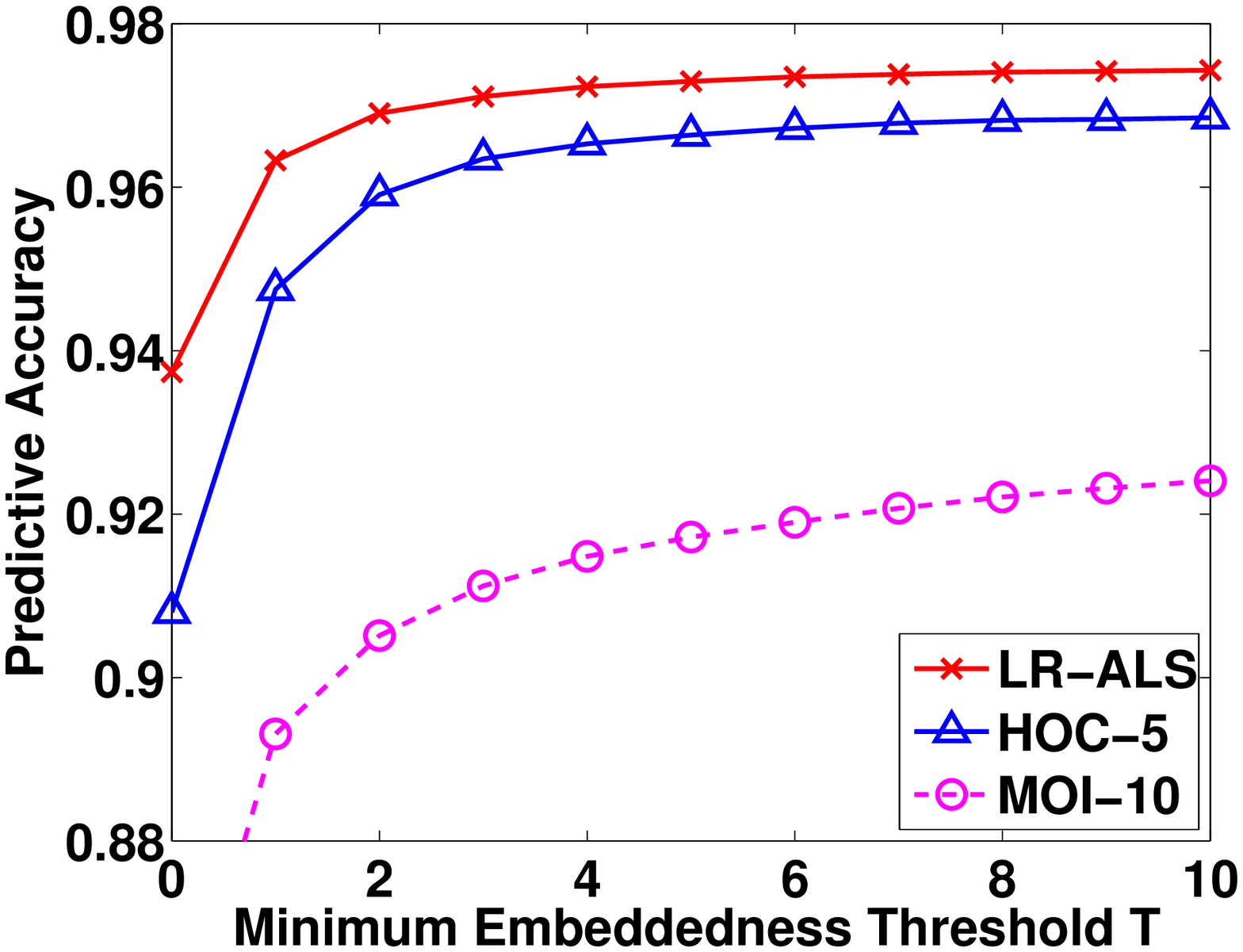}}
 &  
    \subfloat[Slashdot\label{fig:slash_emd}]{\includegraphics[width=0.3\textwidth,height=3.5cm]{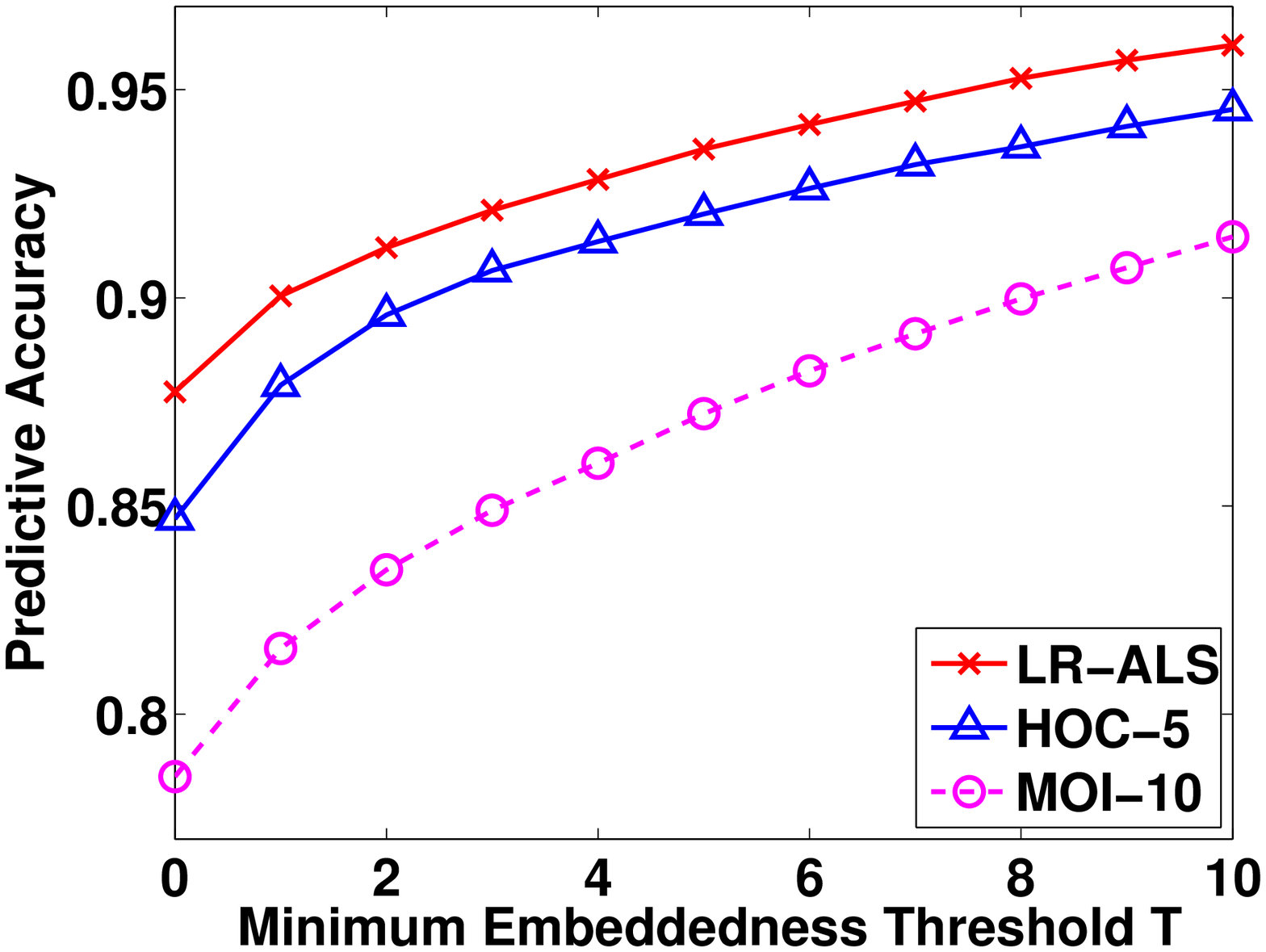}}
    &
    \subfloat[Wikipedia\label{fig:wiki_emd}]{\includegraphics[width=0.3\textwidth,height=3.5cm]{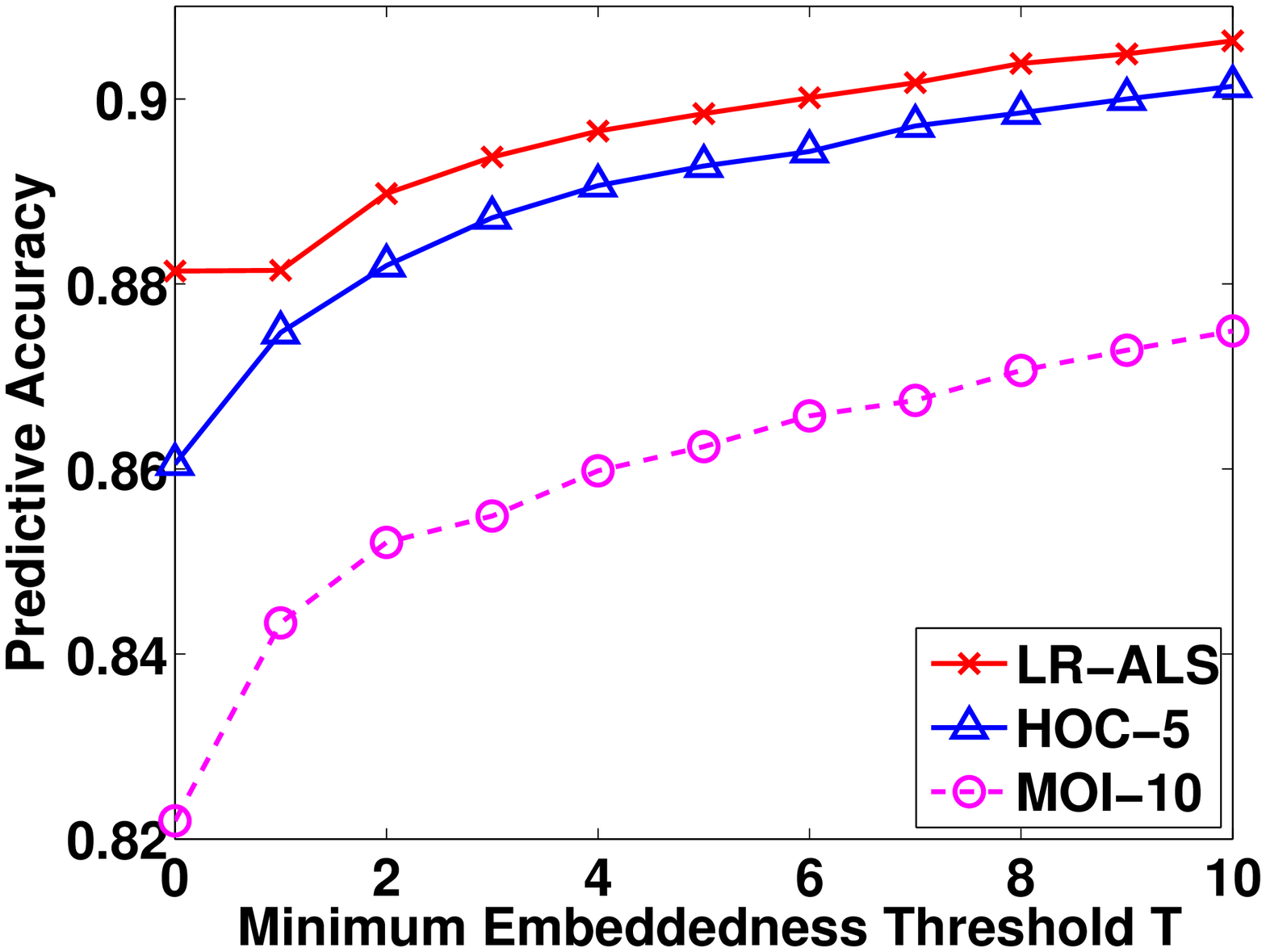}}
  \end{tabular}
  \vspace{-5pt}
  \caption{
  Sign prediction accuracy of local and global methods 
    with different levels of embeddedness.
  These plots show the accuracy for edges with embeddedness at least $T$. 
  We can see that LR-ALS consistently achieves the highest accuracy for all thresholds $T$.}
  \label{fig:all_cmp}
\end{figure*}

\subsubsection{Running Time Comparison}
In addition to prediction accuracy, we now compare the running time required by the 
different methods.  As discussed in Section \ref{subsec:MF}, low rank modeling with
matrix factorization is more efficient than cycle-based algorithms in terms of time complexity. 
Here, we further show that matrix factorization methods are empirically much faster than cycle-based algorithms. 
The running times are summarized in Table \ref{tab:time_compare}.
To conduct timing tests on a large signed network, in addition to the three real 
datasets as described in Table \ref{tab:real_data}, we construct a 
large-scale synthetic dataset called Cluster10 where the number of edges is 100 times more than Epinions. 
Cluster10 is generated from a $10$-weakly balanced network, 
in which clusters have sizes 20000, 40000, \dots, 200000.
There are totally 1.1 million nodes and 120 million edges uniformly sampled
from the complete graph. We construct this synthetic data to show that our matrix 
factorization approach can easily scale up to massive graphs compared to HOC methods.
For matrix factorization approach, we report the time needed to solve the model
by SGD (with sigmoid and square-hinge) and ALS (with square loss).
For HOC methods which build classifiers from cycle-based features,
since the time for training phase depends on 
the classifier, we only report the time for  computation of features.
Thus the reported time for HOC is an underestimate of the time required to construct the HOC model; even then we can see that 
the time required by LR-ALS, LR-SIG and LR-SH is much lower than HOC methods.

In conclusion, for the sign prediction problem, we see that considering
fully global structure of networks gives us the best results.
In particular, the low rank model with matrix factorization is clearly the most
competitive method in terms of accuracy and scalability.

\begin{table}
  \centering
  \caption{Running time (in seconds) for low rank model with matrix factorization 
    and HOC on real datasets and a 1.1 million node synthetic data Cluster10. 
    The time of LR-SGD is the average time of LR-SIG and LR-SH.
    For HOC methods, we only consider the time for feature computation 
    before the model training, while for LR methods we report the total time for
  constructing the model.
  We can see that LR methods with matrix factorization are clearly more 
  efficient than cycle-based algorithms. }
  
  \label{tab:time_compare}
  \begin{tabular}{c|ccccc}
    & HOC-3 & HOC-4 & HOC-5 & LR-ALS & LR-SGD \\
    \hline
Wikipedia & 18.08 & 74.52 & 462.92 & {\bf 2.26} & 2.41\\
Slashdot & 133.4 & 1,936.0 & $> 10,000$ & {\bf 17.4} & 24.7\\
Epinions & 560.64 & 6,156.8 & $> 10,000$ & {\bf 28.67} & 37.2 \\
Cluster10 & $> 10,000$ & $> 10,000$ & $> 10,000$  &  {\bf 455.1} & 1,152 
  \end{tabular}
\end{table}

\subsection{Clustering}
\label{sec:cluster_experiments}
In this subsection, we show that our clustering approach, which completes the low-rank structure
of signed networks before performing clustering, outperforms spectral clustering 
based on the signed Laplacian \citep{Kunegis10a}.
We conduct experiments on synthetic data generated from weakly balanced 
networks (note that we do not have ground truth for clustering in the real-life datasets). We consider a $10$-weakly balanced 
network $\comp$ where size of each group is $100$, and 
observe entries from $\comp$ with two sampling procedures: 
uniform sampling and uniform sampling with noise.


To measure the performance of clustering, 
we calculate the number of edges that satisfy the ground-truth clustering, which 
is defined by 
\begin{equation}
  \sum_{i,j: s_i = s_j} I(\bar{s_i} = \bar{s_j}) + \sum_{i,j: s_i\neq s_j}  
  I(\bar{s_i}\neq \bar{s_j}). 
  \label{eq:clustering_eval}
\end{equation}
where $s_1,\dots,s_n$ denote the ground-truth clustering assignment
for each node, and $\bar{s}_1,\dots,\bar{s}_n$ are the clustering results given by 
the clustering algorithm. 

Following the procedure outlined in the previous subsection,
in the uniform sampling case, we consider the networks $\obs = \comp(s, 0, \uni)$ with $s \in [0.01, 0.06]$,
while in sampling with noise case we consider networks $\obs = \comp(0.1, \epsilon, \uni)$ with $\epsilon \in [0.01, 0.06]$.
For each observed network, we apply Algorithm \ref{alg:clustering} (See Section \ref{sec:clustering}) and 
clustering via the signed Laplacian, and evaluate clustering results using
\eqref{eq:clustering_eval}.
The results of these two scenarios are shown in Figure~\ref{fig:cluster}.
In both the scenarios, our proposed
clustering approach is significantly better than 
clustering based on the signed Laplacian.  This shows that recovering the low-rank structure of signed networks
leads to improved clustering results.

\begin{figure*}[th]
  \centering
  \begin{tabular}{cc}
    \subfloat[\small Data without noise\label{fig:cluster_nonoise}]{\includegraphics[width=0.4\textwidth,height=5cm]{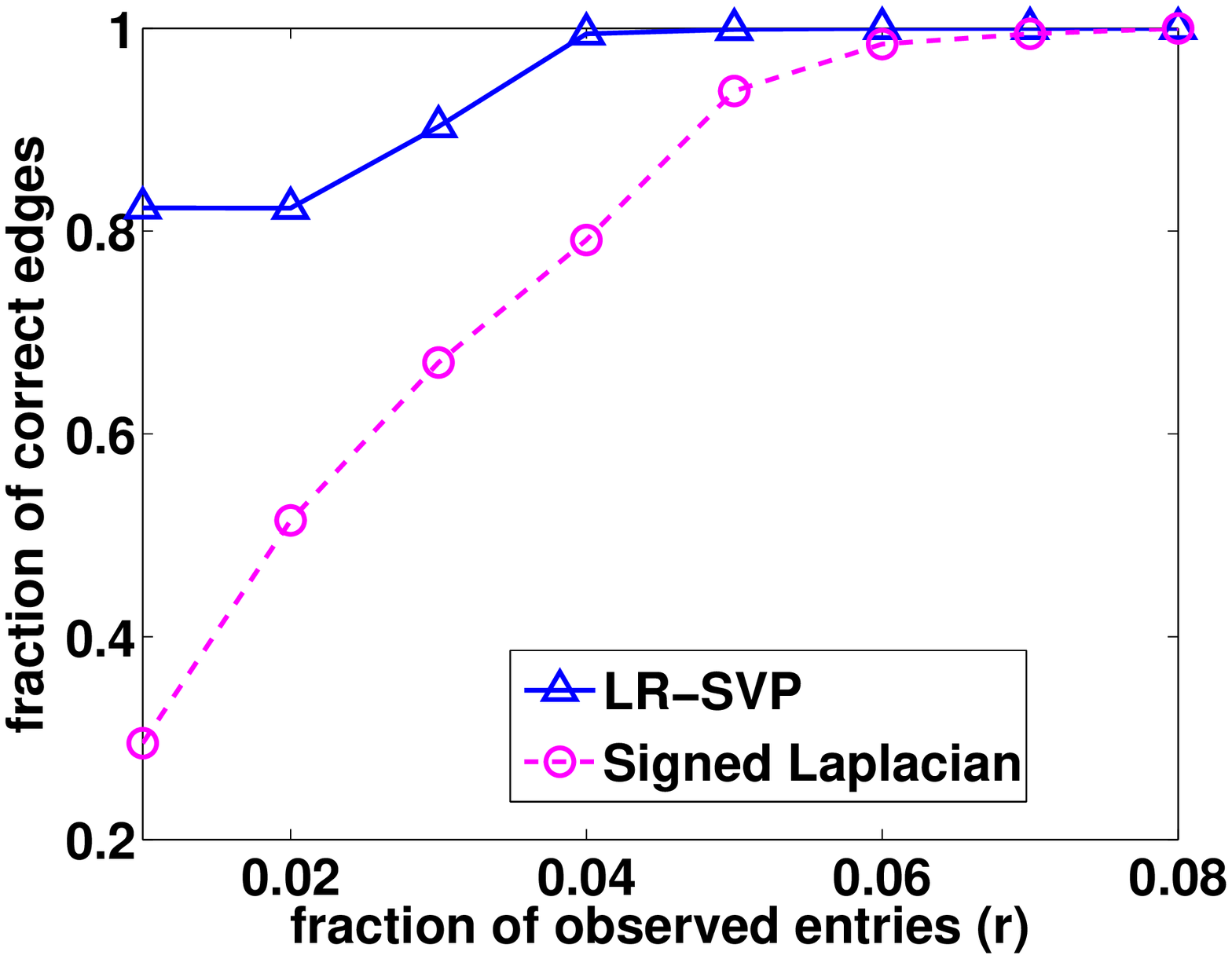}} &
    \subfloat[\small Data with noise\label{fig:cluster_noise}]{\includegraphics[width=0.4\textwidth,height=5cm]{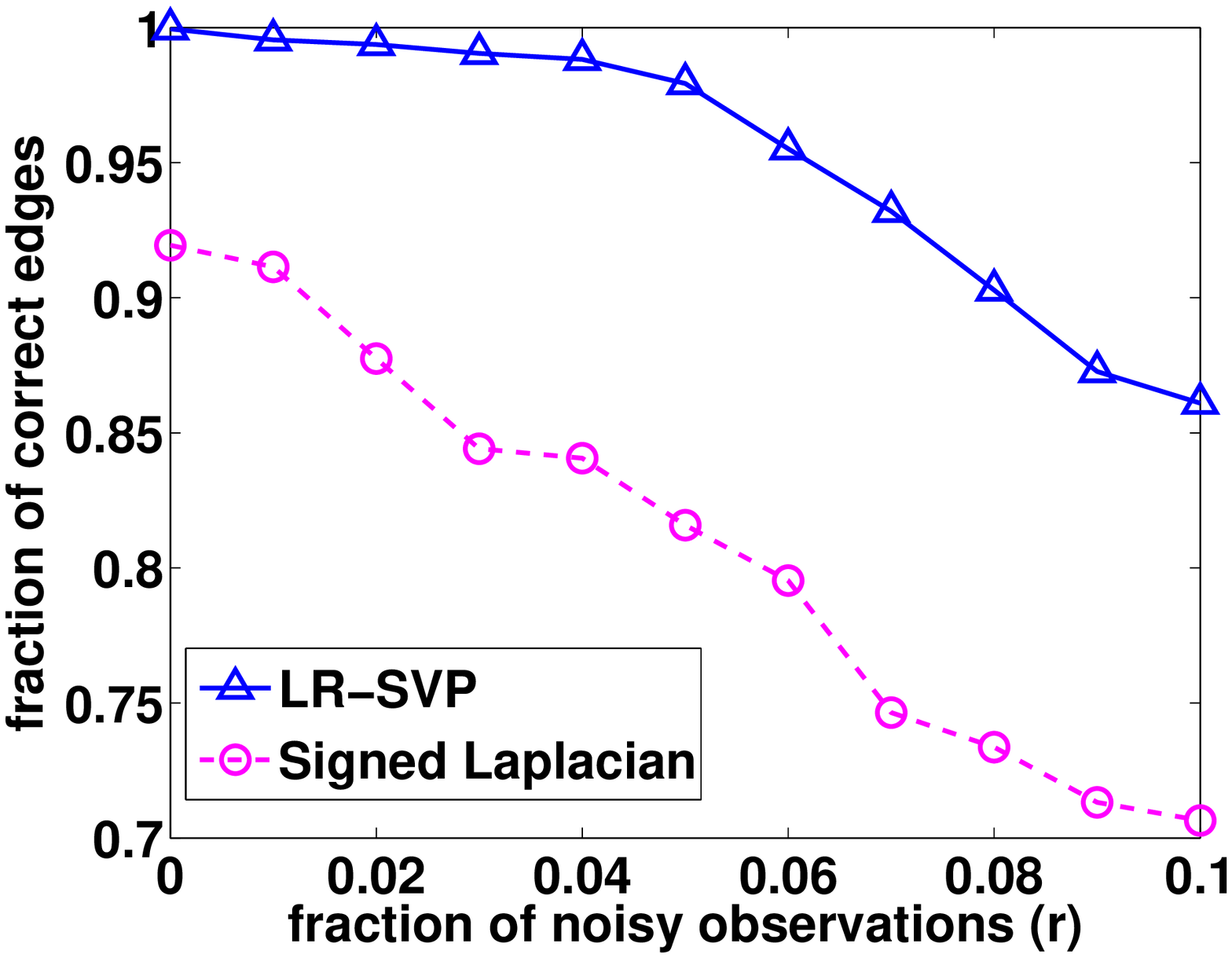}}
  \end{tabular}
  \caption{Clustering partially observed synthetic data. Figure 
  \ref{fig:cluster_nonoise} is the result without noise and Figure 
  \ref{fig:cluster_noise} is the result with noise. In both cases, clustering
  with LR-SVP performs significantly better than clustering with signed Laplacian. }
  \label{fig:cluster}
\end{figure*}



\section{Related Work}
\label{sec:related}
Signed networks have been studied since the early 1950s.  Harary and Cartwright were the first to mathematically study structural balance.  They defined balanced triads and proved the global structure of balanced signed networks \citep{Harary53a, Cartwright56a}.  
\cite{Davis67a} further generalized the balance notion to weak balance by allowing triads with all negative edges, and showed that weakly balanced graphs have mutual antagonistic groups as global structure.

Though theoretical studies of signed networks have been conducted for a long time,
it was not until this decade that analysis of real signed networks could be
done at a large scale as large real networks have become more accessible recently.
For example, \cite{Kunegis09a} performed several analysis tasks on Slashdot, and
\cite{Leskovec10a, Leskovec10b} studied the local and global structure of three real
signed networks.  They designed several computational experiments to justify
that the structure of these signed networks match some widely believed social theories.

In this paper, we focused on problems in signed networks.
However, these problems have their counterparts in unsigned networks.
For instance, structural link prediction in unsigned networks 
corresponds to the sign prediction problem.  Structural link prediction has been well explored, 
and it is usually solved by computing a similarity measure between nodes 
\citep{Nowell07a}, such as those proposed by \cite{Katz53a} and \cite{Adamic03a}.
The sign prediction problem, however, was not formally considered until 
the work by \cite{Guha04a}, in which they develop a trust propagation framework to 
predict trust or distrust between entities.  
More recently,  \cite{Kunegis09a, Kunegis10a} reconsidered this problem
by using various similarity functions and kernels such as 
matrix exponential and signed Laplacian. 
\cite{Leskovec10a} proposed a machine learning formulation of this problem, 
arguing that learning from only local triangular structure of edges 
can achieve reasonable accuracy.  

Sign prediction using our global method is closely related to the low-rank matrix completion problem.  
In the last five years,
there has been substantial research studying exact recovery conditions 
for this problem \citep{Candes08a, Candes09a}, and algorithms with theoretical 
guarantees have also been proposed, such as SVT \citep{Cai10a} and SVP \citep{Jain10a}.  
Matrix factorization is another approximation technique for matrix 
completion.  Though this approach is notoriously hard to analyze, it
is very competitive in practice \citep{Koren09a}.
While the matrix completion problem has been considered mostly
in collaborative filtering, our low rank model arises naturally from the weak balance of signed networks.  

Clustering is another fundamental problem in network analysis.  For unsigned
networks, there are several proposed algorithms that have been shown to be effective, 
such as clustering via graph Laplacian \citep{Ng01a}, modularity \citep{Newman06a} 
and multilevel approaches \citep{Dhillon07a}.
However, most of these approaches can not be directly extended to signed networks
since weak balance theory does not apply to unsigned networks.  
As a result, researchers have tried to tailor unsigned network clustering algorithms 
in order to make them applicable to signed networks.  
For instance, \cite{Doreian96a} proposed a
local search strategy which is similar to the Kernighan-Lin algorithm 
\citep{Kernighan70a}.
Starting with an initial clustering assignment, 
it tries to move nodes one by one to get a more preferable clustering.  
\cite{Yang07a} proposed an agent-based method which basically conducts 
a random walk on the graph.  
\cite{Kunegis10a} generalized spectral algorithms to signed networks.
They proposed a spectral approach using the so-called ``signed'' Laplacian, 
and showed that partitioning signed networks into two groups using the signed Laplacian kernel 
is analogous to considering ratio cut on unsigned networks. 
\cite{Anchuri12a} proposed hierarchical 
iterative methods that solve 2-way signed modularity objectives 
using spectral relaxation at each hierarchy.
\cite{Chiang12a} proposed some graph kernels for signed network clustering, 
and showed that the multilevel framework can be extended to this problem.

Another line of research on signed graph clustering problem is correlation clustering.
Correlation clustering is motivated from document classification: given a set of documents
with some pairs of documents labeled similar or different, the goal is 
to find a partition such that documents in the same cluster are mostly similar \citep{Bansal04a}.
The problem was first considered by \cite{Bansal04a}, who proved that the problem is NP-hard to optimize, and proposed two approximation algorithms to maximize the ``agreement" (defined as the number of edges that are correctly classified under a partition) and minimize the ``disagreement" (defined vice versa) under the special case that all pairwise label information is given.  
The bounds for general correlation clustering setting were provided by \cite{Demaine05a}.  
On the other hand, some researchers have also considered the correlation clustering problem from the 
statistical learning theory viewpoint.  For example, \cite{Thorsten05a} give error bounds for the problem if only partial pairs are observed.
Recently, \cite{Cesa12a} proposed a method for sign prediction by learning a
correlation clustering index.  They consider three types of learning models: batch, online
and active learning, and provide theoretical bounds for prediction mistakes under each setting.


\section{Conclusions and Future Work}
\label{sec:conclusion}
In this paper, we studied the usefulness of social balance on signed networks, 
with two fundamental applications: sign prediction and clustering.
Starting from a local view of social balance, we proposed two families
of sign prediction methods based on local triads and cycles:
prediction via measures of social imbalance (MOIs) and supervised learning
based on high order cycles (HOCs).  For both approaches,  
predictive accuracies are improved if longer cycles are taken into
consideration, suggesting that a broader view of local patterns helps
in sign prediction.  We then considered the fully global perspective on social 
balance, and showed that the adjacency matrices of balanced networks are
low rank.  Based on this observation, we modeled the sign prediction problem as a low-rank
matrix completion problem.  We discussed
three approaches to matrix completion: convex
relaxation, singular value projection, and matrix factorization.
In addition, we
applied this low rank modeling technique to the clustering problem.
In experiments, we observe that sign prediction via matrix factorization
not only outperforms MOIs and HOCs, but requires much less running time.
Clustering results are also more favorable via the matrix
completion approach in comparison with the existing signed Laplacian approach.
All of these results consistently demonstrate the effectiveness of the global
viewpoint of social balance.

For future work, one possible direction is to explore analysis tasks
on heterogeneous signed networks.  Since there are different types of
entities in heterogeneous networks, currently there are no clear answers to questions such as: do balance relationships
exist on such networks? How do we quantitatively measure balance 
if balance patterns exist? How is balance at a local level related to the global structure?
Furthermore, another possible direction is to examine other theories 
for analysis tasks on signed networks.  For example, some recent work \citep{Leskovec10a, Leskovec10b} has considered status theory.  While
\cite{Leskovec10a} found evidence that status theory holds in general
in real signed networks, patterns conforming to status theory are quite different
from those conforming to balance theory.  Thus, it is natural to ask 
how to design algorithms by pursuing global patterns conforming to status theory.
These interesting directions are worth exploring in future research.


\section{Acknowledgments}
We gratefully acknowledge the support of NSF under grants CCF-0916309, CCF-1117055, and of DOD Army under grant W911NF-10-1-0529. Most of the contribution of Ambuj Tewari to this work occurred while he was a postdoctoral fellow at the University of Texas at Austin.




\bibliography{sign_arxiv}

\begin{thebibliography}{35}
\providecommand{\natexlab}[1]{#1}
\providecommand{\url}[1]{\texttt{#1}}
\expandafter\ifx\csname urlstyle\endcsname\relax
  \providecommand{\doi}[1]{doi: #1}\else
  \providecommand{\doi}{doi: \begingroup \urlstyle{rm}\Url}\fi

\bibitem[Adamic and Adar(2003)]{Adamic03a}
Lada~A. Adamic and Eytan Adar.
\newblock Friends and neighbors on the web.
\newblock \emph{Social Networks}, 25\penalty0 (3):\penalty0 211--230, 2003.

\bibitem[Anchuri and Magdon-Ismail(2012)]{Anchuri12a}
Pranay Anchuri and Magdon-Ismail.
\newblock Communities and balance in signed networks: A spectral approach.
\newblock In \emph{IEEE/ACM International Conference on Advances in Social
  Networks Analysis and Mining}, 2012.

\bibitem[Bansal et~al.(2004)Bansal, Blum, and Chawla]{Bansal04a}
Nikihil Bansal, Avrim Blum, and Shuchi Chawla.
\newblock Correlation clustering.
\newblock \emph{Machine Learning}, 56:\penalty0 89--113, 2004.

\bibitem[Cai et~al.(2010)Cai, Cand\'{e}s, and Shen]{Cai10a}
Jian-Feng Cai, Emmanuel~J. Cand\'{e}s, and Zuowei Shen.
\newblock A singular value thresholding algorithm for matrix completion.
\newblock \emph{Society for Industrial and Applied Mathematics (SIAM)},
  20\penalty0 (4):\penalty0 1956--1982, 2010.

\bibitem[Cand\'{e}s and Recht(2008)]{Candes08a}
Emmanuel~J. Cand\'{e}s and Benjamin Recht.
\newblock Exact matrix completion via convex optimization.
\newblock \emph{Found. of Comput. Math.}, 9:\penalty0 712--772, 2008.

\bibitem[Cand\'{e}s and Tao(2009)]{Candes09a}
Emmanuel~J. Cand\'{e}s and Terence Tao.
\newblock The power of convex relaxation: Near-optimal matrix completion.
\newblock \emph{IEEE Trans. Inform. Theory}, 56\penalty0 (5):\penalty0
  2053--2080, 2009.

\bibitem[Cartwright and Harary(1956)]{Cartwright56a}
Dorwin Cartwright and Frank Harary.
\newblock Structure balance: A generalization of {H}eider's theory.
\newblock \emph{Psychological Review}, 63\penalty0 (5):\penalty0 277--293,
  1956.

\bibitem[Cesa-Bianchi et~al.(2012)Cesa-Bianchi, Gentile, Vitale, , and
  Zappella]{Cesa12a}
Nicolo Cesa-Bianchi, Claudio Gentile, Fabio Vitale, , and Giovanni Zappella.
\newblock A correlation clustering approach to link classification in signed
  networks.
\newblock In \emph{COLT}, 2012.

\bibitem[Chiang et~al.(2011)Chiang, Natarajan, Tewari, and Dhillon]{Chiang11a}
Kai-Yang Chiang, Nagarajan Natarajan, Ambuj Tewari, and Inderjit~S. Dhillon.
\newblock Exploiting longer cycles for link prediction in signed networks.
\newblock In \emph{CIKM}, 2011.

\bibitem[Chiang et~al.(2012)Chiang, Whang, and Dhillon]{Chiang12a}
Kai-Yang Chiang, Joyce~JiYoung Whang, and Inderjit~S. Dhillon.
\newblock Scalable clustering of signed networks using balance normalized cut.
\newblock In \emph{CIKM}, 2012.

\bibitem[Chung et~al.(2004)Chung, Lu, and Vu]{Chung03a}
Fan Chung, Linyuan Lu, and Van Vu.
\newblock Spectra of random graphs with given expected degrees.
\newblock \emph{Internet Mathematics}, 1, 2004.

\bibitem[Davis(1967)]{Davis67a}
James~A. Davis.
\newblock Clustering and structural balance in graphs.
\newblock \emph{Human Relations}, 20\penalty0 (2):\penalty0 181--187, 1967.

\bibitem[Dhillon et~al.(2007)Dhillon, Guan, and Kulis]{Dhillon07a}
Inderjit~S. Dhillon, Yuqiang Guan, and Brian Kulis.
\newblock Weighted graph cuts without eigenvectors: A multilevel approach.
\newblock \emph{IEEE Trans. on Pattern Analysis and Machine Intelligence},
  29\penalty0 (11):\penalty0 1944--1957, 2007.

\bibitem[Dmaine et~al.(2006)Dmaine, Emanuel, Fiat, and Immorlica]{Demaine05a}
Erik~D. Dmaine, Dotan Emanuel, Amos Fiat, and Nicole Immorlica.
\newblock Correlation clustering in general weighted graphs.
\newblock \emph{Theoretical Computer Science}, 361\penalty0 (2-3):\penalty0
  172--187, 2006.

\bibitem[Doreian and Mrvar(1996)]{Doreian96a}
Patrick Doreian and Andrej Mrvar.
\newblock A partitioning approach to structural balance.
\newblock \emph{Social Networks}, 18\penalty0 (21):\penalty0 149--168, 1996.

\bibitem[Erd\"{o}s and R\'{e}nyi(1960)]{Erdos60a}
P.~Erd\"{o}s and A.~R\'{e}nyi.
\newblock On the evolution of random graphs.
\newblock In \emph{Publication of the mathematical institute of the Hungarian
  academy of sciences}, pages 17--61, 1960.

\bibitem[Fazel et~al.(2001)Fazel, Hindi, and Boyd]{Fazel01a}
Maryam Fazel, Haitham Hindi, and Stephen~P. Boyd.
\newblock A rank minimization heuristic with application to minimum order
  system approximation.
\newblock In \emph{American Control Conference}, volume~6, pages 4734--4739,
  2001.

\bibitem[Guha et~al.(2004)Guha, Kumar, Raghavan, and Tomkins]{Guha04a}
R.~V. Guha, Ravi Kumar, Prabhakar Raghavan, and Andrew Tomkins.
\newblock Propagation of trust and distrust.
\newblock In \emph{WWW}, 2004.

\bibitem[Harary(1953)]{Harary53a}
Frank Harary.
\newblock On the notion of balance of a signed graph.
\newblock \emph{Michigan Mathematical Journal}, 2\penalty0 (2):\penalty0
  143--146, 1953.

\bibitem[Hsieh et~al.(2012)Hsieh, Chiang, and Dhillon]{Hsieh12a}
Cho-Jui Hsieh, Kai-Yang Chiang, and Inderjit~S. Dhillon.
\newblock Low rank modeling of signed networks.
\newblock In \emph{KDD}, 2012.

\bibitem[Jain et~al.(2010)Jain, Meka, and Dhillon]{Jain10a}
Patreek Jain, Reghu Meka, and Inderjit Dhillon.
\newblock Guaranteed rank minimization via singular value projection.
\newblock In \emph{NIPS}, pages 934--945, 2010.

\bibitem[Joachims and Hopcroft(2005)]{Thorsten05a}
Thorsten Joachims and John Hopcroft.
\newblock Error bounds for correlation clustering.
\newblock In \emph{Proceedings of the 22nd international conference on Machine
  learning}, ICML '05, New York, NY, USA, 2005. ACM.

\bibitem[Karp(1972)]{Karp72a}
Richard~M. Karp.
\newblock Reducibility among combinatorial problems.
\newblock In \emph{Complexity of Computer Computations (1972)}, page 85=103,
  1972.

\bibitem[Katz(1953)]{Katz53a}
Leo Katz.
\newblock A new status index derived from sociometric analysis.
\newblock \emph{Psychometrika}, 18\penalty0 (1):\penalty0 39--43, 1953.

\bibitem[Kernighan and Lin(1970)]{Kernighan70a}
Brian~W. Kernighan and Shen Lin.
\newblock An efficient heuristic procedure for partitioning graphs.
\newblock \emph{Bell system technical journal}, 49\penalty0 (1):\penalty0
  291--307, 1970.

\bibitem[Koren et~al.(2009)Koren, Bell, and Volinsky]{Koren09a}
Yehuda Koren, Robert~M. Bell, and Chris Volinsky.
\newblock Matrix factorization techniques for recommender systems.
\newblock \emph{IEEE Computer}, 42:\penalty0 30--37, 2009.

\bibitem[Kunegis et~al.(2009)Kunegis, Lommatzsch, and Bauckhage]{Kunegis09a}
J\'{e}r\^{o}me Kunegis, Andreas Lommatzsch, and Christian Bauckhage.
\newblock The slashdot zoo: Mining a social network with negative edges.
\newblock In \emph{WWW}, pages 741--750, 2009.

\bibitem[Kunegis et~al.(2010)Kunegis, Schmidt, Lommatzsch, Lerner, DeLuca, and
  Albayrak]{Kunegis10a}
J\'{e}r\^{o}me Kunegis, Stephan Schmidt, Andreas Lommatzsch, J\"{u}rgen Lerner,
  Ernesto~W. DeLuca, and Sahin Albayrak.
\newblock Spectral analysis of signed graphs for clustering, prediction and
  visualization.
\newblock In \emph{SDM}, pages 559--570, 2010.

\bibitem[Leskovec et~al.(2010{\natexlab{a}})Leskovec, Huttenlocher, and
  Kleinberg]{Leskovec10a}
Jure Leskovec, Daniel Huttenlocher, and Jon Kleinberg.
\newblock Predicting positive and negative links in online social networks.
\newblock In \emph{Proceedings of the 19th International Conference on World
  Wide Web}, pages 641--650, 2010{\natexlab{a}}.

\bibitem[Leskovec et~al.(2010{\natexlab{b}})Leskovec, Huttenlocher, and
  Kleinberg]{Leskovec10b}
Jure Leskovec, Daniel Huttenlocher, and Jon Kleinberg.
\newblock Signed networks in social media.
\newblock In \emph{CHI}, 2010{\natexlab{b}}.

\bibitem[Liben-Nowell and Kleinberg(2007)]{Nowell07a}
David Liben-Nowell and Jon Kleinberg.
\newblock The link-prediction problem for social networks.
\newblock \emph{Journal of the American Society for Information Science and
  Technology}, 58\penalty0 (7):\penalty0 1019--1031, 2007.

\bibitem[Newman(2006)]{Newman06a}
M.~E.~J. Newman.
\newblock Modularity and community structure in networks.
\newblock \emph{Proceedings of the National Academy of Sciences of USA},
  103\penalty0 (23):\penalty0 8577--8582, 2006.

\bibitem[Ng et~al.(2001)Ng, Jordan, and Weiss]{Ng01a}
Andrew~Y. Ng, Michael~I. Jordan, and Yair Weiss.
\newblock On spectral clustering: analysis and an algorithm.
\newblock In \emph{NIPS}, pages 849--856. MIT Press, 2001.

\bibitem[van~de Rijt(2011)]{Rijt11}
Arnout van~de Rijt.
\newblock The micro-macro link for the theory of structural balance.
\newblock \emph{Journal of Mathematical Sociology}, 35\penalty0 (1):\penalty0
  94--113, 2011.

\bibitem[Yang et~al.(2007)Yang, Cheung, and Liu]{Yang07a}
Bo~Yang, William Cheung, and Jiming Liu.
\newblock Community mining from signed social networks.
\newblock \emph{IEEE Trans. on Knowledge and Data Engineering}, 19\penalty0
  (10):\penalty0 1333--1348, 2007.

\end{thebibliography}
\bibliographystyle{plainnat}
\end{document}